\newtheorem{theorem}{Theorem}
\newtheorem*{theo*}{Theorem}
\newtheorem{corollary}[theorem]{Corollary}
\newtheorem{lemma}[theorem]{Lemma}
\newtheorem{obs}[theorem]{Observation}
\newtheorem{conjecture}[theorem]{Conjecture}
\newtheorem{claim}{Claim}[theorem]
\DeclareMathOperator{\unvd}{unvd}
\DeclareMathOperator{\hit}{hit}
\DeclareMathOperator{\spc}{spc}
\newcommand{\vstub}{n + 36k^2 - 140 k+ 124}
\newcommand{\val}{n+ 144k^2 - 280k + 124}
\newcommand{\ra}{\rightarrow}
\newenvironment{subproof}{\par\noindent {\it Subproof}.\ }{\hfill$\lozenge$\par\vspace{11pt}}
\title{On the unavoidability of oriented trees}
\author[1]{Fran\c{c}ois Dross}
\author[2]{Fr\'ed\'eric Havet}
\affil[1]{CNRS, Universit\'e C\^ote d'Azur, Inria, I3S, France}
\affil[2]{Universit\'e C\^ote d'Azur, CNRS, Inria, I3S, France}
\begin{document}

\maketitle

\begin{abstract}
A digraph is {\it $n$-unavoidable} if it is contained in every tournament of order $n$.
We first prove that every arborescence of order $n$ with $k$ leaves is $(n+k-1)$-unavoidable.
We then prove that every oriented tree of order $n$ ($n\geq 2$) with $k$ leaves is $(\frac{3}{2}n+\frac{3}{2}k -2)$-unavoidable and $(\frac{9}{2}n -\frac{5}{2}k -\frac{9}{2})$-unavoidable, and thus $(\frac{21}{8} n- \frac{47}{16})$-unavoidable.
Finally, we prove that every oriented tree of order $n$ with $k$ leaves is $(\val)$-unavoidable.
\end{abstract}

\section{Introduction}

A {\bf tournament} is an orientation of a complete graph.
A digraph is {\bf $n$-unavoidable} if it is contained (as a subdigraph) in every tournament of order $n$. 
The {\bf unavoidability} of a digraph $D$, denoted by $\unvd(D)$, is the minimum integer $n$ such that $D$ is $n$-unavoidable. 
It is well-known that the transitive tournament of order $n$ is $2^{n-1}$-unavoidable and thus every acyclic digraph of order $n$ is $2^{n-1}$-unavoidable.
However, for acyclic digraphs with few arcs better bounds are expected. 
Special attention has been devoted to {\bf oriented paths} and {\bf oriented trees}, which are orientations of paths and trees respectively.

It started with R\'edei's Theorem~\cite{Rede34} which states that the unavoidabilty of $\vec{P}_n$, the directed path on $n$ vertices, is $n$:  $\unvd(\vec{P}_n)=n$.
In 1971, Gr\"unbaum studied the {\bf antidirected paths} that are oriented paths in which every vertex has either in-degree $0$ or out-degree $0$ (in other words, two consecutive edges are oriented in opposite ways). He proved~\cite{Gru71} that the unavoidability of an antidirected path of order $n$ is $n$ unless $n=3$ (in which case it is not contained in the directed $3$-cycle $\vec{C}_3$) or $n=5$ (in which case it is not contained in the regular tournament of order $5$) or $n=7$ (in which case it is not contained in the Paley tournament of order $7$).   
The same year,  Rosenfeld~\cite{Ros71} gave an easier  proof
and conjectured  that  there is a smallest integer $N_P>7$ such that $\unvd(P) = |P|$ for every oriented path of order at least $N_P$.   
The condition $N_P>7$ results from Gr\"unbaum's counterexamples. 
Several papers gave partial answers to this conjecture \cite{AlRo81,For73,STR80} until Rosenfeld's conjecture was verified by Thomason, who proved 
in~\cite{Tho86} that $N_P$ exists and is less than $2^{128}$. Finally, Havet and Thomass\'e~\cite{HaTh00}, showed that $\unvd(P)=|P|$ for every oriented path $P$ except the antidirected paths of order $3$, $5$, and $7$.

Regarding oriented trees, Sumner (see~\cite{ReWo83}) made the following celebrated conjecture.
\begin{conjecture}\label{conj:Sumner}
 Every oriented tree of order $n>1$ is $(2n-2)$-unavoidable.
 \end{conjecture}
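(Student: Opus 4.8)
\medskip
\noindent\textbf{A plan for attacking Conjecture~\ref{conj:Sumner}.}
The plan is to prove $\unvd(T)\le 2n-2$ by induction on $n=|V(T)|$, with \emph{median orders} of tournaments as the main engine. Recall that a median order of a tournament $R$ on $N$ vertices is a linear order $(v_1,\dots,v_N)$ of $V(R)$ maximising the number of arcs $v_iv_j$ with $i<j$; restricting it to any interval again gives a median order, and hence it has the local feedback property that in each interval $v_i,\dots,v_j$ the vertex $v_i$ dominates at least half of $\{v_{i+1},\dots,v_j\}$ while $v_j$ is dominated by at least half of $\{v_i,\dots,v_{j-1}\}$. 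In particular consecutive vertices form forward arcs — which already yields R\'edei's theorem — and the first vertex of any suffix dominates at least half of it.

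For the inductive step I would pick a leaf $\ell$ of $T$ with neighbour $t$, set $T'=T-\ell$ (of order $n-1$), and take an arbitrary tournament $R$ of order $2n-2$. By induction $T'$ embeds into every tournament of order $2n-4$, so the task is to find a $(2n-4)$-subset $W\subseteq V(R)$ together with an embedding $\varphi\colon T'\to R[W]$ such that, with $u=\varphi(t)$, one of the two vertices of $V(R)\setminus W$ is an out-neighbour of $u$ (when $\ell$ succeeds $t$ in $T$) or an in-neighbour of $u$ (when $\ell$ precedes $t$); then $\ell$ can be attached. The natural attempt is to take $W$ to be a suitable ``middle'' window of a median order of $R$, or to delete one well-chosen vertex, and to steer the inductive embedding so that $u$ lands in a position with guaranteed out- and in-neighbours outside $W$.

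The obstacle — and it is precisely why Sumner's conjecture is still open in general — is that deleting a single leaf frees only two vertices, far too little slack to control where the image of $t$ falls against an adversarial $R$, while a crude induction loses a constant factor at every orientation switch of $T$ rather than nothing. The known ways around this are to delete not one leaf but a whole pendant star or a long bare directed path and to argue about the position of its attachment vertex, or to drop leaf-by-leaf induction entirely in favour of a single greedy embedding of $T$ along a median order of $R$ that keeps enough top-ranked vertices free throughout. After a careful case analysis on the structure of $T$ near its leaves — isolating, say, a vertex all but one of whose neighbours are leaves, or a low-degree vertex on a long bare path — this route gives the unconditional linear bounds $\tfrac32(n+k)-2$, $\tfrac92 n-\tfrac52 k-\tfrac92$, and hence $\tfrac{21}{8}n-\tfrac{47}{16}$ proved in this paper, and, pushed much further with absorption and the regularity method (as by K\"uhn, Mycroft and Osthus), the exact bound for all sufficiently large $n$. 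That $2n-2$ would be best possible is shown by the regular tournament on $2n-3$ vertices, every vertex of which has out- and in-degree exactly $n-2$ and which therefore contains neither the out-star nor the in-star with $n-1$ leaves; so any proof must be essentially tight at the stars, and I expect this structural case analysis — not any single identity — to be the real work, with closing the last gap to $2n-2$ for \emph{every} $n$ by a purely combinatorial argument the genuinely hard part.
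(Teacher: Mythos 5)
You have correctly recognised that the statement you were asked about is a \emph{conjecture}, not a theorem: the paper does not prove Sumner's conjecture, and neither do you. There is therefore no proof in the paper against which to check yours, and your text is honest in presenting only a plan and an explanation of why the obvious inductions fail.

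That said, your strategic summary is accurate and aligned with how the field (and this paper) actually proceeds. The median-order machinery you recall, including property (M2) and the restriction-to-intervals property, is exactly Lemma~\ref{lem:median} here, and the ``attach a leaf at the end of a suffix with slack'' idea is precisely Lemma~\ref{lem:up-embed}. You also correctly diagnose the bottleneck: deleting a single leaf only frees two vertices, which is not enough slack to steer where the attachment vertex lands, and naive induction bleeds a constant at each switch of orientation. The paper's response is indeed to avoid leaf-by-leaf induction: Section~\ref{sec:arbo} embeds arborescences greedily along a local median order and uses a charging argument (Claims~\ref{claim:leaves} and~\ref{claim:inclusion}) to bound the number of failed vertices by the number of out-leaves; Section~\ref{sec:few} reduces a general tree to an arborescence by an ``equivalent arborescence'' trick on the components of the downward forest; Section~\ref{sec:many} handles trees with many leaves by peeling off in- and out-leaf clusters and embedding the heart with Lemma~\ref{lem:islands}; and Section~\ref{sec:veryfew} reduces to stubs via Thomason's path theorems. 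These are more specific decompositions than the pendant-star or bare-path deletions you suggest, but they are of the same spirit. Your remark that the regular tournament on $2n-3$ vertices excludes the out-star and in-star, so that any proof must be tight at stars, is exactly the standard tightness observation quoted in the introduction. One small correction: you write that the absorption and regularity approach of K\"uhn, Mycroft and Osthus gives ``the exact bound for all sufficiently large $n$'', which is right, but you should be explicit that closing the gap for \emph{all} $n$ remains open; nothing in your sketch, or in this paper, resolves that.

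In short: no gap to report, because you did not claim a proof; the conjecture is open, the paper only gives the bounds $\tfrac32 n+\tfrac32 k-2$, $\tfrac92 n-\tfrac52 k-\tfrac92$, $\tfrac{21}{8}n-\tfrac{47}{16}$, and $n+144k^2-280k+124$, and your account of why the obvious routes fail and which tools are available is essentially correct.
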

The first linear bound was given by 
H\"aggkvist and Thomason~\cite{HaTh91}. Following improvements of Havet~\cite{Hav02} and Havet and Thomass\'e~\cite{HaTh00b}, El Sahili \cite{ElS04c} used the notion of median order, first used as a tool for Sumner's conjecture in \cite{HaTh00b}, and proved that every oriented tree of order $n$ ($n\geq 2$) is $(3n-3)$-unavoidable.
Recently, K\"uhn, Mycroft and Osthus~\cite{KMO11b} proved that
Sumner's conjecture is true for all sufficiently large $n$.
Their complicated proof makes use of the directed version of the Regularity Lemma and of results and ideas from a recent paper by the same authors~\cite{KMO11a}, in which an approximate version of the conjecture was proved.
In~\cite{HaTh00b}, Havet and Thomass\'e also proved that Sumner's conjecture holds for arborescences. An {\bf in-arborescence}, (resp.~{\bf out-arborescence}) is an oriented tree in which all arcs are oriented towards (resp.~away from) a fixed vertex called the root.
An {\bf arborescence} is either an in-arborescence or an out-arborescence.

If true, Sumner's conjecture would be tight. Indeed, the {\bf out-star} $S^+_n$, which is the digraph on $n$ vertices consisting of a vertex dominating the $n-1$ others, is not contained in the regular tournaments of order $2n-3$. However, such digraphs have many leaves.
Therefore Havet and Thomass\'e (see \cite{Hav03}) made the following stronger conjecture than Sumner's one.

\begin{conjecture}\label{conj:n+k-1}
 Every oriented tree of order $n$ with $k$ leaves is $(n+k-1)$-unavoidable.
 \end{conjecture}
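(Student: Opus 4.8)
The statement to establish is the arborescence case of Conjecture~\ref{conj:n+k-1}: every arborescence of order $n$ with $k$ leaves is $(n+k-1)$-unavoidable. Reversing every arc turns an out-arborescence into an in-arborescence and a tournament into a tournament, preserves order and leaf number, and satisfies $A\subseteq T$ iff $\overleftarrow{A}\subseteq\overleftarrow{T}$; so it suffices to show that every out-arborescence $A$ with root $r$, order $n$ and $k$ leaves embeds in every tournament $T$ on $N\ge n+k-1$ vertices. Since $n+k-1\le 2n-2$ for $n\ge 2$, this refines Sumner's conjecture (Conjecture~\ref{conj:Sumner}) for arborescences, with equality attained by the out-star $S^+_n$ ($k=n-1$), which is the tight instance driving the estimates below. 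Passing to a prefix of a median order, we may assume $N=n+k-1$.

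I would proceed by induction on $n$, proving the strengthening: for a prescribed \emph{median order} $\sigma=v_1v_2\cdots v_N$ of $T$ (an ordering of $V(T)$ with a maximum number of forward arcs), $A$ embeds in $T$ with $f(r)=v_1$. I rely on two standard properties of median orders: (i) the restriction of $\sigma$ to any interval $\{v_i,\dots,v_j\}$ is a median order of the induced subtournament; and (ii) the feedback property, namely that the first vertex of any interval dominates at least half of the other vertices of that interval; in particular $v_1\to v_2$, and $v_1$ dominates at least $\lceil(j-1)/2\rceil$ of $\{v_2,\dots,v_j\}$ for every $j$, so $D:=N^+(v_1)$ meets every prefix in at least half of it. The cases $n\le 2$ follow from $v_1\to v_2$.

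For the inductive step, let $r_1,\dots,r_d$ be the children of $r$, with hanging subtrees $A_1,\dots,A_d$. If $d=1$, delete $r$: then $A-r$ is an out-arborescence rooted at $r_1$ of order $n-1$ with some number $k_1\le k$ of leaves, and by (i) the sequence $v_2\cdots v_N$ is a median order of $T-v_1$; since $N-1\ge(n-1)+k_1-1$, the induction hypothesis embeds $A-r$ in $T-v_1$ with $f(r_1)=v_2$, and $v_1\to v_2$ completes an embedding of $A$ with $f(r)=v_1$. If $d\ge 2$, then $r$ is not a leaf; let $I$ index the subtrees $A_i$ with at least two vertices (orders $n_i$, with $k_i\ge 2$ leaves) and let $C$ index the one-vertex subtrees, whose vertices are exactly the leaves of $A$ adjacent to $r$; a short count gives $n+k-1=2|C|+\sum_{i\in I}(n_i+k_i)$. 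Put $f(r)=v_1$. It now suffices to find in $\{v_2,\dots,v_N\}$ pairwise disjoint intervals $J_i$ ($i\in I$) of length $n_i+k_i-1$ together with $|C|$ further distinct vertices, such that $\min J_i\in D$ for each $i\in I$ and each reserved vertex lies in $D$: by (i) one then applies the induction hypothesis to $A_i$ inside $T[J_i]$ with the median order $\sigma|_{J_i}$, obtaining $f(r_i)=\min J_i\in D$, and one places each remaining child $r_i$ ($i\in C$) on a distinct reserved vertex, so that every arc $r\to r_i$ maps to the arc $v_1\to f(r_i)$.

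The crux, and the step I expect to be the main obstacle, is exactly this allocation: packing the intervals $J_i$ and the $|C|$ reserved vertices into $\{v_2,\dots,v_N\}$ with each interval \emph{starting in} $D$ and each reserved vertex \emph{lying in} $D$. A count shows that the $J_i$, the reserved vertices and $v_1$ consume $n+k-d$ of the $n+k-1$ vertices, leaving only $d-1$ spare, while $|D|$ can be as small as $\lceil(N-1)/2\rceil$; for $S^+_n$ one has $I=\emptyset$, $|C|=d=n-1$, $N=2n-2$ and $|D|$ possibly equal to $n-1=|C|$, so the lower bound on $|D|$ is consumed exactly. I would attempt the packing by ordering the internal subtrees by decreasing size and laying down their intervals greedily from the right end $v_N$ leftwards, at each stage invoking the feedback property on the prefix ending where the next interval would begin, so as to slide that interval just far enough left for its left endpoint to fall in $D$; the inequality $n_i\le n-2$ forced by $d\ge 2$, hence $n_i+k_i-1\le N-2$, guarantees that even the largest interval fits, and the feedback estimate controls the cumulative leftward slide. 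Making this rigorous — in particular checking that the accumulated slides never run off the left end, and that enough vertices of $D$ survive outside the intervals to host the $|C|$ leaves adjacent to $r$ — is where the value $N=n+k-1$ is spent in full; the rest is bookkeeping. The general case of Conjecture~\ref{conj:n+k-1}, for arbitrary oriented trees, is not addressed by this argument.
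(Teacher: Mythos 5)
The statement you were given is stated in the paper as an open conjecture; the paper proves it only for arborescences (Theorem~\ref{thm:arbo}, via Theorem~\ref{t_arbo}), and you explicitly restrict yourself to that case, so I compare your argument with the paper's proof of Theorem~\ref{t_arbo}. Your strategy is genuinely different from the paper's. The paper runs a single left-to-right greedy pass along a local median order, assigning the sons of each already-embedded node to the earliest free out-neighbours of its image, and then shows the number of ``failed'' vertices is at most $k-1$ by a global counting argument (active nodes, the intervals $I_i$, and (M2)). You instead induct on $n$ and try to allocate to each child subtree $A_i$ a \emph{contiguous} interval $J_i$ of length $n_i+k_i-1$ whose first vertex lies in $D=N^+(v_1)$, so that (M1) lets you recurse inside $T\langle J_i\rangle$.

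The allocation step that you flag as ``the main obstacle'' is not merely unproved; it is false, so the approach cannot be completed as written. Let $A$ be the spider consisting of a root $r$ with $d$ pendant directed out-paths of length $2$, so $n=2d+1$, $k=d$, $N=3d$, and each $J_i$ must be an interval of length $2$. Take $d\geq 3$ odd and let $T$ be the rotational tournament on $N$ vertices ($v_i\rightarrow v_j$ iff $j-i\in\{1,\dots,\frac{N-1}{2}\}$ modulo $N$) with its natural order $\sigma=(v_1,\dots,v_N)$; one checks that $\sigma$ is a median order and that $D=N^+(v_1)=\{v_2,\dots,v_{(N+1)/2}\}$ exactly. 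Every length-$2$ interval starting in $D$ is contained in $\{v_2,\dots,v_{(N+3)/2}\}$, a set of $\frac{3d+1}{2}$ vertices, which cannot contain $d$ pairwise disjoint such intervals since $\frac{3d+1}{2}<2d$. No sliding can repair this: (M2) only forces $D$ to be dense in every prefix, and here $D$ is entirely concentrated at the left end. (For $d=3$, the paper's greedy embeds this spider as $r\mapsto v_1$, $A_1\mapsto\{v_2,v_5\}$, $A_2\mapsto\{v_3,v_6\}$, $A_3\mapsto\{v_4,v_7\}$ --- the subtrees are \emph{interleaved}, not confined to intervals, and that interleaving is exactly what your scheme forbids and what makes the paper's failure-counting, rather than a per-subtree recursion, necessary.) If you drop contiguity of the $J_i$ then (M1) no longer applies and the recursion collapses; so the inductive interval-packing framework itself, not just its bookkeeping, is the obstruction. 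A secondary issue: for the recursion the relevant parameter is the number of \emph{out-leaves} of each $A_i$ (as in Theorem~\ref{t_arbo}), not its total leaf count, otherwise your identity $n+k-1=2|C|+\sum_{i\in I}(n_i+k_i)$ overcounts the roots $r_i$.
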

 
As an evidence to this conjecture,  H\"aggkvist and Thomason~\cite{HaTh91} proved the existence of a minimal
function $g(k)\leq 2^{512k^3}$ such that every tree of order $n$ with $k$
leaves is $(n+g(k))$-unavoidable.
Trees with two leaves are paths, so the above-mentioned results imply that Conjecture~\ref{conj:n+k-1} is true when $k=2$ and
Ceroi and Havet~\cite{Ceha04} showed that it holds for $k=3$.
Havet~\cite{Hav03} also showed Conjecture~\ref{conj:n+k-1} for a large class of trees.

\subsection{Our results}

In Section~\ref{sec:arbo}, we prove Conjecture~\ref{conj:n+k-1} for arborescences.
\begin{theorem}\label{thm:arbo}
 Every arborescence of order $n$ with $k$ leaves is $(n+k-1)$-unavoidable.
 \end{theorem}

Using this result, in Section~\ref{sec:few}, we derive the following.
\begin{theorem} \label{t_tree2}
Every oriented tree of order $n$ with  $k$ leaves is $(\frac32 n+ \frac32 k-2)$-unavoidable.
\end{theorem}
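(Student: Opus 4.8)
The plan is to reduce Theorem~\ref{t_tree2} to the arborescence case (Theorem~\ref{thm:arbo}) by decomposing an arbitrary oriented tree $T$ of order $n$ with $k$ leaves into two arborescences and finding each in a suitable part of a large tournament. First I would root $T$ at a well-chosen vertex $r$; this orients a ``depth'' on $T$, and every edge is either consistent with the arborescence orientation out of $r$ or against it. The idea is to cut $T$ at a carefully selected edge or vertex so that $T$ splits into an out-arborescence $T_1$ (containing $r$) and the remaining forest, and then to iterate, but more efficiently to pick $r$ so that $T$ is the union of an in-arborescence and an out-arborescence sharing only the root, i.e. realize $T$ as two ``opposite'' arborescences glued at $r$. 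Call them $T^-$ (all edges toward $r$) and $T^+$ (all edges away from $r$), with $|T^-| = n_1$, $|T^+| = n_2$, $n_1 + n_2 = n+1$, and with $k_1$, $k_2$ leaves respectively (counting $r$ appropriately), so $k_1 + k_2 \le k+1$ or thereabouts.

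The second step is the embedding. Given a tournament $R$ of order $N = \frac32 n + \frac32 k - 2$, I would use a median order (as in El~Sahili~\cite{ElS04c} and Havet--Thomass\'e~\cite{HaTh00b}) $v_1, \dots, v_N$ of $R$. A median order has the key ``feedback'' property that each $v_i$ dominates at least half of the vertices after it and is dominated by at least half of the vertices before it, and more refined local properties at interval endpoints. The plan is to choose a pivot vertex $v_i$ far enough along the order, embed $r \mapsto v_i$, then embed the out-arborescence $T^+$ greedily into the tournament induced on a suffix $\{v_i, \dots, v_N\}$ using Theorem~\ref{thm:arbo} — this costs at most $n_2 + k_2 - 1$ vertices — and dually embed the in-arborescence $T^-$ into the prefix $\{v_1, \dots, v_i\}$, costing at most $n_1 + k_1 - 1$ vertices. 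Because the two pieces only share $r = v_i$, the two embeddings are vertex-disjoint except at $v_i$, and the median-order property guarantees that we can actually apply Theorem~\ref{thm:arbo} to the relevant sub-tournaments (an out-arborescence lives happily in any tournament once we have enough room, and likewise for an in-arborescence by reversing all arcs). Summing, the total number of vertices needed is at most $(n_1 + k_1 - 1) + (n_2 + k_2 - 1) - 1 \le (n + 1) + (k + 1) - 3 = n + k - 1$ in the cleanest case; the factor $\frac32$ in the statement should come from the slack we must leave between the prefix and suffix so that the pivot $v_i$ genuinely dominates enough of the suffix and is dominated by enough of the prefix — this is exactly where a median order buys roughly a factor of two on each side, and balancing the two sides yields the $\frac32(n+k)$-type bound.

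The main obstacle I anticipate is making the two embeddings genuinely independent while keeping the vertex budget down to $\frac32 n + \frac32 k - 2$. Naively, to be sure that $r = v_i$ dominates enough vertices of the suffix to host $T^+$ we would want $v_i$ early, but to be sure it is dominated by enough of the prefix to host $T^-$ we would want it late; reconciling these forces a loss. The median-order feedback property (each vertex beats at least half the later ones) is what lets us get away with only a factor of roughly $3/2$ rather than $2$: if the suffix starting at $v_i$ has length $\ell$, then $v_i$ beats at least $\ell/2$ of its members, so an out-arborescence of order $n_2$ fits provided $\ell \ge 2(n_2 + k_2 - 1)$ or so, and symmetrically for the prefix. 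Choosing $i$ to balance these two constraints, with $n_1 + n_2 = n + 1$ and $k_1 + k_2 \le k+1$, gives $N \approx \tfrac32(n+k)$. A secondary technical point is handling the leaf count carefully: the root $r$ may or may not be a leaf of $T^-$ or $T^+$, and an edge incident to $r$ in $T$ may create a spurious leaf when we split, so the bookkeeping $k_1 + k_2 \le k + O(1)$ needs a short case analysis on the degree of $r$ in $T$. Finally, one must verify the small-order base cases (e.g. $n = 2$) directly, since the median-order argument needs a little room to start.
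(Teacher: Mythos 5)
Your proposal rests on a decomposition that does not exist for general oriented trees: you propose to choose a root $r$ so that the tree is the union of an in-arborescence and an out-arborescence meeting only at $r$, i.e.\ so that it is a bi-arborescence. Most oriented trees are not bi-arborescences (the antidirected path on four or more vertices already is not, and the paper explicitly singles out bi-arborescences as a special case handled by Corollary~\ref{cor:bi-arbo}). Rooting at any $r$ yields an \emph{upward forest} and a \emph{downward forest}, each a forest of several arborescences whose components are interleaved along the tree; you cannot embed the upward part in a suffix and the downward part in a prefix of a median order, because some downward components are rooted at internal nodes of the upward part and therefore must be placed \emph{after} the position of those internal nodes, not before the pivot. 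This is the structural obstacle the proof actually has to overcome, and your plan does not address it.

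The paper's argument is quite different in this crucial respect. After choosing a root $r$ of in-degree $0$ minimizing $\gamma_r^{\downarrow}$, it invokes Lemma~\ref{lem:t_tree}: each downward component $C_i$ is temporarily replaced by an out-star hanging from the father $f_i$ of $C_i$'s root, producing the \emph{equivalent arborescence} $A'$, which is a genuine out-arborescence of controlled size and leaf count. The greedy procedure of Theorem~\ref{t_arbo} is run on $A'$, exploiting the observation that the ordering of sons of $f_i$ can be committed on the fly: once the set of vertices that will host the children of $f_i$ is determined, one re-embeds the original in-arborescence $C_i$ inside that vertex set by a second, dual application of Theorem~\ref{t_arbo}. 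The overhead is exactly $\gamma_r^{\downarrow}=\sum_{C}(|V(C)|+|L^-(C)|-2)$, and the paper proves that the best root satisfies $\gamma_r^{\downarrow}\le\frac12(n+k)-1$, which is where the factor $\frac32$ comes from. Your explanation that the $\frac32$ arises from slack needed around a pivot vertex in the median order is therefore also off the mark; the embedding is a single greedy pass, and the $\frac32$ is an accounting of the equivalent-arborescence blow-up, not of median-order overhead. If you want to salvage a ``split at $r$'' approach, you would need to handle the nested structure of the upward/downward components, which is precisely what the equivalent-arborescence device does.
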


This result gives us a good bound for trees with few leaves.
In particular, it implies Sumner's conjecture for trees in which at most one third of the vertices are leaves.

\begin{corollary}
Every oriented tree of order $n$ with at most $\frac n3$ leaves is $(2n-2)$-unavoidable.
\end{corollary}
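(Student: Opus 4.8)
The plan is to derive the corollary directly from Theorem~\ref{t_tree2}, using only the elementary fact that unavoidability is monotone in the order. First I would record this monotonicity: if a digraph $D$ is $m$-unavoidable and $m' \ge m$, then $D$ is $m'$-unavoidable, since any tournament on $m'$ vertices has an induced sub-tournament on $m$ vertices, which already contains $D$. Consequently, to prove the corollary it is enough to check that, under the hypothesis $k \le \tfrac{n}{3}$ on the number $k$ of leaves, the bound furnished by Theorem~\ref{t_tree2} does not exceed $2n-2$.

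Second, I would perform the (one-line) estimate. By Theorem~\ref{t_tree2}, an oriented tree $T$ of order $n$ with $k$ leaves is $\left(\tfrac32 n + \tfrac32 k - 2\right)$-unavoidable. Substituting $k \le \tfrac{n}{3}$ gives
\[
\tfrac32 n + \tfrac32 k - 2 \;\le\; \tfrac32 n + \tfrac32\cdot\tfrac{n}{3} - 2 \;=\; \tfrac32 n + \tfrac12 n - 2 \;=\; 2n-2 .
\]
Hence $T$ is $\left(\tfrac32 n + \tfrac32 k - 2\right)$-unavoidable with $\tfrac32 n + \tfrac32 k - 2 \le 2n-2$, and by the monotonicity observation $T$ is $(2n-2)$-unavoidable. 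This settles Sumner's conjecture for this class of trees.

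There is essentially no obstacle: all the work sits inside Theorem~\ref{t_tree2}, and the corollary merely reflects that its linear-in-$k$ bound reaches Sumner's target $2n-2$ precisely at the threshold $k = n/3$. The only point worth stating explicitly is the monotonicity of unavoidability, which is otherwise used implicitly throughout.
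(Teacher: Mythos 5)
Your proof is correct and is exactly the intended (and essentially only) argument: substitute $k \le n/3$ into the bound of Theorem~\ref{t_tree2} to get $\tfrac32 n + \tfrac32 k - 2 \le 2n - 2$, and invoke monotonicity of unavoidability. The paper states this corollary without proof precisely because it follows by this one-line estimate.
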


Then, in Section~\ref{sec:many}, we give the following upper bound on the unvoidability of trees, which is good for trees with many leaves.

\begin{theorem} \label{t_leaves}
Every oriented tree with $n\ge 3$ vertices and $k$ leaves is $(\frac92n - \frac52k - \frac92)$-unavoidable.
\end{theorem}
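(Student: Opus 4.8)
The plan is to prove Theorem~\ref{t_leaves} by induction on $n$, embedding the oriented tree $T$ into an arbitrary tournament $\Gamma$ of order $N = \lceil \frac92 n - \frac52 k - \frac92 \rceil$ one vertex (or one leaf) at a time, using a median order of $\Gamma$ to control degrees. Recall that a \emph{median order} of a tournament is an ordering $v_1 < v_2 < \dots < v_N$ of its vertices maximising the number of forward arcs; its crucial \emph{feudal} properties are that for any interval $[v_i, v_j]$, the vertex $v_i$ dominates at least half of $\{v_{i+1}, \dots, v_j\}$ and $v_j$ is dominated by at least half of $\{v_i, \dots, v_{j-1}\}$. These are exactly the properties El~Sahili and others exploit to get the $3n-3$ bound, and the point here is to be more frugal by spending only a constant-bounded ``surplus'' per leaf rather than per vertex.

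The key steps, in order, would be: (1) Pick a leaf $\ell$ of $T$ adjacent to a vertex $u$, and let $T' = T - \ell$, which has $n-1$ vertices and either $k$ or $k-1$ leaves (it has $k-1$ leaves unless $u$ becomes a new leaf, i.e. unless $u$ had degree $2$ in $T$). Actually, to make the arithmetic work I would instead strip a whole maximal ``path of non-leaves hanging off a leaf'', or more carefully choose $\ell$ so that $u$ has degree $\geq 3$, so that $T'$ always has exactly $k-1$ leaves; when no such leaf exists $T$ is essentially a spider/path-like object and should be handled by Theorem~\ref{t_tree2} or R\'edei. (2) By induction, embed $T'$ into a suitable sub-tournament $\Gamma'$ of $\Gamma$ of order $\lceil \frac92(n-1) - \frac52(k-1) - \frac92 \rceil = N - 2$, chosen to be an \emph{interval} of a median order of $\Gamma$, leaving two ``spare'' vertices. (3) Use the feudal/median-order property together with the $3n-3$-type counting to argue that the image of $u$ still has enough out-neighbours (if $\ell$ is an out-leaf) or in-neighbours (if $\ell$ is an in-leaf) among the unused vertices of $\Gamma$ to place $\ell$. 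The budget of $2$ extra vertices per removed leaf (the coefficient drops by $\frac92 - \frac52 = 2$) is what must be shown to suffice.

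The real subtlety is matching the embedding of $T'$ to a \emph{prefix or interval} of the median order so that the spare vertices sit where they are needed, and handling the case where removing leaves one at a time does not decrease $k$ — this is why one should remove a leaf together with the pendant path it sits on, or equivalently root $T$ cleverly and process a branch vertex with all its leaf-children at once. A clean way to organise this: root $T$ at a non-leaf $r$; take a branch vertex $b$ farthest from $r$, so that all children of $b$ in $T$ are leaves, say $j \geq 1$ of them with $a$ of them out-leaves and $j-a$ in-leaves; let $T'' = T - (\text{children of } b)$, which has $n - j$ vertices and $k - j + 1$ leaves (the $+1$ because $b$ becomes a leaf unless $b = r$ or $b$ has a non-leaf descendant path, which it might — one really needs $b$ to have at least one child that is a leaf and possibly handle the internal path below it too; the precise bookkeeping is the delicate part). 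Then one wants to embed $T''$ using a median order of an interval of size $N - 2j + \delta$ and insert the $j$ leaves at $b$'s image using the half-domination property, which guarantees $b$'s image dominates (resp.\ is dominated by) a linear fraction of any suffix of the order.

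I expect the main obstacle to be precisely this leaf-count accounting under deletion: a single leaf removal may not reduce $k$, so a naive induction gives the wrong coefficient, and one must batch the removal (a branch vertex with all its leaf-children, plus the pendant path beneath it) and verify that the number of vertices removed versus the number of leaves removed stays in the ratio the coefficients $\frac92$ and $\frac52$ demand; the inductive hypothesis must be stated carefully enough (perhaps with the median-order interval structure built in, or with a slightly stronger quantitative claim about how many vertices of the host tournament are ``still free'' near the image of $b$) to push through. The base cases ($n = 3$, or $T$ with no branch vertex of the required type, i.e.\ a path or a ``broom'') are disposed of by R\'edei's Theorem and Theorem~\ref{t_tree2}.
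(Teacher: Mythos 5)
Your proposal takes a fundamentally different route from the paper's, and it has genuine gaps that the paper's structural decomposition is specifically designed to avoid. The paper does \emph{not} induct by stripping leaves or branch vertices. Instead it partitions $A$ into a \emph{heart} $H$ and two \emph{leaf clusters} $S^+$ (out-leaf cluster, a forest of out-arborescences obtained by recursively absorbing nodes whose out-neighbours are all already absorbed) and $S^-$ (defined dually). It embeds $H$ in a middle window of a local median order using Lemma~\ref{lem:t_tree} with substantial slack, then greedily embeds all of $S^+$ working rightward (forward) à la Lemma~\ref{lem:up-embed}, and finally embeds all of $S^-$ working leftward (backward). The three phases never interfere because the two clusters go to opposite sides.

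The most serious gap in your sketch is the interaction between out-leaves and in-leaves. Your budget of $2$ spare vertices per removed leaf is exactly the cost of a one-sided (forward-only or backward-only) extension as in Lemma~\ref{lem:up-embed}. But an arbitrary tree has leaves of both kinds, so your induction must alternate forward and backward insertions; to certify that a partial embedding can still accept both, you would need to maintain a two-sided ($\sigma$-nice) invariant, and that is precisely what Lemma~\ref{lem:islands} shows costs roughly $4$ vertices per node, not $2$. You never explain how the forward and backward phases coexist within the $2$-per-leaf budget, which is the crux of the theorem. The paper solves this by separating $S^+$ and $S^-$ topologically so that each phase is purely one-sided.

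Your base-case fall-back is also broken. You propose handling spider/broom-type trees (no leaf attached to a vertex of degree $\geq 3$, which in fact means \emph{every} leaf is attached to a branch vertex) via Theorem~\ref{t_tree2} or R\'edei, but Theorem~\ref{t_tree2} gives $\frac32n+\frac32k-2$, which is \emph{larger} than $\frac92n-\frac52k-\frac92$ precisely when $k$ is large (the two cross at $k=\frac{6n-5}{8}$); for a star, Theorem~\ref{t_tree2} gives about $3n$ while the target is about $2n$. So for exactly the high-leaf regime this theorem is meant for, the fall-back fails. Finally, even granting the batching idea (remove a branch vertex together with all its leaf children), you only say the "precise bookkeeping is the delicate part" without resolving it, so the argument as written does not close. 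The heart/cluster decomposition, the choice of a root of $H$ minimising the $\beta_r^{\downarrow}$-type potential, and the counting via Observation~\ref{obs:bij} are the ideas that make the bound come out — none of which appears in your sketch.
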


Theorems~\ref{t_tree2} and  \ref{t_leaves} yield the best bound towards Sumner's conjecture:
\begin{corollary}
Every oriented tree of order $n \ge 2$ is $\left ( \frac{21}{8} n - \frac{47}{16}\right )$-unavoidable.
\end{corollary}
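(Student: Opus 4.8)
The plan is to play Theorems~\ref{t_tree2} and~\ref{t_leaves} off against each other. Let $T$ be an oriented tree of order $n$ with $k$ leaves. Theorem~\ref{t_tree2} bounds $\unvd(T)$ by $\frac32 n+\frac32 k-2$, which is \emph{increasing} in $k$, while Theorem~\ref{t_leaves} (valid for $n\ge 3$) bounds $\unvd(T)$ by $\frac92 n-\frac52 k-\frac92$, which is \emph{decreasing} in $k$. Hence $\unvd(T)$ is at most the minimum of the two expressions, and the worst case is when they are roughly equal, i.e.\ near $k=\frac34 n-\frac58$; evaluating either bound there gives $\frac{21}{8}n-\frac{47}{16}$, which is where the constant comes from.

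Rather than split into cases according to $k$, I would invoke the elementary fact that the minimum of two reals is at most any convex combination of them, with the weights chosen so that the $k$-terms cancel: solving $\lambda\cdot\frac32+(1-\lambda)\cdot(-\frac52)=0$ gives $\lambda=\frac58$. Thus, for $n\ge 3$,
\[
\unvd(T)\ \le\ \min\!\left(\tfrac32 n+\tfrac32 k-2,\ \tfrac92 n-\tfrac52 k-\tfrac92\right)\ \le\ \tfrac58\!\left(\tfrac32 n+\tfrac32 k-2\right)+\tfrac38\!\left(\tfrac92 n-\tfrac52 k-\tfrac92\right)\ =\ \tfrac{21}{8}n-\tfrac{47}{16},
\]
which settles the corollary for $n\ge 3$. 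It then remains to treat $n=2$, excluded from Theorem~\ref{t_leaves}: the only oriented tree on two vertices is the arc $\vec{P}_2$, which is $2$-unavoidable (by R\'edei's Theorem, or by Theorem~\ref{thm:arbo}), and $2\le \frac{21}{8}\cdot 2-\frac{47}{16}=\frac{37}{16}$, so the bound holds there as well.

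I do not anticipate a genuine obstacle here: all the substance lies in Theorems~\ref{t_tree2} and~\ref{t_leaves}, and this corollary is merely the optimal way of combining them. The only points needing a little care are picking the weights $\frac58,\frac38$ so that the dependence on $k$ disappears, and separately disposing of the small exceptional order $n=2$.
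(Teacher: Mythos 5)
Your proof is correct and takes essentially the same approach as the paper's: combine Theorems~\ref{t_tree2} and~\ref{t_leaves}, with the worst case being where the two bounds coincide (the paper finds the crossing point $k=\frac{6n-5}{8}$ and evaluates there; your convex-combination with weights $\frac58,\frac38$ is just a tidier way of saying the same thing). One small improvement on your part: you explicitly dispose of $n=2$, whereas the paper's proof tacitly applies Theorem~\ref{t_leaves}, which is stated only for $n\ge 3$, so that case needs the separate (easy) check you supply.
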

\begin{proof}
The value of $\min \left (\frac32n+\frac32k-2, \frac92n - \frac52k - \frac92 \right)$ is maximal when $\frac32n+\frac32k-2 = \frac92n - \frac52k - \frac92$, that is when $k = \frac{6n-5}{8}$. In this case $\frac32n+\frac32k-2 = \frac{21}{8} n- \frac{47}{16}$.
\end{proof}

Finally, in Section~\ref{sec:veryfew}, we dramatically decrease the upper bound on the function $g(k)$ such that every tree of order $n$ with $k$
leaves is $(n+g(k))$-unavoidable by showing the following.

\begin{theorem}\label{thm:veryfew}
Every oriented tree with $n$ nodes ($n \geq 2$) and $k$ leaves is $(\val)$-unavoidable.
\end{theorem}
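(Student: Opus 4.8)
The plan is to prove Theorem~\ref{thm:veryfew} by induction on $n$, following the general strategy of bounding the unavoidability of an oriented tree $T$ of order $n$ with $k$ leaves by the value $f(n,k) = \val$. The base cases for small $n$ are handled directly, so assume $n$ is large and let $T'$ be a tree obtained from $T$ by deleting a carefully chosen subtree. The key idea, going back to H\"aggkvist--Thomason and refined by El~Sahili and others, is to find in $T$ a vertex $v$ whose removal splits $T$ into a small ``pendant'' piece $B$ (a branch) hanging off $v$ and a large remainder $T^\star = T - B$; we want $B$ to have few leaves of $T$ and to be an arborescence, so that Theorem~\ref{thm:arbo} can be invoked to embed $B$ cheaply once the image of $v$ is fixed.

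First I would establish the structural lemma that every oriented tree $T$ with $k \geq 2$ leaves contains a leaf-edge or a small ``claw-like'' branch $B$ at some vertex $v$ such that $B$ is an out-arborescence or in-arborescence, $|B|$ is bounded by a function of $k$ (roughly $O(k)$), and $B$ contains at least two leaves of $T$ — or, alternatively, $T$ has a long bare directed path, which can be handled by R\'edei-type arguments absorbing the path into a median order. The point of demanding $B$ contain at least two leaves of $T$ is that the remaining tree $T^\star$ then has at most $k-1$ leaves (we lose at least two leaves but possibly gain one at $v$), so the induction hypothesis applies to $T^\star$ with a strictly smaller leaf count, and the coefficient $144k^2 - 280k + 124$ of the quadratic must be designed precisely so that the ``cost'' of reattaching $B$ — which is governed by how far down in the tournament we can afford to place the image of $v$ plus the $|B|-1 \leq O(k)$ extra vertices needed for the arborescence $B$ via Theorem~\ref{thm:arbo} — is absorbed by the difference $f(n,k) - f(n-|B|, k-1)$.

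Then I would carry out the embedding step: take a tournament $\Gamma$ of order $f(n,k)$, and work with a median order (or iterated median order) of $\Gamma$. Embed $T^\star$ into an initial segment of $\Gamma$ of size $f(n-|B|+1, k-1)$ or so using the induction hypothesis, but do this in a controlled way so that the image of the attachment vertex $v$ lands in a prescribed ``early'' window of the median order, leaving a large final segment $W$ of $\Gamma$ untouched. The size of $W$ is at least $f(n,k) - f(n-|B|+1, k-1)$, which by the choice of the quadratic coefficient is at least $|B| - 1 + (\text{number of leaves of }B) - 1$, the quantity guaranteed by Theorem~\ref{thm:arbo} to suffice for embedding the arborescence $B$ rooted at the already-placed image of $v$, using out-domination (resp.\ in-domination) properties of $W$ relative to that image. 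Combining the two embeddings yields $T \subseteq \Gamma$.

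The main obstacle will be the first step — pinning down the structural decomposition so that \emph{simultaneously} (i) $B$ is genuinely an arborescence (a mixed branch forces a weaker, possibly linear-in-$n$ cost), (ii) $B$ is small enough, with $|B|$ and the leaf count of $B$ both $O(k)$, that the quadratic budget covers it, and (iii) $B$ carries away at least two of the $k$ leaves so that the induction is on $k$ and not merely on $n$. Trees that are ``spiders'' or have all leaves clustered near a single high-degree vertex are the delicate case, and I expect the proof needs a finer case analysis: when no good arborescence branch exists, one instead finds either a long bare directed path (handle via R\'edei/median-order absorption with no leaf loss but $O(1)$ vertex cost) or a bounded-size subtree near the center containing many leaves, deleted all at once with a correspondingly larger but still $O(k^2)$-controlled reattachment cost — which is exactly where the quadratic term comes from. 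Getting the bookkeeping on the median order tight enough that the image of $v$ is guaranteed into the early window, while still only spending $f(n-|B|+1,k-1)$ vertices on $T^\star$, is the technical heart of the argument.
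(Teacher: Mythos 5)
Your proposal takes a genuinely different route from the paper, and unfortunately the route has gaps that the paper's structure is specifically designed to avoid. The paper does not do induction on $n$ by peeling off small pendant branches. Instead it does a one-shot global decomposition: it strips \emph{all} long bare subpaths of $A$ at once, collapsing the tree to a ``stub'' (Lemma~\ref{lem:reduc}) whose size is $O(k)$, where inner segments have at most three blocks with outer blocks of length $1$ and outer segments have length $1$; the removed segments are represented by small ``forks'' and later reconstructed using Thomason's theorems on oriented paths with prescribed origin (Theorem~\ref{thm:dep2}) or prescribed origin and terminus (Theorem~\ref{thm:2bouts}). The stub itself, because it has bounded size $O(k)$, is then embedded (Lemma~\ref{lem:stub}) by a careful non-inductive ``islands'' argument: break the stub into bounded pieces (islands) separated by directed subpaths, reserve a window of $\Theta(k)$ vertices per island in the median order, embed each island nicely inside its window via Lemma~\ref{lem:islands}, route the out-arborescence of connecting path-stems via Lemma~\ref{lem:arbos}, and jump from one window to the next with internally disjoint directed $2$-out-paths (Lemma~\ref{lem:lapin2}) while maintaining a small forbidden set $F$ of vertices used as middles of the jumps. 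The quadratic in $k$ comes from the product (number of islands) $\times$ (window size per island), both $O(k)$.

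The concrete gaps in your plan are the ones you already flag, and they are not peripheral but fatal to an induction of this shape. First, the inductive hypothesis you propose to apply to $T^\star$ gives no control whatsoever over where the attachment vertex $v$ lands; strengthening it to ``$v$ lands in a prescribed early window'' at a cost of only $n + O(k^2)$ is essentially the whole difficulty, and there is no off-the-shelf statement of that kind (Lemma~\ref{lem:islands} gives placement control but at cost $4n$, not $n+O(k^2)$, so it cannot serve as the inductive engine for the large remainder $T^\star$). Second, a general oriented tree may have no small pendant arborescence at all: the subtree hanging at any vertex near a leaf can have mixed orientations, and your acknowledged fallback (``long bare directed path handled by R\'edei-type arguments'') is too weak because the bare paths in an oriented tree are arbitrary oriented paths, not directed ones; handling them requires Thomason's Theorems~\ref{thm:dep2}--\ref{thm:2bouts} with their fork/stump-type bookkeeping, which is precisely the machinery Lemma~\ref{lem:reduc} is built around. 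Third, your leaf-count induction (each peel decreases $k$ by at least $1$) does not obviously terminate against a tree whose core is a long spider with mixed legs, and the budget difference $f(n,k)-f(n-|B|,k-1)=|B|+288k-424$ is linear in $k$, so it cannot absorb a single peel that must remove a subtree of quadratic size, which is what you would be forced into in the hard cases you describe. The paper avoids all three problems by separating the roles: the stub (bounded size, direct embedding with median-order windows) and the path segments (reattached globally by Thomason's results), rather than an inductive peel.
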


The above results rely on the notion of local median order (see below). Since a local median order can easily be constructed in polynomial time, all our proofs can be transformed into polynomial-time algorithms for finding an arborescence or an oriented tree in a tournament of the size indicated in the statement.

\section{Definitions and preliminaries}

Notation generally follows \cite{BoMu08}. The digraphs have no parallel arcs and no loops.
We denote by $[n]$ the set of integers $\{1, \dots , n\}$.

Let $D$ be a digraph.
If $(u,v)$ is an arc, we say that $u$ {\bf dominates} $v$ and write $u \rightarrow v$. For any $W \subseteq V(D)$, we denote by $D\langle W\rangle$ the subdigraph induced by $W$ in $D$. 

 Let $v$ be a vertex of $D$. The {\bf out-neighbourhood} of $v$, denoted by $N^+_D(v)$, is the set of vertices $w$ such that $v \rightarrow w$. 
 The {\bf in-neighbourhood} of $v$, denoted by $N^-_D(v)$, is the set of vertices $w$ such that $w \rightarrow v$. 
 The {\bf out-degree} $d^+_D(x)$  (resp.~the {\bf in-degree} $d^-_D(x)$) is $|N^+_D(v)|$ (resp.~$|N^-_D(v)|$).

\medskip

Let $A$ be an oriented tree. The {\bf leaves} of $A$ are the vertices adjacent to (at most) one vertex in $D$. 
There are two kinds of leaves:  {\bf in-leaves} which have  out-degree $1$ and  in-degree $0$ and  {\bf out-leaves} which have out-degree $0$ and in-degree $1$. 
 The set of leaves (resp.~in-leaves, out-leaves) of $A$ is denoted by $L(A)$ (resp.~$L^-(A)$, $L^+(A)$). Trivially, $L(A) = L^+(A) \cup L^-(A)$.

A {\bf rooted tree} is an oriented tree with a specified vertex called the {\bf root}. If $A$ is a tree and $r$ a vertex of $A$, we denote by $(A,r)$ the tree $A$ rooted at $r$.
Let $A$ be a rooted tree with root $r$.
The {\bf father} of a node $v$ in $V(A)\setminus \{r\}$ is the node adjacent to $v$ in the unique path from $r$ to $v$ in $A$.
If $u$ is the father of $v$, then $v$ is a {\bf son} of $u$.
If $w$ is on the path from $r$ to $v$ in $A$, we say that $w$ is an {\bf ancestor} of $v$ and that $v$ is a {\bf descendant} of $w$.

For sake of clarity, the vertices of a tree are called {\bf nodes}.

\medskip

Let $\sigma=(v_1,v_2, \ldots, v_n)$ be an ordering of the vertices of $D$. An arc $v_iv_j$ is {\bf forward} (according to $\sigma$) if $i<j$ and {\bf backward} (according to $\sigma$) if $j<i$.
A {\bf median order} of $D$ is an ordering of the vertices of $D$ with the maximum number of forward arcs, or equivalently the minimum number of backward arcs.
In other words, a median order is an ordering of the vertices such that the set of backward arcs is a minimum feedback arc set.
Let us note basic properties of median orders of
tournaments whose proofs are left to the reader.

\begin{lemma}\label{lem:median}
 Let $T$ be a tournament and $(v_1,v_2, \ldots, v_n)$ a
median order of $T$. Then, for any two indices $i,j$ with $1 \leq i <
j \leq n$:
\medskip
\begin{enumerate}
\item[\rm (M1)] $(v_i,v_{i+1},\ldots,v_j)$ is a median order of the
  induced subtournament $T\langle \{v_i,v_{i+1},\ldots,v_j\}\rangle$.\\
\item[\rm (M2)] vertex $v_i$ dominates at least half of the vertices
  $v_{i+1},v_{i+2},\ldots,v_j$, and vertex $v_j$ is dominated by at least half of the vertices $v_i,v_{i+1},\ldots,v_{j-1}$.  In particular, each vertex $v_i$, $1 \leq i <n$, dominates its successor $v_{i+1}$. 
\end{enumerate}
\end{lemma}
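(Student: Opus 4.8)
The plan is to establish both (M1) and (M2) by elementary exchange arguments that use only the defining property of a median order, namely that it minimizes the number of backward arcs. Throughout, I will compare $\sigma=(v_1,\dots,v_n)$ with a modified ordering obtained by moving or permuting a few vertices, and track exactly which arcs change their forward/backward status.

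For (M1), fix $i<j$ and set $B=\{v_i,v_{i+1},\dots,v_j\}$. I would consider any ordering $\sigma'$ of $V(T)$ obtained from $\sigma$ by permuting only the elements of $B$ among the positions $i,i+1,\dots,j$, leaving every other vertex fixed. The key observation is that an arc $uw$ can change its status between $\sigma$ and $\sigma'$ only if both $u$ and $w$ lie in $B$: if at most one endpoint is in $B$, the relative order of the two endpoints is unchanged. Consequently, the number of backward arcs of $\sigma'$ equals that of $\sigma$ minus the number of backward arcs of $\sigma$ inside $T\langle B\rangle$ plus the number of backward arcs of $\sigma'$ inside $T\langle B\rangle$. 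If $(v_i,\dots,v_j)$ were not a median order of $T\langle B\rangle$, we could pick $\sigma'$ whose restriction to $B$ has strictly fewer backward arcs, and this would strictly decrease the total number of backward arcs, contradicting the minimality of $\sigma$. Hence (M1) holds.

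For (M2), by (M1) it suffices to treat the interval itself as the whole tournament, i.e.\ to prove the statement for $i=1$, $j=n$ after replacing $T$ by $T\langle\{v_i,\dots,v_j\}\rangle$. Write $d^+=|N^+(v_1)\cap\{v_2,\dots,v_n\}|$, so that $d^-:=|N^-(v_1)\cap\{v_2,\dots,v_n\}|=n-1-d^+$ since $T$ is a tournament. Move $v_1$ to the last position: $\sigma'=(v_2,\dots,v_n,v_1)$. The only arcs whose status changes are those incident with $v_1$; the $d^+$ arcs out of $v_1$ turn from forward to backward and the $d^-$ arcs into $v_1$ turn from backward to forward, so $\sigma'$ has exactly $d^+-d^-$ more backward arcs than $\sigma$. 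Minimality of $\sigma$ forces $d^+-d^-\ge 0$, i.e.\ $d^+\ge (n-1)/2$, which is the first half of (M2); the second half follows symmetrically by moving $v_n$ to the first position. Applying the first half with $j=i+1$ shows $v_i$ dominates at least half of the single vertex $v_{i+1}$, and since that count is an integer it equals $1$, giving $v_i\rightarrow v_{i+1}$.

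There is no real obstacle here: the arguments are routine and the only point needing a little care is the bookkeeping of which arcs flip orientation under each local modification, together with the remark that invoking (M1) to pass to an induced subtournament is legitimate precisely because (M1) certifies that the chosen interval is again a median order.
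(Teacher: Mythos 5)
Your proof is correct, and since the paper explicitly leaves the proof of this lemma to the reader, your exchange arguments (permuting within an interval for (M1), moving an endpoint past the interval for (M2)) are exactly the standard ones the authors intend. The bookkeeping of which arcs flip status is handled properly in both parts, and the derivation of $v_i\rightarrow v_{i+1}$ from the half-domination bound with $j=i+1$ is fine.
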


A {\bf local median order} is an ordering of the vertices of $D$ that satisfies property (M2).

Let $\sigma=(v_1, \ldots, v_m)$ be a local median order of a tournament $T$.
Let $\phi$ be an embedding  of a tree $A$ in $T$. It is {\bf $\sigma$-forward} if for all terminal interval $I = \{v_i,\ldots, v_m\}$, 
$|\phi(A) \cap I| < \frac12|I| + 1$; it is {\bf $\sigma$-backward} if for all initial interval $I = \{v_1,\ldots, v_i\}$, 
$|\phi(A) \cap I| < \frac12|I| + 1$; it is {\bf $\sigma$-nice} if it is is both a $\sigma$-forward and a $\sigma$-backward.
For all $i$ and $j$ in $\{1,\dots,m\}$ with $i<j$ and $\sigma' = (v_i,\ldots,v_j)$, if $\phi'$ is a $\sigma'$-forward, $\sigma'$-backward or $\sigma'$-nice embedding of a tree $A'$ into $T' = T\langle\{v_i,\ldots,v_j\}\rangle$, we also call it respectively a $\sigma$-forward, $\sigma$-backward, or $\sigma$-nice embedding of $A'$ into $T'$.

Havet and Thomass\'e proved the following lemma which, with an easy induction, implies that every out-arborescence of order $n$ is $(2n-2)$-unavoidable.
\begin{lemma}[Havet and Thomass\'e~\cite{HaTh00b}]\label{lem:up-embed}
Let $A$ be a tree with an out-leaf $a$. Let $T$ be a tournament and let $\sigma=(v_1, \dots , v_p)$ be a local median order of $T$.
Every $\sigma$-forward embedding of $A-a$ in $T-\{v_{p-1}, v_p\}$ can be extended to a $\sigma$-forward embedding of $A$ in $T$.
\end{lemma}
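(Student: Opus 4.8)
\emph{Proof plan.} The idea is a single greedy extension: we embed the out-leaf $a$ onto a vertex lying as far to the right in $\sigma$ as possible, using the two facts the hypotheses provide --- that $\phi$ uses neither $v_{p-1}$ nor $v_p$, and that, by $\sigma$-forwardness, $\phi$ occupies strictly fewer than half of the vertices of every terminal interval. Let $b$ be the unique neighbour of $a$ in $A$; since $a$ is an out-leaf, $b \to a$ in $A$. Fix a $\sigma$-forward embedding $\phi$ of $A-a$ into $T' := T-\{v_{p-1},v_p\} = T\langle\{v_1,\dots,v_{p-2}\}\rangle$. It suffices to find a vertex $w \in V(T) \setminus \phi(V(A-a))$ with $\phi(b) \to w$ in $T$: the map $\phi'$ obtained from $\phi$ by setting $\phi'(a) = w$ is then an embedding of $A$ in $T$ (the only arc of $A$ meeting $a$ is $ba$, realised by $\phi(b) \to w$), and it remains only to check that $\phi'$ is $\sigma$-forward.

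\emph{Finding $w$.} Write $\phi(b) = v_j$. Since $\phi$ avoids $v_{p-1}$ and $v_p$, we have $j \le p-2$, so $S := \{v_{j+1},\dots,v_p\}$ has size $p-j \ge 2$. By property (M2), $v_j$ dominates at least $\lceil(p-j)/2\rceil$ vertices of $S$. On the other hand, as $v_{p-1},v_p \notin \phi(V(A-a))$, the vertices of $S$ in the image of $\phi$ are precisely those in the terminal interval $\{v_{j+1},\dots,v_{p-2}\}$ of $T'$ (possibly empty), whose number is, by $\sigma$-forwardness of $\phi$, strictly less than $\frac12(p-2-j)+1 = \frac12(p-j)$, hence at most $\lceil(p-j)/2\rceil-1$. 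Comparing the two counts shows that $v_j$ has an out-neighbour $w$ in $S$ that is not in $\phi(V(A-a))$. The only thing to be careful about here is the elementary integrality argument in this last comparison.

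\emph{Checking $\sigma$-forwardness of $\phi'$.} Let $I = \{v_i,\dots,v_p\}$ be a terminal interval. If $i \in \{p-1,p\}$, then $\phi'(V(A))$ meets $I$ in at most the single vertex $w$, and $1 < \frac12|I|+1$. If $i \le p-2$, then $I \cap V(T') = \{v_i,\dots,v_{p-2}\}$ is a terminal interval of $T'$, so $|\phi(V(A-a)) \cap I| < \frac12(p-1-i)+1$ by $\sigma$-forwardness of $\phi$; since $\phi'(V(A)) \cap I$ contains at most one further vertex, namely $w$, we get $|\phi'(V(A)) \cap I| < \frac12(p-1-i)+2 = \frac12(p-i+1)+1 = \frac12|I|+1$. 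Thus $\phi'$ is $\sigma$-forward, and the extension is complete.

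I do not foresee a genuine obstacle here. The statement amounts to one greedy move, and its two ingredients are exactly (M2) (which lets us reach an out-neighbour of $\phi(b)$ that is still unused, once we have kept the last two vertices of $\sigma$ free) and the observation that the budget $\frac12|I|+1$ has just enough slack, everywhere, to absorb one additional vertex.
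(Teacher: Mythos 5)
Your proof is correct and takes essentially the same route as the paper's: bound $|\phi(A-a)\cap\{v_{j+1},\dots,v_{p-2}\}|$ by $\sigma$-forwardness, compare with the count of out-neighbours of $\phi(b)$ in $\{v_{j+1},\dots,v_p\}$ given by (M2), and place $a$ on a free out-neighbour. You merely make explicit the integrality comparison and the final verification of $\sigma$-forwardness, which the paper leaves to the reader.
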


Because we shall employ the idea used to prove it, we give the proof of Lemma~\ref{lem:up-embed}.
\begin{proof}
Assume there exists a $\sigma$-forward embedding $\phi$ of $A-a$ in $T-\{v_{p-1}, v_p\}$.
Let $b$ be the in-neighbour of $a$ in $A$, and let $v_i=\phi(b)$.
Since $\phi$ is $\sigma$-forward, $|\phi(A-a) \cap \{v_{i+1}, \dots , v_{p-2}\} |< \frac12 ({p-2-i}) + 1 = \frac{1}{2}(p-i)$.
Now, by (M2), $v_i$ has at least $\frac12(p-i)$ out-neighbours in  $\{v_{i+1}, \dots , v_{p}\}$.
Hence, $v_i$ has an out-neighbour $v_j$ in  $\{v_{i+1}, \dots , v_{p}\}\setminus \phi(A-a)$.
Set $\phi(a)=v_j$. One easily checks that $\phi$ is a $\sigma$-forward embedding of $A$ in $T$.  
\end{proof}




Let $\sigma=(v_1, \ldots, v_m)$ be a local median order of a tournament $T$.
Let $F$ be a set of vertices of $T$.
An embedding $\phi$ of a tree $A$ in $T$ is {\bf $\sigma$-$F$-nice} if for all terminal interval $I = \{v_i,\ldots, v_m\}$,
$|\phi(A) \cap I| < \frac12|I| - |F\cap I| +1$ and for all initial interval $I = \{v_1,\ldots, v_i\}$, 
$|\phi(A) \cap I| < \frac12|I| - |F\cap I| +1$.

\begin{lemma}\label{lem:islands}
Let $f$ be a positive integer and $A$ be a tree of order $n$ with root $r$.
Let $T$ be tournament of order $4n + 4f-3$ with a set $F$ of at most $f$ vertices
and let $(v_{-2n-2f +1}, \dots ,v_{2n+2f -1} )$ be a local median order of $T$ such that $v_0 \notin F$.
There is a $\sigma$-$F$-nice embedding $\phi$ of $A$ in $T$ such that $\phi(r) =v_0$, and such that for all $a \in V(A)$, $\phi(a) \notin F$.
\end{lemma}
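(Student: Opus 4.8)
The plan is to proceed by induction on the order $n$ of $A$. For the base case $n=1$, the tree $A$ consists only of its root, so setting $\phi(r)=v_0$ works: the hypothesis $v_0\notin F$ guarantees $\phi(a)\notin F$ for all nodes, and the $\sigma$-$F$-niceness conditions are easily checked since $\phi(A)$ is a single vertex. For the inductive step, I would pick a leaf $a$ of $A$ that is farthest from $r$, let $b$ be its unique neighbour, and let $A'=A-a$, which is a tree of order $n-1$ rooted at $r$. The idea is to first apply the induction hypothesis to embed $A'$ in a suitably chosen sub-tournament, leaving enough room, and then to attach $\phi(a)$ in the spirit of the proof of Lemma~\ref{lem:up-embed}.

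The key step is choosing the right sub-tournament for the recursive call. Since $T$ has order $4n+4f-3$ and we want to apply the hypothesis to $A'$ of order $n-1$, we need a sub-tournament of order $4(n-1)+4f-3 = 4n+4f-7$, i.e.\ we should delete four vertices. I would delete $v_{2n+2f-1}, v_{2n+2f-2}$ from the right end and $v_{-2n-2f+1}, v_{-2n-2f+2}$ from the left end; by (M1)-type reasoning for local median orders the restriction remains a local median order, it still has the distinguished central element $v_0\notin F$, and the residual set $F'=F\setminus\{\text{deleted vertices}\}$ still has size at most $f$. Apply induction to get a $\sigma$-$F'$-nice embedding $\phi'$ of $A'$ with $\phi'(r)=v_0$ avoiding $F'$. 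Then extend: depending on whether $a$ is an out-leaf or an in-leaf, argue as in Lemma~\ref{lem:up-embed} (or its in-arborescence mirror image) that $v_i=\phi'(b)$ has enough out-neighbours (resp.\ in-neighbours) among the two-sided remaining interval, and moreover enough of them lie outside $\phi'(A')\cup F$. The $\sigma$-$F$-niceness of $\phi'$ is exactly the counting bound needed here: for a terminal interval $I=\{v_j,\dots\}$ with $v_j$ among the out-neighbours of $\phi'(b)$, the inequality $|\phi'(A')\cap I| < \tfrac12|I| - |F\cap I| + 1$ leaves a vertex of $I$ free of both $\phi'(A')$ and $F$ to receive $\phi(a)$, and one then checks that adding this vertex preserves all the $\sigma$-$F$-niceness inequalities (both terminal and initial intervals).

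The main obstacle I anticipate is handling both types of leaf uniformly and making sure the two-sided niceness condition (not just the one-sided $\sigma$-forward condition of Lemma~\ref{lem:up-embed}) is maintained when we append $\phi(a)$: we must verify that placing $\phi(a)$ in an interval does not break the inequality for the intervals on the \emph{other} side, and that the slack created by deleting four vertices (two per side) rather than two is exactly what compensates for the $-|F\cap I|$ term and the doubled (two-sided) requirement. A secondary subtlety is that $b$ need not be near $v_0$, so the out-neighbour count from (M2) must be applied to the sub-interval from $\phi'(b)$ to the relevant endpoint; since $\phi'$ is $\sigma$-$F$-nice, the count of already-used plus forbidden vertices in any such interval is strictly below half its size, which is precisely what (M2) needs to furnish a free landing vertex. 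Tracking these inequalities carefully — rather than any deep structural idea — is where the real work lies.
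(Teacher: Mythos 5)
Your overall strategy (induction, strip a fixed number of extreme vertices, apply the induction hypothesis, then extend as in Lemma~\ref{lem:up-embed}) is the right shape, but the specific choice of sub-tournament is where a genuine gap appears, and it is exactly the point where the paper does something cleverer.

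Suppose, as you propose, you delete a fixed two vertices from each end, obtaining $T'$ with top index $p'=p-2$ and $F'=F\cap V(T')$, and apply the induction hypothesis to $A-a$ to get a $\sigma'$-$F'$-nice embedding $\phi$. To place $\phi(a)$ (say $a$ is an out-leaf, $v_i=\phi(b)$), you need an out-neighbour of $v_i$ in $\{v_{i+1},\dots,v_{p}\}\setminus(\phi(A-a)\cup F)$. By (M2) the number of out-neighbours of $v_i$ in $\{v_{i+1},\dots,v_p\}$ outside $F$ is at least $\tfrac12(p-i)-|F\cap\{v_{i+1},\dots,v_p\}|$, while the $\sigma'$-$F'$-niceness only gives $|\phi(A-a)\cap\{v_{i+1},\dots,v_{p'}\}|<\tfrac12(p'-i)-|F\cap\{v_{i+1},\dots,v_{p'}\}|+1$. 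Subtracting, what you actually need is
$$\tfrac12(p-p')-\bigl|F\cap\{v_{p'+1},\dots,v_p\}\bigr|\;\geq\;1.$$
With your fixed offset $p-p'=2$ the left side is $1-|F\cap\{v_{p'+1},\dots,v_p\}|$, which fails as soon as $F$ contains even one of the two stripped vertices on the right end (and there is a symmetric failure on the left for the $\sigma$-backward half). Your intuition that ``deleting four rather than two compensates for the $-|F\cap I|$ term'' is the sticking point: the compensation is only adequate when $F$ is disjoint from the stripped ring, which you cannot assume.

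The paper's proof does not strip a fixed amount. It defines $p'$ by a fixed-point equation, $p'=2n+2|F\cap\{v_{-p'},\dots,v_{p'}\}|-3$, found by starting from $p'=2n-3$ and iterating. This forces $p-p'=2\bigl(f-|F'|\bigr)+2$, so that every vertex of $F$ that lies in $T\setminus T'$ contributes $2$ to the gap $p-p'$. Concretely, $\tfrac12(p-p')\geq |F\cap\{v_{p'+1},\dots,v_p\}|+|F\cap\{v_{-p},\dots,v_{-p'-1}\}|+1$, and the displayed inequality above goes through, on both sides. Simultaneously, $|T'|=2p'+1$ is precisely the size the induction hypothesis asks for with the parameter $f'=|F'|$, so the induction closes. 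This adaptive, $F$-dependent choice of how much to peel off is the missing idea; without it the two-sided niceness cannot be restored after appending $\phi(a)$.
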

\begin{proof}
We prove by induction on $n$, the result holding trivially when $n=1$.

Assume now that $n\geq 2$.
Let $a$ be a leaf of $A$. By directional duality, we may assume that $a$ is an out-leaf.
Let $b$ be the in-neighbour of $a$ in $A$. 
Set $p = 2n + 2f - 1$. Set $p'$ to be the smallest integer such that $p'=2n+2|F \cap (v_{-p'}, \dots ,v_{p'} )| -3$. Note that $p'$ can be obtained by starting with $p' = 2n-3$ and repeatidly replacing the value of $p'$ by the value of $2n+2|F \cap (v_{-p'}, \dots ,v_{p'} )| -3$ until the value of $p'$ remains stable. This process makes $p'$ increase at each step, and since we always have $p' \le p$ (since $|F| = f$), the process terminates. Note that $p = p' + 2 + 2|F \cap (v_{-p}, \dots ,v_{-p'-1} )| + 2|F \cap (v_{p'+1}, \dots ,v_{p} )|$.
Let $T'= T\langle \{v_{-p'}, \dots, v_{p'} \}\rangle$.
By definition, $\sigma'= (v_{-p'}, \dots ,v_{p'} )$ is a local median order of $T'$. Let $F' = F\cap V(T')$.

By the induction hypothesis, there exists a $\sigma'$-$F'$-nice embedding $\phi$ of $A-a$ in $T'$ such that for all $a \in V(A-a)$, $\phi(a) \notin F'$.
 Let $v_i=\phi(b)$.
Since $\phi$ is $\sigma'$-nice, $|\phi(A-a) \cap \{v_{i+1}, \dots , v_{p'}\} |< \frac12 ({p'-i}) -|F \cap \{v_{i+1}, \dots, v_{p'}\}| + 1 = \frac{1}{2}(p-i)- |F \cap (v_{-p}, \dots ,v_{-p'-1} )| - |F \cap (v_{p'+1}, \dots ,v_{p} )| - 1  -|F \cap \{v_{i+1}, \dots, v_{p'}\}| + 1  \le \frac{1}{2}(p-i) - |F \cap \{v_{i+1}, \dots, v_{p}\}|$. 
Now, by (M2), $v_i$ has at least $\frac12(p-i)$ out-neighbours in  $\{v_{i+1}, \dots , v_{p}\}$, so at least $\frac12(p-i)-|F \cap \{v_{i+1}, \dots, v_{p}\}|$  out-neighbours in  $\{v_{i+1}, \dots , v_{p}\}\setminus F$.
Hence, $v_i$ has an out-neighbour $v_j$ in  $\{v_{i+1}, \dots , v_{p}\}\setminus (\phi(A-a)\cup F)$.
Set $\phi(a)=v_j$. One easily checks that $\phi$ is a $\sigma$-$F$-nice embedding of $A$ in $T$.
\end{proof}

\section{Unavoidability of arborescences}\label{sec:arbo}

The aim of this section is to prove Theorem~\ref{thm:arbo}.
We prove the following theorem which implies it directly by directional duality.

\begin{theorem} \label{t_arbo}
Let $A$ be an out-arborescence with $n$ nodes, $k$ out-leaves and root $r$, let $T$ be a tournament on $m=n+k-1$ vertices,
and let  $\sigma=(v_1,v_2, \ldots, v_m)$ be a local median order of $T$.
There is an embedding $\phi$ of $A$ in $T$ such that  $\phi(r)=v_1$.
\end{theorem}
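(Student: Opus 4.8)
The plan is to proceed by induction on the number of nodes $n$ of the out-arborescence $A$, the base case $n=1$ (hence $m=1$) being immediate by setting $\phi(r)=v_1$. For the inductive step, the natural thing to do is to remove an out-leaf $a$ from $A$, apply induction to $A-a$, and then extend the embedding. If $a$ has a sibling (i.e.\ its father $b$ has another son in $A$), then $b$ is not itself a leaf of $A-a$ and the leaf count of $A-a$ equals $k-1$, so $A-a$ has order $n-1$ with $k-1$ leaves; the hypothesis applies to the tournament on $m-2 = (n-1)+(k-1)-1$ vertices obtained by deleting $v_{m-1}$ and $v_m$. We then want to re-insert $a$ at the end, and this is exactly what Lemma~\ref{lem:up-embed} does, provided the embedding of $A-a$ we produced is $\sigma$-forward. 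So the real content is to arrange that the inductive embedding can be taken $\sigma$-forward while still mapping $r$ to $v_1$.

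The subtlety is the case where the removed out-leaf $a$ is the unique son of its father $b$: then $b$ becomes a new out-leaf of $A-a$, so $A-a$ has order $n-1$ but still $k$ leaves (we lose $a$ but gain $b$), giving the wrong arithmetic $m-2 \ne (n-1)+k-1$. To avoid this, I would choose $a$ more carefully: pick a leaf $a$ at maximum distance from the root, and let $b$ be its father; if $b$ has only the one son $a$, then instead delete a whole ``broom'' or choose a different branch. Actually the cleanest fix is the standard one: take $b$ to be a node all of whose sons are leaves (such a node exists, e.g.\ a deepest non-leaf node), and if $b$ has $\ell \ge 2$ leaf-sons, remove all but one of them so that $A$ minus those $\ell-1$ leaves has $n-\ell+1$ nodes and $k-\ell+1$ leaves, matching a tournament on $m - 2(\ell-1)$ vertices — but this complicates the $\sigma$-forward bookkeeping. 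I think the better route, mirroring the paper's use of Lemma~\ref{lem:up-embed} and Lemma~\ref{lem:islands}, is to strengthen the statement: prove that there is a $\sigma$-forward embedding $\phi$ of $A$ with $\phi(r)=v_1$. With that strengthening the induction goes through: remove an out-leaf $a$ whose father $b$ has another son, get a $\sigma$-forward embedding of $A-a$ in $T-\{v_{m-1},v_m\}$ (order $m-2$, and $A-a$ has $n-1$ nodes, $k-1$ leaves), then apply Lemma~\ref{lem:up-embed} to extend to a $\sigma$-forward embedding of $A$ in $T$; the condition $\phi(r)=v_1$ is preserved since $r\ne a$ when $n\ge 2$ and $A$ has at least two leaves so such an $a$ exists unless $A$ is a path, a case one handles directly (a directed-out path of order $n$ with $k=2$... wait, $k=1$ if it is a directed path; then $m=n$ and R\'edei-type / (M2) arguments place it starting at $v_1$).

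So concretely the steps are: (1) restate as: there is a $\sigma$-forward embedding of $A$ in $T$ with $\phi(r)=v_1$; (2) induct on $n$, base case trivial; (3) if every internal node of $A$ has exactly one son, $A$ is a directed out-path, $k=1$, $m=n$, and a direct argument using (M2) (each $v_i\to v_{i+1}$) gives the directed path starting at $v_1$, which is trivially $\sigma$-forward; (4) otherwise there is an out-leaf $a$ whose father $b$ has at least two sons, so $A-a$ is an out-arborescence on $n-1$ nodes with $k-1$ out-leaves and the same root $r\ne a$; apply induction to $T':=T-\{v_{m-1},v_m\}$ with its local median order $(v_1,\dots,v_{m-2})$ — note $|V(T')| = m-2 = (n-1)+(k-1)-1$ — to get a $\sigma$-forward embedding $\phi$ of $A-a$ in $T'$ with $\phi(r)=v_1$; (5) invoke Lemma~\ref{lem:up-embed} verbatim to extend $\phi$ to a $\sigma$-forward embedding of $A$ in $T$, keeping $\phi(r)=v_1$.

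The main obstacle is precisely the leaf-selection issue in step (4): one must ensure the removed leaf is not the only son of its father, so that the leaf count drops by exactly one and the arithmetic $m-2 = (n-1)+(k-1)-1$ holds. When no such leaf exists, $A$ is a directed out-path and must be treated separately (step (3)); verifying that this degenerate case really is the only obstruction, and that in it the embedding can start at $v_1$ and be $\sigma$-forward, is the one place where a little care is needed, though it is short. Everything else is a direct application of Lemma~\ref{lem:up-embed}, whose proof is reproduced in the excerpt, so I do not anticipate further difficulty; in particular no new tournament-structure input beyond property (M2) is required.
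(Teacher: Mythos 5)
Your plan rests on strengthening the statement to ``there is a \emph{$\sigma$-forward} embedding of $A$ in $T$ with $\phi(r)=v_1$,'' but this strengthened claim is false in the regime of Theorem~\ref{t_arbo}, so the induction cannot get off the ground. By definition, a $\sigma$-forward embedding requires $|\phi(A)\cap I|<\tfrac12|I|+1$ for every terminal interval $I$; taking $I=V(T)$ gives $n<\tfrac{m}{2}+1$, i.e.\ $m\geq 2n-1$. Here, however, $m=n+k-1$, and for $n\geq 2$ an out-arborescence has $k\leq n-1$ out-leaves, so $m\leq 2n-2<2n-1$. In particular, in your step (3), the embedding of a directed out-path ($k=1$, $m=n$) is never $\sigma$-forward once $n\geq 2$, contrary to what you assert; and in step (4) the inductive hypothesis is being applied to a tournament $T'$ on $m-2=(n-1)+(k-1)-1$ vertices, which is still strictly smaller than $2(n-1)-1$, so the hypothesis you are trying to prove cannot even be stated there.

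The underlying confusion is that Lemma~\ref{lem:up-embed} (and the $\sigma$-forward framework) is exactly the right tool for the weaker $(2n-2)$-unavoidability of arborescences, where the tournament is large enough that the forward condition has room to hold. It cannot be transplanted to the much tighter bound $n+k-1$. The paper instead runs a greedy procedure (assign each hit vertex's sons to the first available out-neighbours, skip non-hit ``failed'' vertices), and then proves the procedure cannot fail by a global counting argument: it associates to each failed vertex $v_i$ an interval $I_i$ determined by the last ``active'' node before $v_i$, proves via (M2) that within each such interval the number of failures is bounded by the number of embedded out-leaves (Claim~\ref{claim:leaves}), shows the maximal intervals are pairwise disjoint (Claim~\ref{claim:inclusion}), and concludes $|F|\leq k-1$, contradicting the assumption that $k$ vertices failed. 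This combinatorial accounting has no analogue in your proposal; to repair the argument you would need to replace the $\sigma$-forward invariant with something that actually holds when $m=n+k-1$, and there is no evident local invariant that does so.
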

\begin{proof}
Let us describe a greedy procedure giving an embedding $\phi$ of $A$ into $T$. For each node $a$ of $A$, we fix an ordering ${\cal O}_a$ of the sons of $A$. 
If a vertex $v_j$ of $T$ is the image of a node,  we say that it is {\bf hit} and denote its pre-image by $a_j$; in symbols $a_j=\phi^{-1}(v_j)$.

\begin{itemize}
\item Set $\phi(r)=v_1$. 
\item For $i=1$ to $m$, do 
\begin{itemize} 
\item if $v_i$ is not hit, then skip;  we say that $v_i$ is {\bf failed}.
\item if $v_i$ is hit, then assign the $|N^+(a_i)|$ first not yet hit out-neighbours of $v_i$ in $\{v_{i+1}, \dots , v_m\}$ to the sons of $a_i$ (in increasing order according to ${\cal O}_a$).
\end{itemize}

\end{itemize}

\medskip

Assume for a contradiction that this procedure does not yield an embedding of $A$ into $T$.
Then the set $F$ of failed vertices has cardinality at least $k$.
Let $B$  be the set of embedded nodes at the end of the procedure.  Since we only embed a node after its father, $A\langle B\rangle$ is an out-arborescence.
Let $L$ be the set of out-leaves of $A$ that are in $B$. Since $A\langle B\rangle$ is a sub-arborescence of $A$, we have $|L| \leq k-1$.

A node $a$ is said to be {\bf active for $i$} if $\phi(a) \in \{v_1, \dots , v_i\}$ and it has a son $b$ that is not embedded in $\{v_1, \dots , v_i\}$ (i.e. either $b$ is not embedded or $\phi(b)\in \{v_{i+1}, \dots , v_m\}$).

 Consider a vertex $v_i$ in $F$.  There is an active node for $i$, for otherwise all nodes of $A$ would be embedded (in $\{v_1, \dots , v_i\}$).
Let $\ell_i$ be the largest index such that $a_{\ell_i}$ is active for $i$. Note that by definition of active node $\ell_i< i$. Set $I_i=\{v_j \mid \ell_i < j \leq i\}$.

\begin{claim}\label{claim:leaves}
 If $v_i \in F$, then $|I_i \cap  F| \le |I_i \cap \phi(L)|$.
 \end{claim}
 \begin{subproof}
Each out-neighbour of $v_{\ell_i}$ in $I_i$ is hit for otherwise  the procedure would have assigned a son of $a_{\ell_i}$  to it. Thus $I_i\cap F\subseteq I_i  \cap N^-(v_{\ell_i})$ and so  
\begin{equation}\label{eq:1}
|I_i\cap F|\leq |I_i  \cap N^-(v_{\ell_i})| .
\end{equation} 

Let $v_j$ be a hit vertex in $I_i$.
By definition of $\ell_i$, $a_j$ is not active for $i$, so its sons (if any) are embedded in $\{v_{j+1}, \dots , v_{i-1}\}\subseteq I_i$.
Again, by definition of $\ell_i$, all the sons of $a_j$ are not active, and so their sons (if any) are embedded in $I_i$.
And so on, all descendants of $a_j$ are embedded in $I_i$ and not active. We associate to $v_j$ an out-leaf $w_j$ of $A$ which is a descendant of $a_j$. We just showed that $\phi(w_j)\in I_i$.

Consider now the vertices of $J=I_i \cap N^+(v_{\ell_i})$. As seen above, they are hit, and the descendants of their pre-images are also embedded in $I_i$.
Moreover, for each $v_j\in J$, the father of $a_j$ is embedded in $\{v_1, \dots , v_{\ell_i}\}$ for otherwise, at Step $j$, the procedure would have assigned $v_j$ to an out-neighbbour of $a_{\ell_i}$ or another active node for $i$. Hence no vertex of $J$ is the image of an ancestor of another node embedded in $J$. Consequently, the out-leaves embedded in $J$ are all distinct.
Thus 
\begin{equation}\label{eq:2}
|I_i \cap N^+(v_{\ell_i})| \leq  |I_i \cap \phi(L)| .
\end{equation}

Now, by (M2),  $|I_i  \cap N^-(v_{\ell_i})| \leq |I_i \cap N^+(v_{\ell_i})|$. Together with Equations~\eqref{eq:1} and~\eqref{eq:2}, this proves the claim. 
\end{subproof}

\begin{claim}\label{claim:inclusion}
 If $v_i \in F$ and $v_j \in F$, then either $I_i \cap I_j = \emptyset$, or $I_i \subseteq I_j$, or $I_j \subseteq I_i$. 
\end{claim}
\begin{subproof}
Let $v_i, v_j \in F$ with $i<j$. Assume for a contradiction that $I_i \cap I_j \neq \emptyset$, $I_i \not\subseteq I_j$, and $I_j \not\subseteq I_i$. 
Then $\ell_i < \ell_j < i$.
By definition of $\ell_i$, $a_{\ell_j}$ is not active for $i$. Thus all its sons are embedded in $\{v_1, \dots , v_i\}$.
Since $\{v_1, \dots , v_i\}\subseteq \{v_1, \dots , v_j\}$, $a_{\ell_j}$ is not active for $j$, a contradiction to the definition of $\ell_j$.
\end{subproof}

Now let $M$ be the set of indices $i$ such that $v_i\in F$ and $I_i$ is maximal for inclusion.
Since $v_i\in I_i$ for all $v_i\in F$, we have $F\subseteq \bigcup_{i\in M} I_i$. Moreover, by Claim~\ref{claim:inclusion}, the $I_i$, $i\in M$, are pairwise disjoint.
So $|F|= \sum_{i\in M} |I_i \cap F|$.
By Claim~\ref{claim:leaves}, we obtain $$|F|= \sum_{i\in M} |I_i \cap F| \leq \sum_{i\in M} |I_i \cap \phi(L)| \leq |\phi(L)| =|L| \leq k-1 ,$$
a contradiction. This completes the proof.
\end{proof}

\begin{obs} \label{obs:bij}
With the embedding $\phi$ constructed in the above proof, there is an injection from the set $F$ of failed vertices into $L^+(A)$ such that every failed vertex $v_i$ is mapped to an out-leaf whose image precedes $v_i$ in $\sigma$.
\end{obs}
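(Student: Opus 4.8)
The plan is to sharpen the proof of Theorem~\ref{t_arbo}: instead of merely counting, I will produce the injection explicitly via Hall's theorem applied to the interval decomposition $\{I_i : i\in M\}$ already built there, and then dispose separately of the failed vertices lying after every image. Since we are now in the case where the greedy procedure succeeds, every node of $A$ is embedded, $|F|=m-n=k-1$, and the set $L$ of embedded out-leaves is all of $L^+(A)$, so $|\phi(L)|=k$. Let $i_0$ be the largest index of a vertex in $\phi(A)$. For a failed vertex $v_i$ with $i<i_0$, not all nodes of $A$ lie in $\{v_1,\dots,v_i\}$ (the image $v_{i_0}$ does not), so there is a node active for $i$, and the quantities $\ell_i$ and $I_i$ are defined exactly as in the proof of Theorem~\ref{t_arbo}, with Claims~\ref{claim:leaves} and~\ref{claim:inclusion} applying verbatim. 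On the other hand $v_{i_0}$ is an image, hence not in $F$, and every $v_i$ with $i>i_0$ is failed (and has no active node, so $\ell_i$ is not defined). Thus $F=F_{\mathrm e}\cup F_{\mathrm l}$ with $F_{\mathrm e}=F\cap\{v_1,\dots,v_{i_0-1}\}$ and $F_{\mathrm l}=F\cap\{v_{i_0+1},\dots,v_m\}$; moreover $M\subseteq\{i<i_0:v_i\in F\}$, the intervals $I_i$ ($i\in M$) are pairwise disjoint, and $F_{\mathrm e}\subseteq\bigcup_{i\in M}I_i\subseteq\{v_1,\dots,v_{i_0-1}\}$.

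For $F_{\mathrm e}$ I treat each maximal interval $I_i$ ($i\in M$) on its own. The point is the following refinement of Claim~\ref{claim:leaves}: for every $v_{i'}\in I_i\cap F$,
\[
|F\cap\{v_{\ell_i+1},\dots,v_{i'}\}|\ \le\ |\phi(L)\cap\{v_{\ell_i+1},\dots,v_{i'-1}\}| .
\]
Indeed, by Claim~\ref{claim:inclusion} every failed vertex $v_j$ with $\ell_i<j\le i'$ satisfies $I_j\subseteq I_i$, hence $I_j\subseteq\{v_{\ell_i+1},\dots,v_{i'}\}$; taking the inclusion-maximal such $I_j$ gives a pairwise disjoint family of intervals inside $\{v_{\ell_i+1},\dots,v_{i'}\}$ covering $F\cap\{v_{\ell_i+1},\dots,v_{i'}\}$, and summing Claim~\ref{claim:leaves} over them bounds $|F\cap\{v_{\ell_i+1},\dots,v_{i'}\}|$ by $|\phi(L)\cap\{v_{\ell_i+1},\dots,v_{i'}\}|$, which equals $|\phi(L)\cap\{v_{\ell_i+1},\dots,v_{i'-1}\}|$ because $v_{i'}$, being failed, is not an image. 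This inequality is exactly Hall's condition for the bipartite graph on $I_i\cap F$ and $\phi(L)\cap I_i$ in which $v_{i'}$ is joined to every out-leaf image of strictly smaller index. Hence there is a matching of $I_i\cap F$ into $\phi(L)\cap I_i$ sending each $v_{i'}$ to an out-leaf image preceding it, and, the $I_i$ ($i\in M$) being pairwise disjoint, these matchings combine into an injection of $F_{\mathrm e}$ into $\phi(L)$ with the precedence property.

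It remains to extend this injection over $F_{\mathrm l}$, which is immediate: every image has index at most $i_0$ and so precedes every vertex of $F_{\mathrm l}$; the part already built uses $|F_{\mathrm e}|=i_0-n$ out-leaf images, and $|F_{\mathrm l}|=(k-1)-(i_0-n)<k-(i_0-n)$ out-leaf images remain, so a distinct unused out-leaf image can be assigned to each vertex of $F_{\mathrm l}$. Composing the resulting injection $F\to\phi(L)$ with $\phi^{-1}$ yields the required injection of $F$ into $L^+(A)$ in which every failed vertex is mapped to an out-leaf whose image precedes it. The part I expect to be the real work is the refined Claim~\ref{claim:leaves} above, together with the observations that the plain counting of the proof of Theorem~\ref{t_arbo} does not on its own deliver the precedence, and that the failed vertices lying past the last image fall outside the $I_i$-machinery and must be handled by hand.
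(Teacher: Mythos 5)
Your proof is correct, and it follows the same skeleton as the paper's: split $F$ into the failed vertices lying before the last image (those having an active node, hence a well-defined interval $I_i$) and those lying after it (handled by the trivial count $|F|\le k-1$), and exploit Claims~\ref{claim:leaves} and~\ref{claim:inclusion} to control the first group. The difference lies in how the injection is extracted. The paper processes the failed vertices greedily in increasing $\sigma$-order: upon reaching a $v_i$ with an active node, Claim~\ref{claim:leaves} applied to $I_i$, combined with the nesting from Claim~\ref{claim:inclusion} (which guarantees every earlier failed $v_j$ that consumed an out-leaf image inside $I_i$ itself lies in $I_i$), shows some out-leaf image in $I_i$ --- and hence one preceding $v_i$ --- is still unused. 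You instead establish the refined local inequality
\[
|F\cap\{v_{\ell_i+1},\dots,v_{i'}\}|\ \le\ |\phi(L)\cap\{v_{\ell_i+1},\dots,v_{i'-1}\}| \qquad (v_{i'}\in I_i\cap F),
\]
recognise it as Hall's condition for the ``precede'' bipartite graph on each maximal $I_i$, and invoke Hall's theorem. Both routes are sound; the paper's greedy is shorter and is, in effect, a constructive proof of Hall for this nested-interval structure. Your closing instinct --- that the bare maximal-interval count $\sum_{i\in M}|I_i\cap F|\le\sum_{i\in M}|I_i\cap\phi(L)|$ does not by itself deliver precedence --- is correct, but the remedy need not be a Hall-type refinement: it suffices to apply Claim~\ref{claim:leaves} at \emph{every} $I_i$ (not only the maximal ones) together with the nesting, which is exactly what the greedy pass does.
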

\begin{proof}
We map the vertices $v_i$ of $F$ to an out-leaf in increasing order according to $\sigma$.

If there is an active vertex for $i$, then by Claim~\ref{claim:leaves}, $|I_i \cap  F| \le |I_i \cap \phi(L)|$. Hence, there is an out-leaf $f(v_i)$ of $A$ with image in $I_i$ (and thus preceding $v_i$ in $\sigma$) that was not assigned earlier to a failed vertex.

If there is no active vertex for $i$, then all nodes of $A$ are embedded (in vertices preceding $v_i$ in $\sigma$). Since $|F| \leq k-1$, there exists an out-leaf $f(v_i)$ which is not yet assigned to any failed vertex. Necessarily, $f(v_i)$ is embedded in a vertex preceding $v_i$ in $\sigma$.
\end{proof}

A {\bf bi-arborescence} is a rooted tree $A$ that is the union of an in-arborescence and an out-arborescence that are disjoint except in their common root, which is also the root of $A$. Theorem~\ref{t_arbo} directly implies the following corollary.

\begin{corollary}\label{cor:bi-arbo}
Let $A$ be a bi-arborescence of order $n$ with $k$ leaves.
If $A$ has at least one in-leaf and at least one out-leaf, then  $A$ is $(n+k-2)$-unavoidable.
Otherwise $A$ is $(n+k-1)$-unavoidable.
\end{corollary}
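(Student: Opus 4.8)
The plan is to glue two applications of Theorem~\ref{t_arbo} along a median order of the host tournament $T$, using the common root $r$ as the hinge. Let $A$ be a bi-arborescence with root $r$, write $A^+$ for the out-arborescence part and $A^-$ for the in-arborescence part, so that $V(A^+)\cap V(A^-)=\{r\}$ and $V(A^+)\cup V(A^-)=V(A)$. Let $n^+=|A^+|$, $n^-=|A^-|$, so $n^+ + n^- = n+1$, and let $k^+$ (resp.\ $k^-$) be the number of out-leaves of $A^+$ (resp.\ in-leaves of $A^-$), noting $r$ may or may not be a leaf of $A$. First I would take $\sigma=(v_1,\dots,v_m)$ a \emph{median} order of $T$ (not merely local median), so that every prefix and suffix inherits property (M2) by (M1). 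The idea is to embed $A^-$ into a suitable prefix $T\langle\{v_1,\dots,v_t\}\rangle$ with $r$ mapped to the \emph{last} vertex $v_t$ of that prefix, and to embed $A^+$ into the suffix $T\langle\{v_t,\dots,v_m\}\rangle$ with $r$ mapped to the \emph{first} vertex $v_t$; since the two images overlap exactly in $v_t=\phi(r)$, consistency of the images of $r$ is automatic, and disjointness of the rest follows because $A^+$ uses only $\{v_{t+1},\dots\}$ for its non-root nodes and $A^-$ only $\{\dots,v_{t-1}\}$.

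For the out-arborescence part, Theorem~\ref{t_arbo} applied to the suffix $T\langle\{v_t,\dots,v_m\}\rangle$, which is a tournament on $m-t+1$ vertices with the median (hence local median) order $(v_t,\dots,v_m)$, gives an embedding of $A^+$ sending $r$ to $v_t$ as soon as $m-t+1 \ge n^+ + k^+ -1$. For the in-arborescence part, apply the directional dual of Theorem~\ref{t_arbo} to the prefix $T\langle\{v_1,\dots,v_t\}\rangle$: by directional duality an in-arborescence $A^-$ with $k^-$ in-leaves and root $r$ embeds into any tournament on $t$ vertices with a local median order so that $r$ is the \emph{last} vertex, provided $t \ge n^- + k^- - 1$. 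So I need a threshold $t$ with $t \ge n^- + k^- -1$ and $m - t + 1 \ge n^+ + k^+ -1$; such a $t$ exists iff $m \ge (n^+ + k^+ - 1) + (n^- + k^- - 1) - 1 = (n^+ + n^-) + (k^+ + k^-) - 3 = (n+1) + k' - 3$, where $k' = k^+ + k^-$ counts the leaves of $A$ \emph{other than possibly $r$ counted once on each side}. The remaining bookkeeping is to relate $k'$ to the number $k$ of leaves of $A$: if $r$ is a leaf of $A$, then $r$ is (say) an out-leaf of $A^+$ and $A^-$ is trivial, which degenerates to Theorem~\ref{t_arbo} itself; in the generic case $r$ is not a leaf, so $k' = k$ and the bound becomes $m \ge n + k - 2$, giving $(n+k-2)$-unavoidability. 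The dichotomy in the statement — $(n+k-2)$ when $A$ has at least one in-leaf and at least one out-leaf, $(n+k-1)$ otherwise — falls out by checking which of $A^+$, $A^-$ is nontrivial: if both are nontrivial then $r$ is an internal node, $k'=k$, and we get $n+k-2$; if one side is trivial then $A$ is itself an arborescence and Theorem~\ref{t_arbo} gives $n+k-1$.

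The main obstacle I expect is not the gluing but the careful leaf count: when $r$ is not a leaf of $A$, a leaf of $A$ lying in $A^+$ is an out-leaf of $A^+$ and a leaf lying in $A^-$ is an in-leaf of $A^-$, so $k^+ + k^- = k$ exactly — but one must verify that $r$ itself, having both an in-neighbour (in $A^-$) and an out-neighbour (in $A^+$), contributes $0$ to each of $k^+, k^-$ and is not double-counted. A secondary subtlety is that Theorem~\ref{t_arbo} is stated for a local median order, and I am slicing $\sigma$ into a prefix and a suffix; I should invoke (M1) to guarantee that both slices are themselves median orders of the corresponding subtournaments, hence in particular local median orders, so Theorem~\ref{t_arbo} and its dual apply verbatim. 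Once those two points are pinned down, choosing $t = n^- + k^- - 1$ (or any admissible value in the allowed range) completes the proof with no further computation.
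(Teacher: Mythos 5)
Your proof is correct and takes exactly the route the paper leaves implicit when it asserts that Theorem~\ref{t_arbo} ``directly implies'' the corollary: decompose $A$ at its root $r$ into $A^-$ and $A^+$, embed them into a prefix and a suffix of a local median order meeting at $\phi(r)=v_t$, and relate the leaf counts. Two harmless cosmetic points: invoking (M1) and a true median order is unnecessary, since (M2) is a local condition that is automatically inherited by every contiguous subinterval of a local median order; and in the degenerate case where $A^-$ is trivial, $r$ is an in-leaf (not an out-leaf) of $A^+$, which does not affect the argument.
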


\section{Unavoidability of trees with few leaves}\label{sec:few}

For any rooted tree $A$ with root $r$, we partition the arcs into the {\bf upward arcs} (the ones directed away from the root) and the {
\bf downward arcs} (the ones directed towards the root). The subdigraph composed only of the upward arcs and the nodes that are in an upward arc is called the {\bf upward forest}, and the subdigraph composed only of the downward arcs and the nodes that are in a downward arc is called the {\bf downward forest}. 
The set of components of the upward (resp.~downward) forest is denoted by  ${\cal C}^{\uparrow}_r(A)$ (resp.~${\cal C}^{\downarrow}_r(A)$), or simply  ${\cal C}^{\uparrow}_r$ (resp.~${\cal C}^{\downarrow}_r$) when $A$ is clear from the context.
 Set $\gamma_r^{\uparrow} =\sum_{C\in {\cal C}^{\uparrow}_r}(|V(C)| + |L^+(C)| - 2)$ and $\gamma_r^{\downarrow} = \sum_{C\in {\cal C}^{\downarrow}_r} (|V(C)| + |L^-(C)| - 2)$. 
 Observe that each component of the upward (resp.~downward) forest contains an arc and thus at least two vertices and one out-leaf (resp.~in-leaf).
 Hence $|V(C)| + |L^+(C)| - 2 > 0$ for all $C\in {\cal C}^{\uparrow}_r$ and $|V(C)| + |L^-(C)| - 2>0$ for all $C\in {\cal C}^{\downarrow}_r$.

\begin{lemma} \label{lem:t_tree}
Let $A$ be a rooted tree with $n$ nodes and $k$ leaves such that the root $r$ of $A$ has in-degree $0$.  Then $A$ is $(n+k-1 + \gamma_r^{\downarrow})$-unavoidable.
\end{lemma}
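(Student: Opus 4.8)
The plan is to combine the arborescence embedding of Theorem~\ref{t_arbo} with the "completion by forward embedding" idea of Lemma~\ref{lem:up-embed}, processing the downward forest component by component. Let $T$ be a tournament on $m = n+k-1+\gamma_r^{\downarrow}$ vertices, fix a local median order $\sigma=(v_1,\dots,v_m)$, and let $A^{\uparrow}$ be the upward out-arborescence rooted at $r$ (so $A^{\uparrow}$ contains $r$ and all upward arcs, hence all nodes that are heads or tails of upward arcs; nodes incident only to downward arcs will be attached later as part of downward components). First I would embed $A^{\uparrow}$ greedily using the procedure of Theorem~\ref{t_arbo} with $\phi(r)=v_1$, but run on the whole of $T$ rather than a tournament of exactly the right size — this produces an embedding $\phi$ together with a set $F$ of failed vertices, and by Observation~\ref{obs:bij} there is an injection from $F$ into $L^+(A^{\uparrow})$ sending each $v_i\in F$ to an out-leaf of $A^{\uparrow}$ whose image precedes $v_i$. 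Crucially, the failed vertices $F$ are exactly the "free" vertices we may reuse: they are not hit by $\phi$, and each one is charged to an upward leaf lying to its left.

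Next I would attach the downward-forest components. Each component $C\in{\cal C}^{\downarrow}_r$ is an in-arborescence whose root $\rho_C$ is a node already embedded by $\phi$ (since $\rho_C$, being the top node of a maximal downward sub-tree, is incident to an upward arc or is $r$, hence lies in $A^{\uparrow}$). To attach $C$ at $\rho_C=\phi^{-1}(v_i)$ I want to extend $\phi$ to the non-root nodes of $C$ using only vertices of $F$ lying to the \emph{left} of $v_i$, i.e. in the initial interval $\{v_1,\dots,v_{i-1}\}$. By directional duality Lemma~\ref{lem:up-embed} gives the backward analogue: a $\sigma$-backward embedding of $C-\rho_C$ extends to one of $C$ by adding two vertices at the left end. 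Iterating over $C$ and over all components, the total number of extra vertices consumed, beyond the at most $n+k-1$ needed for $A^{\uparrow}$ and its own leaves, is governed by $\sum_{C}(|V(C)|+|L^-(C)|-2) = \gamma_r^{\downarrow}$: each component of order $|V(C)|$ uses $|V(C)|-1$ new vertices for its non-root nodes plus a "slack" of $|L^-(C)|-1$ coming from in-leaves, matching the budget because the contribution $-1$ telescopes against the root already being placed. The bookkeeping is exactly parallel to Theorem~\ref{t_arbo}: failed vertices are indexed by intervals $I_i$, each interval captures at least as many failures as upward-leaf images it contains, the intervals are laminar, and summing over maximal intervals gives $|F|\le |\phi(L^+(A^{\uparrow}))| + (\text{downward slack})$.

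I would carry this out by a single induction, on the number of components of ${\cal C}^{\downarrow}_r$ or equivalently on the number of downward arcs, peeling off one "lowest" downward component at a time and appealing to Theorem~\ref{t_arbo} for the base case (no downward arcs, where $\gamma_r^{\downarrow}=0$ and the bound is exactly $n+k-1$). At each inductive step I remove a downward in-arborescence $C$ hanging below an upward node, embed the smaller tree $A'$ in the appropriate terminal-or-initial subtournament $T'=T\langle\{v_1,\dots,v_{m'}\}\rangle$ with $m' = m - (|V(C)|-1) - (|L^-(C)|-1)$, obtaining $\phi'$ with its failed set $F'$, then re-attach $C$ inside $T-T'$ together with enough failed vertices of $F'$; the in-leaves of $C$ are routed to failed vertices exactly as the out-leaves were in Observation~\ref{obs:bij}.

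The main obstacle I expect is the interaction between the two phases: guaranteeing that when we attach a downward component $C$ at $v_i=\phi(\rho_C)$, there really are enough \emph{unused} vertices to its left — i.e. that the failed set $F$ is not depleted by other downward components competing for the same initial interval, and that a $\sigma$-backward embedding of $C$ can be completed using only those left-of-$v_i$ vertices. Handling this cleanly requires either (i) a careful left-to-right (or rather, bottom-up-in-the-tree) ordering of the components together with a laminar-intervals argument showing the budgets add up, as in the proof of Theorem~\ref{t_arbo}; or (ii) absorbing the downward forest into the set $F$ of the arborescence procedure from the outset, treating its needed vertices as an obstruction set à la Lemma~\ref{lem:islands}. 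I would develop option (i), since it keeps the single counting argument "intervals $I_i$ are laminar and each charges a distinct leaf or slack unit" that already drove Theorem~\ref{t_arbo}, and extend that charging scheme so that each in-leaf of each downward component also gets a private vertex in $F$.
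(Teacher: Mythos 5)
Your plan diverges from the paper's and has a structural gap at the very first step. You take $A^{\uparrow}$ to be ``the upward out-arborescence rooted at $r$,'' but the upward forest of a rooted tree is generally \emph{disconnected}: whenever a downward arc $v\to f_v$ separates $v$ from $r$, any upward arcs emanating from $v$ live in a component of the upward forest that does not contain $r$. (Take $r\to a$, $b\to a$, $b\to c$ with $r$ the root: the upward arcs $ra$ and $bc$ form two disjoint components.) So Theorem~\ref{t_arbo} and the charging of Observation~\ref{obs:bij} do not apply to $A^{\uparrow}$ as you define it. Beyond this, you yourself flag the real difficulty in your phase-two plan --- guaranteeing that enough failed vertices sit to the left of each $\phi(\rho_C)$ and that the backward extensions of distinct downward components do not collide --- and you do not resolve it; the laminar-intervals counting from Theorem~\ref{t_arbo} would need a genuinely new two-sided version to control failures in both directions at once.

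The paper avoids all of this with a single reduction rather than a two-phase attach-and-repair. It builds an \emph{equivalent arborescence} $A'$: for each downward component $C_i$ (with root $\rho_i$, father $f_i$ of $\rho_i$, $n_i$ nodes and $k_i$ in-leaves), delete all arcs of $C_i$, add $k_i-1$ dummy nodes, and make every node of $V(C_i)$ together with the dummies a son of $f_i$. This $A'$ \emph{is} a single out-arborescence, of order $n+\sum_i(k_i-1)$ with at most $k+\sum_i(n_i-1)$ out-leaves, so Theorem~\ref{t_arbo} embeds it into $T$ of order $m=n+k-1+\gamma_r^{\downarrow}$. The decisive extra observation is that the greedy procedure of Theorem~\ref{t_arbo} need not commit the order $\mathcal{O}_a$ of the sons of a node before their image set is known; so once the $n_i+k_i-1$ images $\phi(S_i)$ of the sons $S_i=V(C_i)\cup N_i$ of $f_i$ are determined, one re-applies Theorem~\ref{t_arbo} inside $T\langle\phi(S_i)\rangle$ to embed the in-arborescence $C_i$ there, discarding the images of the dummies. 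The restriction of the resulting map to $V(A)$ is the desired embedding --- no backward extensions, no cross-phase bookkeeping of failed vertices, no interaction to worry about. If you try to rescue your plan, you will essentially be forced to invent this equivalent-arborescence construction anyway, because it is exactly what turns the disconnected upward forest into a single arborescence while reserving room for the downward parts.
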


\begin{proof}
Let $C_1$, \dots , $C_j$ be the components of the downward forest of $A$, and for $1\leq i\leq j$, let $n_i$ be the number of nodes and $k_i$ the number of in-leaves of $C_i$. 
By definition, $\gamma_r^{\downarrow} = \sum_{i =1}^j (n_i + k_i - 2)$.

Let $T$ be a tournament on $n+k-1 + \gamma_r^{\downarrow}$ vertices.

We shall use the greedy procedure described in the proof of Theorem~\ref{t_arbo}.
Observe that in this procedure, we do not need to fix the order ${\cal O}_a$ before the set of images of the sons of $a$ is known. Thus we can effectively choose which son of $a$ is embedded to which vertex with the knowledge of the set of the images of the sons of $a$.

Now we build an arborescence $A'$ from the rooted tree $A$, which we call the {\bf equivalent arborescence} of $A$.
 For all $i \in \{1,...,j\}$, do the following. Let $f_i$ be the father of the root of $C_i$. Note that $f_i$ exists since the root of $A$ has in-degree $0$, and thus is not in the downward forest. Remove all the arcs of $C_i$, add a set $N_i$ of $k_i - 1$ new nodes, and put an arc from $f_i$ to each new node and to each node of $C_i$ (except to the root of $C_i$, since that arc already exists).

Observe that $A'$ is a rooted tree with the same root as $A$ and since we removed the downward arcs and added only upward arcs, $A'$ is even an out-arborescence. By construction
$A'$ has $n+ \sum_{i=1}^j (k_i - 1)$ nodes. Let $i \in \{1,...,j\}$. The nodes of $C_i$ that are tail of an upward arc in $A$ are tail of the same upward arc in $A'$, thus they are not leaves in $A'$. Hence, each in-leaf of $C_i$ either is an in-leaf in $A$ (if it is the tail of no upward arc), or is not an out-leaf in $A'$. Therefore, in $C_i$, there are at most $n_i - k_i$ out-leaves of $A'$ that are not in-leaves in $A$. Recall that the new nodes are also out-leaves.
Therefore $A'$ has at most $k+ \sum_{i =1}^j(n_i - 1)$ out-leaves. 

Therefore by Theorem~\ref{t_arbo}, there is an embedding $\phi$ of $A'$ into $T$. We build it according to the procedure presented in the beginning of this proof.  Let $i \in \{1,...,j\}$, and consider $S_i=V(C_i)\cup N_i$. This is a set of $n_i + k_i - 1$ sons of $f_i$ in $A'$. As argued previously, we can know $\phi(S_i)$ before we choose which node of $S_i$ is embedded to which vertex. By Theorem~\ref{t_arbo}, there is an embedding $\phi_i$ from $C_i$ into $T\langle \phi(S_i)\rangle$. Now for each node $a$ in $C_i$, we choose $\phi_i(a)$ as its image by $\phi$.

Consider now $\psi$ the restriction of the resulting embedding $\phi$ to $V(A)$. For all $i \in \{1,...,j\}$, $\psi$ coincides with $\phi_i$. Hence $\psi$ preserves the upward arcs since all the upward arcs of $A$ are in $A'$, and preserves the downward arcs since each downward arc of $A$ is in some $C_i$. Therefore $\psi$ is an embedding of $A$ into $T$.
\end{proof}

\medskip

We are now able to prove Theorem~\ref{t_tree2} which states that every oriented tree of order $n$ with  $k$ leaves is $(\frac32n+\frac{3}{2}k-2)$-unavoidable.
\begin{proof}[Proof of Theorem~\ref{t_tree2}]
Let $T$ be a tournament on $\frac32(n+k)-2$ vertices.
Let $A$ be an oriented tree with $n$ nodes an $k$ leaves. 
Pick a root $r$ such that $\min(\gamma_r^{\uparrow},\gamma_r^{\downarrow})$ is minimum. By directional duality, we may assume that this minimum is attained by $\gamma_r^{\downarrow}$. 

Since $\gamma_r^{\downarrow} \le \gamma_r^{\uparrow}$, we have $\gamma_r^{\downarrow} \le \frac12 (\gamma_r^{\uparrow}+\gamma_r^{\downarrow})$. 
For a rooted tree $A$, let $L'(A)$ be the set of leaves of $A$ distinct from the root.
Note that if $A_1$ and $A_2$ are two rooted trees that are disjoint except in one vertex which is the root of $A_2$, then 
$A_1\cup A_2$ is a tree, and if we root it at the root of $A_1$, then $|L'(A_1\cup A_2)| \geq |L'(A_1)| + |L'(A_2)|-1$.  By applying that successively for all the components of the upward and downward forests of $(A,r)$, we get that $\gamma_r^{\uparrow}+\gamma_r^{\downarrow} \le n+k-2$, and thus $\gamma_r^{\downarrow} \le \frac12(n+k) - 1$. Hence, $T$ has at least $n+k-1 + \gamma_r^{\downarrow}$ vertices.

Suppose for a contradiction that $r$ has an in-neighbour $s$. The downward forest $F_s$ of $(A,s)$ is obtained from the downward forest $F_r$ of  $(A,r)$ by removing the arc $sr$ and possibly $s$ or $r$ if they become isolated.
All components of $F_r$ not containing $sr$ are also components of $F_s$ and the component $C_0$ of $F_r$ containing $sr$ either disappears (when $sr$ is the sole arc of $C_0$), or loses one vertex (when $r$ or $s$ is a leaf of $C_0$), or is split into two components having in total as many vertices as $C_0$ and at most one more in-leaf than $C_0$. In any case,  $\gamma_s^{\downarrow} < \gamma_r^{\downarrow}$, a contradiction.

Consequently $r$ has in-degree $0$. Lemma~\ref{lem:t_tree} finishes the proof.
\end{proof}

\section{Unavoidability of trees with many leaves}\label{sec:many}

The aim of this section is to establish Theorem~\ref{t_leaves}, which we recall.\\

\noindent{\bf Theorem~\ref{t_leaves}.} 
Every oriented tree with $n\ge 3$ vertices and $k$ leaves is $(\frac92n - \frac52k - \frac92)$-unavoidable.

\begin{proof}
Set $m = \lceil\frac92n - \frac52k - \frac92\rceil$.
Let $T$ be a tournament on $m$ vertices.
Let $A$ be an oriented tree with $n$ nodes and $k$ leaves. 
If $A$ is a bi-arborescence, then we have the result by Corollary~\ref{cor:bi-arbo}. Henceforth, we assume that $A$ is not a bi-arborescence.
In particular, $k<n-1$.

The {\bf out-leaf cluster} of $A$, denoted by $S^+$, is the set of nodes of $A$ defined recursively as follows. Each out-leaf $A$ is in $S^+$; if $a$ is a node with exactly one in-neighbour and all its out-neighbours are in $S^+$, then $a$ is also in $S^+$. We similarly define the {\bf in-leaf cluster} $S^-$ of $A$.  Note that $A\langle S^-\rangle$ is a forest of in-arborescences, and $A\langle S^+\rangle$ is a forest of out-arborescences.
Moreover, $S^-\cap S^+=\emptyset$ because $A$ is not a bi-arborescence.

The {\bf heart} of $A$, denoted by $H$, is the tree $A -(S^-\cup S^+)$. Set $n_H=|V(H)|$ and $k_H=|L(H)|$.
We first note that each out-leaf of $H$ has a neighbour in $S^-$, since otherwise it would be in $S^+$. Similarly, each in-leaf of $H$ has a neighbour in $S^+$. In particular, $|S^-| \ge |L^+(H)|$ and $|S^+| \ge |L^-(H)|$.

We now describe an algorithm yielding an embedding $\phi$ of the tree $A$ into $T$.
It proceeds in three phases: in the first phase, we embed the heart of $A$, in the second phase we embed the out-leaf cluster, and in the third phase we embed the in-leaf cluster.
At each step, a node of $A$ is {\bf embedded} if it already has an image by $\phi$, and {\bf unembedded} otherwise. If a vertex $v_j$ of $T$ is the image of a node,  we denote this node by $a_j$; in symbols $a_j=\phi^{-1}(v_j)$. We say that a vertex is {\bf hit} if it is the image of a node.

\medskip
Let  $\sigma=(v_1,v_2, \ldots, v_m)$ be a local median order of $T$. Our algorithm heavily relies on $\sigma$ to embed $A$ in $T$.
It distinguishes two cases. We first deal with the easier case when one of $S^-, S^+$ is empty.
By directional duality, we may assume that  $S^-=\emptyset$. In that case, we proceed only in two phases. (A third phase to embed $S^-$ is useless.)
First, by Lemma~\ref{lem:islands}, we find a $\sigma$-nice embedding of $H$ in $T\langle\{v_1, \dots , v_{4n_H-3}\})$. Then, by repeated application of Lemma~\ref{lem:up-embed}, one can find an embedding of $A$ into $T\langle\{v_1, \dots , v_{4n_H+2|S^+|-3}\})$.
We have $4n_H+2|S^+|-3 = 4n - 2|S^+| -3 \leq 4n -2k -3 \leq m$ (since $k<n-1$).


\medskip

Let us now deal with the more complicated case when both $S^-$ and  $S^+$ are non-empty.
We first have to find an adequate root of $A$ to start.

Let ${\cal C}^{\downarrow}_r={\cal C}^{\downarrow}_r(H)$ and ${\cal C}^{\uparrow}_r={\cal C}^{\uparrow}_r(H)$. For a root $r$ of $H$, let $\beta^{\downarrow}_r=\sum_{C\in {\cal C}^{\downarrow}_r} (3|V(C)|-3) +  2|L^-(H)|$ and $\beta^{\uparrow}_r=\sum_{C\in {\cal C}^{\uparrow}_r} (3|V(C)|-3) +  2|L^+(H)|$. Let $r$ be a root that minimizes $\min\{\beta^{\downarrow}_r, \beta^{\uparrow}_r\}$. By directional duality, we may assume that $\beta^{\downarrow}_r = \min\{\beta^{\downarrow}_r, \beta^{\uparrow}_r\}$.
Therefore, $\beta^{\downarrow}_r \leq \frac{1}{2}\beta^{\downarrow}_r + \frac{1}{2}\beta^{\uparrow}_r = \frac{3}{2}n_H + k_H - \frac32$. We can assume that $r$ has in-degree $0$, since each in-neighbour $s$ of $r$ satisfies  $\beta^{\downarrow}_s \leq  \beta^{\downarrow}_r$.

Let us now detail our algorithm.
 Let $\ell = n_H - k_H -1+  \sum_{C\in {\cal C}^{\downarrow}_r} (|V(C)|-1) +  2|L^-(H)| + 2 |S^-|$. Note that $\ell \geq 1$ because $|S^-|\geq 1$.
 Let $p = \ell +n_H+k_H-1 + \gamma^{\downarrow}_r$.
 \begin{itemize}
 \item[] \underline{Phase 1}: We embed $H$ in $T\langle \{v_{\ell+1},\dots,v_{p}\}\rangle$ using the procedure of Lemma~\ref{lem:t_tree} for $H$. 
Note that in this procedure, we embed the equivalent arborescence $H'$ of $H$ which is bigger than $H$. Here we keep all vertices of $H'$ embedded until the end of Phase~2. 
 
 \item[] \underline{Phase 2}: While there is an unembedded node in $S^+$, let $i$ be the smallest integer such that $\phi^{-1}(v_i)$ has an unembedded out-neighbour in $S^+$, and take the first (i.e. with lowest index) out-neighbour of $v_i$ in $\{v_{i+1}, \dots , v_m\}$ that is not yet hit and assign it to an unembedded out-neighbour in $S^+$. 
 
 Unembed all vertices of $H'-H$.
 
 \item[] \underline{Phase 3}: While there is an unembedded node in $S^-$, let $i$ be the largest integer such that $\phi^{-1}(v_i)$ has an unembedded in-neighbour in $S^-$, and take the last (i.e. with highest index) in-neighbour of $v_i$ in $\{v_{1}, \dots , v_{i-1}\}$ that is not yet hit and assign it to an unembedded in-neighbour in $S^-$. 
\end{itemize}

\medskip

Let us prove that this algorithm embeds all nodes of $A$.
First, by Lemma~\ref{lem:t_tree}, all vertices of $H$ are embedded in Phase 1.

\smallskip
Let us now prove that all nodes of $S^+$ are embedded in Phase 2.
Let $B$ be the subtree of $A$ induced by $V(H)\cup S^+$ and let $B'$ be the out-arborescence obtained from $B$ by replacing $H$ by the equivalent arborescence $H'$. Observe that Phase 1 and Phase 2, may be seen as embedding $B'$ and extracting a copy of $B$ from $B'$ at the same time.
Let us show that our algorithm embeds the whole $B'$ (and thus the whole $B$) in Phases 1 and 2.
 The equivalent arborescence $H'$ has $n_H + \sum_{C\in {\cal C}^{\downarrow}_r} (|L^-(C)|-1)$ nodes. Thus $B'$ has $n_H + \sum_{C\in {\cal C}^{\downarrow}_r} (|L^-(C)|-1) + |S^+|$ nodes. 

All the leaves of $A$ are either in $S^-$ or in $S^+$, thus $k \le |S^-|+|S^+|$.
Therefore $$m\geq  \frac92n - \frac52k  - \frac{9}2 \geq \frac{9}{2}n_H +2|S^-|+2|S^+| - \frac{9}2 .$$
 For all $C \in  {\cal C}^{\downarrow}_r$, we have $|L^-(C)| \le |V(C)|$, thus 
 $$\sum_{C\in {\cal C}^{\downarrow}_r} (|V(C)| + 2|L^-(C)|-3) +  2|L^-(H)| \le \sum_{C\in {\cal C}^{\downarrow}_r} (3|V(C)|-3) +  2|L^-(H)| \le \frac{3n_H}2 + k_H - \frac32.$$ 
 The two previous equations yield
$$ m  \geq  3n_H - k_H - 3 +\sum_{C\in {\cal C}^{\downarrow}_r} (|V(C)| + 2|L^-(C)|-3) +  2|L^-(H)| +2|S^-|+2|S^+|~,$$
so $\displaystyle  m-\ell \geq   2n_H  + 2\sum_{C\in {\cal C}^{\downarrow}_r} (|L^-(C)|-1) + 2|S^+| - 2 \geq 2|B'|-2$.
  
Each time we embed a node of $B'$ during Phases 1 and 2, our procedure takes the first (i.e. with lowest index) out-neighbour of a vertex $v_i$ in $\{v_{i+1}, \dots , v_m\}$ that is not yet hit and assigns it to an unembedded out-neighbour of $\phi^{-1}(v_i)$. Therefore, at each step, $\phi$ is a $\sigma$-forward embedding of $B''$, the so far constructed sub-out-arborescence of $B'$, into $T\langle \{v_{\ell+1},...,v_{\ell+2|B''| -2}\rangle$.
Thus, as in Lemma~\ref{lem:up-embed}, every vertex $v_i$ has an out-neighbour in $\{v_{i+1},...,v_{\ell+2|B''|}\} \setminus \phi(B'')$ and the procedure can continue. Hence, $B'$ can be embedded into $T\langle \{v_{\ell+1},...,v_m\}\rangle$.
 
\medskip

Assume for a contradiction that the algorithm fails in Phase 3, which means a node $a$ in $S^-$ is not embedded.
 We can choose such a node $a$ whose out-neighbour $b$ is embedded. Let $v_i$ be the image of $b$. 
 Observe that $b$ is in $S^-\cup V(H)$, so it has been embedded in Phase 1 or 3, and necessarily must be in $\{v_1, \dots , v_{p}\}$.

Consider the moment when we try to embed the in-neighbours of $b$ during Phase 3.
Let $\hit$ be the number of vertices of $\{v_1, \dots , v_{i-1}\}$ that are hit at this moment.
Since $a$ is not embedded, we have $\hit \geq |N^-(v_i) \cap \{v_1, \dots , v_{i-1}\}| - |N^-_A(b)| +1$.
By (M2),  $|N^-(v_i) \cap \{v_1, \dots , v_{i-1}\}| \geq \frac{i-1}{2}$. So
\begin{equation}\label{eq:hit}
\hit  \geq  \frac{i-1}{2} - |N^-_A(b)| +1.
\end{equation}

Let us give some upper bounds on $\hit$. Let $O_{<i}$ be the set of out-leaves of $H$ embedded at some $v_j$ with $\ell+1\leq j < i$ and let $O_{\ge i}$ be the set of out-leaves of $H$ embedded at some $v_j$ with $i\leq j \leq p$. 
We have $\hit = \hit_{2} + \hit_{3}$, where $\hit_{2}$ (resp.~$\hit_3$) is the number of vertices of $\{v_1, \dots , v_{i-1}\}$ that are hit in Phase 1 and 2 (resp.~Phase 3 until the considered moment). 

At the considered moment, the algorithm has yet to embed the in-neighbours of $b$ and the in-neighbours in $S^-$ of the nodes embedded at each $v_j$ for $j < i$. 
  As noted previously, each out-leaf of $H$ has an in-neighbour in $S^-$. Therefore, each out-leaf of $O_{< i}$ has an in-neighbour in $S^-$ that is not yet embedded. Hence 
  \begin{equation}\label{eq:hit3}
  \hit_3 \leq |S^-| - |O_{<i}| - |N^-_A(b)|.
 \end{equation}

Consider now the embedding of $H$. It is made using the procedure of Lemma~\ref{lem:t_tree}, which applies the procedure of Theorem~\ref{t_arbo} on $H'$. Let $k_{H'}$ be the number of out-leaves of $H'$. All the out-leaves of $H$ are also out-leaves in $H'$. 
Moreover, by Observation~\ref{obs:bij}, we can map each failed vertex to an out-leaf of $H'$ whose image precedes the failed vertex in $\sigma$.
But there are $k_{H'}-1$ failed vertices, therefore each out-leaf of $O_{\ge i}$ except at most one corresponds to two vertices $v_l$ with $l \ge i$.  Thus, there are at least $2|O_{\ge i}| -1$ vertices in $\{v_i, \dots , v_{p}\}$. Moreover, $\sum_{C\in {\cal C}^{\downarrow}_r} (|L(C)|-1)$ vertices of $H'-H$ are unembedded at the end of Phase 2 and were neither out-leaves of $H$ nor failed vertices.
Hence, 
\begin{equation}\label{eq:hit2}
\hit_2 \leq p-\ell - \sum_{C\in {\cal C}^{\downarrow}_r} (|L(C)|-1) -  2|O_{\ge i}| + 1 = n_H + k_H +  \sum_{C\in {\cal C}^{\downarrow}_r} (|V(C)| - 1) - 2|O_{\ge i}| .
 \end{equation}

Since all vertices hit in Phases 1 and 2 are in $\{v_{\ell+1}, \dots , v_m\}$, we trivially have
\begin{equation}\label{eq:hit2bis}
\hit_2 \leq i -\ell -1
\end{equation}

Summing 2 Eq.~\eqref{eq:hit} + 2 Eq.~\eqref{eq:hit3} +  Eq.~\eqref{eq:hit2} + Eq.~\eqref{eq:hit2bis} yields:
\begin{equation*} 
\begin{split}
\ell &\le n_H + k_H +  \sum_{C\in {\cal C}^{\downarrow}_r} (|V(C)|-1)) - 2|O_{\ge i}|+ 2|S^-| - 2|O_{<i}| - 2 \\
&\le n_H + k_H -2 L^+(H) +  \sum_{C\in {\cal C}^{\downarrow}_r} (|V(C)|-1))+ 2|S^-| - 2 \\
&= n_H - k_H -2 +2(L^-(H)) +  \sum_{C\in {\cal C}^{\downarrow}_r} (|V(C)|-1))+ 2|S^-|\\
&= \ell - 1,
\end{split}
\end{equation*}
a contradiction.
\end{proof}

\section{Unavoidability of trees with very few leaves}\label{sec:veryfew}

The aim of this section it to prove Theorem~\ref{thm:veryfew} which states that every oriented tree with $n$ nodes and $k$ leaves is $(\val)$-unavoidable. 
Since the result holds for paths, we shall only consider trees that are not paths.

Let $A$ be a tree which is not a path.
A {\bf branch-node} of $A$ is a node with degree at least $3$ and a {\bf flat node} is a node with degree $2$.
A {\bf segment} in $A$ is a subpath whose origin is a branch-node, whose terminus is either a branch-node or a leaf, and whose internal nodes are flat nodes.
If its terminus is a branch-vertex, then the segment is an {\bf inner segment}; otherwise it is an {\bf outer segment}. 
The {\bf opposite} of an inner segment $S$, denoted by $\overline{S}$, is the inner segment with origin the terminus of $S$ and terminus the origin of $S$. 

A {\bf stub} is a tree such that :
\begin{itemize}
\item[(i)] every inner segment has at most three blocks; moreover, if it has three blocks then its first and third block have length $1$, and if it has two blocks then one of them has length $1$.
\item[(ii)] every outer segment has length $1$.
\end{itemize}

Our proof of  Theorem~\ref{thm:veryfew}  involves two steps.
We first prove the following lemma, which shows that it is sufficient to concentrate on stubs.

\begin{lemma}\label{lem:reduc}
If there exists a function $f$ such that every stub of order $n$ and $k\geq 6$ leaves is $(n+f(k))$-unavoidable, then
every tree of order $n$ with $k\geq 3$ leaves is $(n+\max\{f(2k-2b)+b \mid 0\leq b\leq k-3\})$-unavoidable. 
\end{lemma}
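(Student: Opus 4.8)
The statement of Lemma~\ref{lem:reduc} is a reduction: given a tree $A$ of order $n$ with $k\geq 3$ leaves, I want to produce a stub $A'$ and a nonnegative integer $b$ so that (a) embedding $A'$ in a tournament suffices to embed $A$, and (b) $|V(A')| + b$ and the number of leaves of $A'$ are controlled in terms of $n,k,b$. The natural choice is to leave the \emph{leaves of $A$ untouched} and to \emph{shorten the inner and outer segments} of $A$, replacing each long segment by a short one (of one, two, or three blocks with the length restrictions in the definition of stub), in such a way that the resulting stub $A'$ still "forces" $A$ inside any tournament of the relevant size. The quantity $b$ should measure the total amount by which we shortened: if $A'$ has $n'$ nodes then $b = n - n'$, and I expect to need $0\le b\le k-3$. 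The count $2k-2b$ for the number of leaves of $A'$ is the key arithmetic target: since every outer segment of a stub has length~$1$, the leaves of $A'$ are exactly the leaves of $A$ plus the "extra" branch structure, and a standard double-counting (a tree with $k$ leaves has at most $k-2$ branch-nodes, each of degree $\ge 3$) shows that after contracting to a stub the number of leaves is at most $2k-2$, decreasing by $2$ for each unit we are forced to shorten.

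First I would set up the contraction/extension machinery precisely. The core subclaim is an \emph{extension lemma for segments}: if $S$ is a segment of $A$ with some number of blocks and block-lengths, and $T$ is a tournament, then an embedding of the "reduced" segment $S'$ (obtained by the stub rules) together with the two endpoints can be extended to an embedding of $S$ provided $T$ has enough spare vertices, where the spare count is exactly the length difference $|S|-|S'|$. This is essentially a local version of R\'edei/median-order path-embedding arguments: a directed path of given length can be routed through a tournament greedily using (M2), and the constraint on the number of blocks in a stub is precisely what is needed so that every "shape" of segment of $A$ reduces to an admissible stub shape while preserving the ability to re-insert the removed flat nodes. I would prove this segment-level statement first, then assemble: root $A$ somewhere, process its segments one at a time, and argue that the global embedding of the stub $A'$ plus $b$ extra vertices of the tournament extends to $A$.

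Second I would do the bookkeeping. Let $A'$ be the stub obtained from $A$ by reducing each segment. Then $b := n - |V(A')|$ is the total number of flat nodes deleted; I must check $0\le b\le k-3$. The lower bound is trivial. For the upper bound: each inner segment of $A$ contributes at most some bounded number of surviving flat nodes to $A'$ and the number of inner segments is at most $2(k-2)-1$ or so (branch-nodes $\le k-2$), while $A$ has $n-k$ internal nodes; chasing these inequalities, together with $k\ge 3$, should give $b\le k-3$. Then the number of leaves of $A'$: outer segments all have length $1$, so $L(A')$ consists of the original $k$ leaves of $A$ together with at most one "new" leaf per... — more carefully, I expect $|L(A')| \le 2k - 2b$ by relating the leaf count of a stub to its branch-nodes and the fact that shortening an inner segment past the allowed number of blocks is impossible, so each unit of $b$ "costs" two potential leaves. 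Finally, apply the hypothesis: $A'$ is a stub of order $n' = n-b$ with $k' := |L(A')| \le 2k-2b$ leaves, so it is $(n' + f(k'))$-unavoidable, hence $(n-b+f(2k-2b))$-unavoidable (using monotonicity of $f$, which I'd assume or note is WLOG), and the segment-extension lemma upgrades a copy of $A'$ in a tournament on $n-b+f(2k-2b) + b = n + f(2k-2b)+b$ vertices to a copy of $A$. Taking the max over the relevant range of $b$ gives the stated bound.

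**Main obstacle.** The delicate point is the segment-extension lemma in the precise form needed: it is not enough to re-insert flat nodes "somewhere in the tournament", they must be inserted \emph{between} the already-placed images of the segment endpoints while respecting the orientations of the blocks, and the spare-vertex accounting has to be tight (only $|S|-|S'|$ extra vertices, not more), because the $b$ in the final formula is exactly this difference and any slack would ruin the bound. Handling segments with two or three blocks (where the reduction must preserve the pattern of block-lengths $1$ at the ends) and making sure the reductions of different segments can be carried out \emph{simultaneously} without interfering — i.e. that the extra tournament vertices used by different segments are disjoint — is where the real work lies; I would likely isolate this as a separate technical lemma about embedding a single segment of prescribed shape between two fixed vertices of a tournament using a local median order, in the spirit of Lemma~\ref{lem:up-embed}.
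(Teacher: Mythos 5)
Your plan diverges from the paper's argument in two ways that matter, and one of them is a genuine gap.

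\paragraph{What the paper actually does.}
The paper does not shorten segments and re-insert flat nodes between two already-placed endpoints. Instead it replaces each long outer segment and each ``breakable'' inner segment by a \emph{fork} (a short tree with \emph{two} alternative endpoints, the ``points''), and at each breakable inner segment it inserts \emph{two} forks, one at each end, thereby disconnecting the tree. The result $B$ is a \emph{forest} of $b+1$ stubs, where $b$ is the number of breakable inner segments. Each stub component is embedded via the hypothesis, and then the missing directed-ish middle parts of the broken segments are re-routed using Thomason's theorems (Theorems~\ref{thm:dep2} and~\ref{thm:2bouts}), which produce an oriented path of a prescribed type whose origin (and, for inner segments, terminus) lies in a prescribed \emph{set of two} vertices. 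The two points of a fork are exactly that two-element set. Here $b\le k-3$ because $A$ has at most $k-2$ branch-nodes and hence at most $k-3$ inner segments; and the per-component leaf bound $2k-2b$ comes from the fact that every component of $B$ has at least $6$ leaves (up to the outer segments of length~$1$), so no single component can hog more than $2k-2b$ of them.

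\paragraph{Where your proposal breaks.}
Your ``segment-extension lemma'' asks to embed an oriented path of prescribed type between two \emph{fixed} vertices of a tournament, given enough spare vertices. This is false in general and is not something a greedy local-median-order argument in the spirit of Lemma~\ref{lem:up-embed} can deliver: that lemma only controls \emph{one} end of the path (the out-leaf you are attaching), and fixing \emph{both} ends of an oriented path is a much stronger requirement. Thomason's Theorems~\ref{thm:dep2} and~\ref{thm:2bouts} are exactly the tool for prescribing ends of a path, and they prescribe each end only up to a set of size $2$, not to a single vertex; the paper's fork construction exists precisely to turn one fixed attachment vertex into a choice among two candidate vertices so that these theorems apply. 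If you insist on fixed endpoints, the lemma you need has no proof (and counterexamples are easy: a $2$-out-path from $u$ to $v$ requires $u$ to have an out-neighbour in $N^-(v)$, which need not exist).

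\paragraph{A second, independent problem.}
You take $b:=n-|V(A')|$, the total number of deleted flat nodes. That quantity is unbounded in terms of $k$: a path-like segment of length $1000$ with $k=6$ leaves elsewhere already forces $b$ in the hundreds, so your target inequality $b\le k-3$ is simply false under your definition. The $b$ in the lemma must be a combinatorial count tied to the branch structure, namely the number of inner segments you break; that is $\le k-3$ for every tree with $k$ leaves, regardless of segment lengths. Relatedly, $|B|$ is not $n-b$ in the paper; forks can make $B$ slightly \emph{larger} than $A$ (by up to $b$), and the $+b$ in the final bound accounts for the extra vertices of $T$ consumed when re-inserting the remainders of the broken segments, not for vertices you deleted.

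In short: the arithmetic shape $f(2k-2b)+b$ with $0\le b\le k-3$ that you correctly anticipate requires breaking $A$ into a forest at breakable inner segments with forks at the break points and invoking Thomason's two-endpoint path theorems; your plan of shrinking segments in place and proving a fixed-endpoint extension lemma cannot be repaired without effectively rediscovering that machinery.
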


We then prove the following result on the unavoidability of stubs.

\begin{lemma}\label{lem:stub}
Every stub with $n$ nodes and $k\geq 6$ leaves is $(\vstub)$-unavoidable.
\end{lemma}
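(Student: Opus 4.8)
The plan is to embed the stub $A$ into a local median order $\sigma=(v_1,\dots,v_m)$ of the tournament $T$ (of order $\vstub$) by processing the \emph{combinatorial skeleton} of $A$ and filling in the long directed blocks of its inner segments on the fly. Since $A$ is not a path, contracting every inner segment of $A$ to a single edge produces a tree $R$ with no flat node; hence $R$ has at most $k-2$ branch-nodes, at most $k-3$ inner edges, and $k$ pendant leaf-edges. Thus, apart from the long monotone blocks sitting inside the at most $k-3$ inner segments, $A$ has only $O(k)$ structural features. Following the pattern of Sections~\ref{sec:few} and~\ref{sec:many}, I would first root $A$ at a branch-node that minimises a suitable up/down imbalance parameter of $R$, and then argue by rerooting that this root $r$ may be taken to have in-degree $0$; by directional duality this settles which of the upward/downward sides we build from first.

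The embedding is built greedily, in the spirit of the proofs of Theorem~\ref{t_arbo}, Lemma~\ref{lem:up-embed} and Lemma~\ref{lem:islands}, by a depth-first traversal of $R$ from $r$. When the traversal reaches a branch-node $w$ whose incoming inner segment has already been routed (so that the vertex $\phi(w)$ is determined), we place $w$ together with its cluster of pendant leaves, and the first arc of each of its outgoing segments, into a short window of $\sigma$ centred at $\phi(w)$ using Lemma~\ref{lem:islands}, with the already-committed vertices lying in that window playing the role of the forbidden set $F$. We then route each outgoing inner segment by laying its at most three monotone blocks, one after another, on stretches of consecutive vertices of $\sigma$ in the direction matching each block --- such a stretch is a directed path by (M2), so it costs nothing beyond its own length --- the at most two direction-changing nodes per segment being absorbed into the islands at its two ends. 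Throughout, one maintains the invariant that the partial embedding is $\sigma$-nice, indeed $\sigma$-$F$-nice on each relevant window with $F$ the committed vertices, so that, exactly as in Lemmas~\ref{lem:up-embed} and~\ref{lem:islands}, the greedy steps never get stuck and the in-leaves --- handled inside the applications of Lemma~\ref{lem:islands} through the directional duality built into it --- get embedded as well.

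For the count: each of the at most $k-2$ branch-nodes is handled by one application of Lemma~\ref{lem:islands} to a tree of order $O(k)$ with $O(k)$ forbidden vertices, wasting $O(k)$ vertices; each of the at most $k-3$ inner segments wastes only $O(1)$ vertices at its direction changes on top of the lengths of its blocks, which are used exactly; and the $k$ outer segments cost a constant each. Hence the number of vertices of $T$ not in the image of $\phi$ is $O(k^2)$, and a precise tally --- tracking the $\frac12|I|$ slacks of the $\sigma$-nice conditions, the $4N+4f-3$ bound of Lemma~\ref{lem:islands}, and the saving from the optimal choice of root --- is what turns $O(k^2)$ into exactly $36k^2-140k+124$.

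The delicate part, where the argument needs genuine work, is the interface between the ``free'' long blocks and the island-placed branch clusters. One must route an inner segment so that, once the cluster at its first endpoint is fixed, its far end lands in fresh territory where the window of the next cluster can be opened without meeting any committed vertex; one must send the several inner segments leaving a common branch-node off in a collision-free way, even though (M2) only guarantees domination of the immediate successor; and one must keep all the $\sigma$-$F$-nice slacks strictly positive through all $O(k)$ stages, so that the final out-leaf and in-leaf fills still succeed. Getting the order of operations right --- commit an island, let the inner segment leaving it dictate where the next island goes, open that island while avoiding everything already committed --- and checking that the slacks survive each stage is the crux; once this scheme is in place the bound $\vstub$ is a matter of careful bookkeeping.
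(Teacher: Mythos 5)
The high-level decomposition you describe---cut the long directed blocks out of the inner segments, handle what remains around the branch-nodes as small islands via Lemma~\ref{lem:islands}, account for the $O(k)$ structural overhead---is indeed the shape of the paper's argument. But your proposal stops exactly where the work begins. You explicitly defer ``the delicate part'': how to route an inner segment so that its far end lands in a region where the next island can be opened without conflicts, how to send several outgoing segments off collision-free, and how to keep the $\sigma$-$F$-nice slacks positive across $O(k)$ stages. These are not details to be tallied afterwards; they are the content of the proof, and what you have written does not supply the mechanism.

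Concretely, two pieces of machinery are missing, and neither is replaceable by ``careful bookkeeping.'' First, there is no way to control, by a naive greedy or DFS extension of a directed path, \emph{where} the far endpoint of an inner segment lands in $\sigma$: (M2) guarantees an out-neighbour among the next few vertices, but if you want the terminus of the segment to fall inside the next island's window you need a targeting device. The paper reserves pairwise-disjoint intervals $R_p$ for the islands \emph{in advance}, processes them in a BFS order so that parents precede children, and uses the separate Lemma~\ref{lem:lapin2} (a supply of internally disjoint directed $2$-out-paths from a given vertex to a prescribed terminal window) to ``jump'' from the end of one island's interval to the middle of the next. That jump is the device your sketch lacks. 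Second, your plan would put already-routed segment vertices into the forbidden set $F$ when opening the next window, but then $|F|$ can grow with the segment lengths, and Lemma~\ref{lem:islands} costs $4|F|$ extra vertices per invocation---this blows past $O(k^2)$ waste. The paper avoids this by keeping $F$ of size at most $k-3$ (one intermediate vertex per inter-island $2$-path) precisely because the island intervals are disjoint by construction, so already-embedded material from other islands never enters a window. Your idea of laying monotone blocks on stretches of consecutive $\sigma$-vertices has a further issue at direction changes: consecutive blocks of opposite orientation cannot both use contiguous fresh ranges anchored at the same turning vertex without overlapping or backtracking onto used vertices, which is again why the paper removes the long blocks entirely and reconnects with a controlled jump rather than laying them in place. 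In short: right skeleton, but the targeting lemma, the disjoint-region pre-allocation, the BFS (not DFS) processing order, and the small-$F$ discipline are all essential and absent, so the bound does not follow from what you have.
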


Theorem~\ref{thm:veryfew} follows directly from Lemmas~\ref{lem:reduc} and \ref{lem:stub}.

\subsection{Reducing to stubs}

\subsubsection{Toolbox}

Let $P=(x_1,\dots ,x_n)$ be a 
path. We say that $x_1$ is the {\bf origin} of $P$ and $x_n$ is the {\bf
terminus } of $P$; $x_1$ and $x_n$ are the {\bf ends} of $P$. 
If $x_1\ra x_2$, $P$ is an {\bf out-path}, 
otherwise $P$ is an {\bf in-path}. The {\bf directed out-path} of order $n$ is 
the path $P=(x_1,\dots ,x_n)$ in which $x_i\ra x_{i+1}$ for all $i \in [n-1]$;
the dual notion is that of a {\bf directed in-path}. 
The {\bf length} of a path is its number of edges.
An {\bf $\ell$-out-path} (resp.~{\bf $\ell$-in-path}) is an out-path (resp.~in-path) of length $\ell$.
We denote 
the path $(x_2,\dots ,x_n)$ by $^*P$.

The {\bf blocks} of $P$ are the maximal directed subpaths of $P$. 
We enumerate the blocks of $P$ from the origin to the terminus. The first block 
of $P$ is denoted by $B_1(P)$ and its length by  $b_1(P)$. Likewise, the $i^{\rm 
th}$ block of $P$ is denoted by $B_i(P)$ and its length  by $b_i(P)$. The path $P$ is 
totally described by the signed sequence $sgn(P)(b_1(P),b_2(P),\dots
,b_k(P))$, called its {\bf type},  
where $k$ is the number of blocks of $P$ and $sgn(P)=+$ if $P$ is an out-path 
and  
$sgn(P)=-$ if $P$ is an in-path.

\medskip

Thomason~\cite{Tho86} proved the following two theorems.  See also \cite{HaTh00} for a short proof of the first one.

\begin{theorem}[Thomason \cite{Tho86}]\label{thm:dep2}
Let $P$ be an oriented path of order $n$.
Let $T$ be a tournament of order $n+1$ and $X$ a set of $b_1(P)+1$ vertices.
There exists a copy of $P$ in $T$ with origin in $X$.
\end{theorem}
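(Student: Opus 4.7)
The plan is to prove the statement by induction on the order $n$ of $P$. The base case $n=1$ is immediate: $|X| = b_1(P)+1 = 1$, so we map the single node of $P$ to the unique vertex of $X$. For the inductive step, by directional duality we may assume that $P$ is an out-path with first block of length $b := b_1(P)$, so that $|X| = b+1$.

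I would first handle the single-block case $P = \vec{P}_n$, where $b = n-1$, $|X| = n$, and $|V(T) \setminus X| = 1$. Taking a median order $(v_1, \ldots, v_{n+1})$ of $T$, Lemma~\ref{lem:median}(M2) yields the two directed subpaths $(v_1, \ldots, v_n)$ and $(v_2, \ldots, v_{n+1})$, each a copy of $\vec{P}_n$; since at most one of $v_1, v_2$ lies outside $X$, at least one of these two copies has origin in $X$. For $P$ with at least two blocks, the plan is to peel off $x_1$ and apply induction to $P' := {}^*P$ (of order $n-1$) inside a sub-tournament $T - v$ for a well-chosen $v \in V(T) \setminus X$. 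If $b \ge 2$, then $P'$ is an out-path with $b_1(P') = b - 1$, so I would select $X' \subseteq X$ with $|X'| = b$; the induction hypothesis yields a copy $\phi'$ of $P'$ in $T - v$ with origin $y \in X'$, and I would complete the embedding by locating an unused $z \in X$ with $z \rightarrow y$ and setting $\phi(x_1) := z$. If $b = 1$, then $|X| = 2$ spans an arc $z \rightarrow y$ in $T$; I would set $\phi(x_1) := z$ and $\phi(x_2) := y$, and apply induction to the remaining in-path of order $n-1$ inside $T - z$ with $y$ required to be its origin.

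The main obstacle will be to ensure that the recursive call returns an origin compatible with the prepending step: the induction hypothesis only places the origin of $P'$ inside a prescribed set $X'$, not at a specific target vertex. The heart of the argument is therefore to choose $v$ and $X'$ so that, whatever origin $y$ is returned, some $z \in X$ remains available and dominates $y$. For $b \ge 2$, the slack $|X| = b+1 \ge 3$ should suffice: if the naive choice $X' = X \setminus \{z_0\}$ fails because $z_0$ does not dominate $y$, R\'edei's theorem applied to $T\langle X\rangle$ guarantees some other vertex of $X$ dominates $y$ and can serve as $z$. The borderline case $b = 1$ is trickier because pinning the origin of the recursive embedding exactly at $y$ (rather than merely inside $X'$) is stronger than what the induction hypothesis directly provides; this will likely require a separate direct argument on the two-vertex tournament $T\langle X\rangle$, combined with iteration of the extremal step to progressively adjust which vertex of the remaining tournament plays the role of the pinned origin.
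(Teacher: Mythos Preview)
The paper does not prove Theorem~\ref{thm:dep2}; it is quoted from Thomason~\cite{Tho86}, with a pointer to a short proof in~\cite{HaTh00}. So there is nothing in the paper to compare against, and your attempt has to stand on its own.

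It does not: the inductive step for $b\ge 2$ has a real gap. You delete $v\notin X$ and embed $P'={}^*P$ in $T-v$ with origin $y\in X'=X\setminus\{z_0\}$. But $\phi'(P')$ then occupies $n-1$ of the $n$ vertices of $T-v$, and you have no control over which single vertex $w$ of $T-v$ is left unused. In particular your ``reserved'' vertex $z_0$ lies in $T-v$ and may very well be hit by $\phi'$. Since the only unused vertices of $T$ are $v\notin X$ and $w$, the prepended vertex is forced to be $w$, and you would need simultaneously $w\in X$ and $w\to y$ --- neither of which you have arranged. Your R\'edei repair (``some vertex of $X$ dominates $y$'') does not help, because that dominating vertex is almost certainly among the $n-1$ vertices already used by $\phi'$. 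The natural fix is to delete $z_0$ itself rather than an outside vertex $v$, so that $z_0$ is certainly free; but then, to guarantee $z_0\to y$ for \emph{every} $y\in X'$ the recursion might return, you need $z_0$ to dominate all of $X\setminus\{z_0\}$, i.e.\ $T\langle X\rangle$ must have a dominating vertex --- and already a directed $3$-cycle on $X$ (with $b=2$) shows this can fail.

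The $b=1$ case has the same defect in sharper form, as you yourself flag: pinning the recursive origin to a single prescribed vertex is strictly stronger than the inductive hypothesis, and ``iteration of the extremal step'' is a hope, not an argument. The published proofs avoid this trap by committing to $\phi(x_1)=z$ \emph{before} recursing and taking the new prescribed set $X'$ inside $N^+_T(z)$ (not inside $X$); the residual situation in which every $z\in X$ has $d^+_T(z)<b$ then forces strong structural constraints on $X$ that are exploited separately. Your plan instead tries to recover $\phi(x_1)$ \emph{after} the recursion returns, and a single spare vertex is simply not enough slack for that.
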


\begin{theorem}[Thomason, \cite{Tho86}]\label{thm:2bouts}
Let $P$ be a non-directed path of order $n$ with first and last block of
length $1$. Let $T$ be a tournament of order $n+2$ and $X$ and $Y$ be two
disjoint subsets of $T$ of order at least $2$. 

If $P\neq \pm (1,1,1)$, then there is a copy of $P$ in
$T$ with origin in $X$ and terminus in $Y$.
\end{theorem}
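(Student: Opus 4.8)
The plan is to prove Theorem~\ref{thm:2bouts} by induction on $n = |P|$. By directional duality (reverse every arc of $T$ and exchange $X$ and $Y$) we may assume that $P$ is an out-path, so its type is $+(1,b_2,\dots,b_{k-1},1)$ with $k\geq 2$ blocks. The base cases are $n=3$ (type $+(1,1)$) and $n=5$ (types $+(1,2,1)$ and $+(1,1,1,1)$): indeed, apart from the excluded $\pm(1,1,1)$ there is no path of order $4$ with first and last block of length $1$, so the induction below never produces an order-$4$ instance. I would settle the base cases by a short direct case analysis, using that $|T|=n+2$ leaves at least one (resp.\ three) vertices outside $X\cup Y$ and that a degree count in a tournament of order at most $7$ yields the required common in- or out-neighbours (the tight instances being the regular tournaments on $5$ and $7$ vertices).

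For the inductive step ($n\geq 6$) the idea is to peel off one of the two leaf-ends of $P$, each of which is a block of length $1$, and to reduce to a smaller instance of the same statement. Removing the origin $x_1$ gives the path $^*P$ of order $n-1$ and type $-(b_2,\dots,b_{k-1},1)$; its last block still has length $1$, and its first block has length $b_2$. Hence, if $b_2=1$, then $^*P$ again has first and last block of length $1$, and $^*P\neq\pm(1,1,1)$ since $n-1\geq 5>4$. In that case I would first show that some $u\in X$ satisfies $|N^+_T(u)\setminus(Y\cup\{u\})|\geq 2$, set $\phi(x_1)=u$, $T'=T-u$, and $X'=N^+_T(u)\setminus(Y\cup\{u\})$, then apply the induction hypothesis to $^*P$ in $T'$ with the two disjoint sets $X'$ and $Y$ (both of size $\geq 2$): since $X'\subseteq N^+(u)$ we have $u\to\phi(x_2)$, so the embedding extends to $P$. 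Symmetrically, if $b_{k-1}=1$ one peels off the terminus $x_n$ and uses an in-neighbourhood of a suitably chosen vertex of $Y$.

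The remaining configuration is $b_2\geq 2$ and $b_{k-1}\geq 2$ (which forces $k\geq 3$). Here I would split $P$ at a turning node $x_m$ chosen so that both halves are ``balanced'' enough, writing $P=P_1\cup P_2$ with $P_1$ and $P_2$ sharing only $x_m$, partition $V(T)$ into sets carrying $P_1$ and $P_2$ with $X$ in the first and $Y$ in the second, embed $P_2$ (read from $x_n$, whose first block still has length $1$) by Theorem~\ref{thm:dep2} so that $x_n$ lands in $Y$, and then embed $P_1$ into the remaining vertices together with the image of $x_m$, again via Theorem~\ref{thm:dep2} (and, if necessary, one further application of the induction hypothesis to $P_1-x_m$ to honour the forced position of $x_m$); the slack $|T|=n+2$ is exactly enough for two applications of Theorem~\ref{thm:dep2} plus the one shared vertex, using $|X|+|Y|\leq n+2$. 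The step I expect to be the real obstacle is controlling the ``link'' vertex: the greedy choice of $\phi(x_1)$ (or $\phi(x_n)$) fails only when every vertex of $X$ dominates all but at most one vertex outside $Y$ (and the dual statement for $Y$), and finding an admissible turning node and partition in the last case can likewise fail for very unbalanced block structures; ruling out these extremal tournament configurations requires a finer counting argument from $2\leq|X|$, $2\leq|Y|$ and $|X|+|Y|\leq n+2$, and this — together with the finite base-case verification — is where the work concentrates, the rest being routine bookkeeping of block lengths, signs and orientations.
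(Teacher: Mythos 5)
The paper does not prove this statement: it is quoted verbatim from Thomason~\cite{Tho86} (with \cite{HaTh00} as the modern reference for this circle of results), so there is no in-paper proof to compare against. Judged on its own, your outline reproduces the standard opening moves (duality, peeling a length-$1$ end block, splitting at a turning node) but leaves unresolved exactly the cases that make Thomason's proof long, and two of your reductions do not go through as stated. First, the peeling step: when $b_2=1$ you need some $u\in X$ with at least two out-neighbours outside $Y$, and this can genuinely fail --- e.g.\ if $V(T)=X\cup Y$ with $|X|=2$, every vertex of $X$ has at most one out-neighbour outside $Y$ (note the failure condition is ``dominates \emph{at most} one vertex outside $Y$'', not ``all but at most one'' as you wrote). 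If moreover $b_{k-1}\geq 2$, the symmetric peeling at the $Y$-end is unavailable because the reduced path no longer has first block of length $1$, so your case split has a hole that cannot be ``ruled out'': these tournaments exist and the theorem still holds for them, so a different embedding is required there, not a counting contradiction.

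Second, and more seriously, the case $b_2\geq 2$ and $b_{k-1}\geq 2$ has no working mechanism. After you embed $P_2$ via Theorem~\ref{thm:dep2}, the image of the turning node $x_m$ is a single fixed vertex, and you must then embed $P_1$ with origin in $X$ \emph{and} terminus at that one prescribed vertex. Theorem~\ref{thm:dep2} controls only one end of a path, and the induction hypothesis (Theorem~\ref{thm:2bouts} itself) requires both end-sets to have size at least $2$ and both end-blocks of $P_1$ to have length $1$ --- but the block of $P_1$ at $x_m$ is a piece of a block of $P$ of uncontrolled length, and a set of size $1$ is not admissible. Embedding $P_1$ first runs into the same obstruction at the shared vertex. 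Controlling both ends simultaneously is precisely the content of the theorem being proved, so this step is circular as sketched. Together with the deferred finite verification of the $n=3$ and $n=5$ base cases (where the regular tournaments on $5$ and $7$ vertices are tight), the proposal is an honest plan of attack but not a proof; the missing cases are where essentially all of the work in \cite{Tho86} and \cite{HaTh00} lies.
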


The idea to find a tree $A$ in a tournament $T$ is to break some segments $S$, that is to remove the arcs and internal vertices of some subpaths $R_S$ satisfying the hypothesis of Theorem~\ref{thm:2bouts} if $S$ is an inner segment and of Theorem~\ref{thm:dep2} if $S$ is an outer segment.
Then we find the resulting forest $A'$ in $T$. Finally, we reconstruct the broken segment using  
Theorems~\ref{thm:dep2} and \ref{thm:2bouts}.
However, those theorems prescribe the origin and terminus not in a vertex but in a set of two vertices.
Therefore, we need to have a little more than the paths of $S-R_S$ to reconstruct $S$. This is captured by the notion of fork.

The {\bf fork} $F$ of type $\tau=sgn(F)(b_1(F),b_2(F),\cdots
,b_k(F))$ is
the tree with vertex set $\{x_1, \dots ,x_{n-2}, p_1,
p_2\}$ such that $(x_1, \dots 
,x_{n-2}, p_1)$ and $(x_1, \dots ,x_{n-2}, p_2)$ are paths 
of type $\tau$.
The vertex $x_1$ is the {\bf origin} and $p_1$ and $p_2$
are the {\bf points} of the fork.

Let $P$ be a path of length at least $2$. Its {\bf stump type} is
\begin{itemize}
\item[(i)] $sgn(P)(b_1(P)-1)$ if $b_1(P)\geq 2$, and the type of $P$ is  not $+(p,1,q)$ (with $q\geq 2$),
\item[(ii)] $sgn(P)(b_1(P))$ if $P$ is of type  $+(p,1,q)$ (with $q\geq 2$),
\item[(iii)] $sgn(P) (1)$ if $b_1(P)=b_2(P)=1$ and $P$ is not of type $\pm (1,1,1,p)$ (with $p\geq 2$)
or $+(1,1,1,1,1)$,
\item[(iv)]  $sgn(P) (1,1)$ if $P$ is of type $\pm (1,1,1,p)$ (with $p\geq 2$)
or $+(1,1,1,1,1)$,
\item[(v)]  $sgn(P) (1,b_2(P)-1)$ if $b_1(P)=1$ and $b_2(P)\geq 2$.
\end{itemize}

\subsubsection{Proof of Lemma~\ref{lem:reduc}}

Let $A$ be a tree of order $n$ with $k$ leaves.
An inner segment is {\bf unbreakable} if either it is directed, or it has two blocks at least one of which has length $1$, or it has three blocks with the first and the last of length $1$. Otherwise it is {\bf breakable}. 
We construct an oriented forest $B$ from $A$ by applying the following two operations.
\begin{itemize}
\item[1.] For each outer segment $S$ of $A$ of length at least $2$ and with origin $x$, we replace $S$ by a fork $F_S$ with origin $x$ whose type is the stump type of $S$.
The {\bf remainder} of $S$, denoted by $R_S$, is the path obtained from $S$ by removing the $|F_S|-1$ first vertices of $S$.
\item[2.] For each breakable inner segment $S$ with origin $x$ and terminus $y$, replace $S$ by a fork $F_{S}$ with origin $x$ whose type is the stump type of $S$, and a fork $F_{\overline{S}}$ with origin $y$ whose type is the stump type of $\overline{S}$. The {\bf remainder} of $S$, denoted by $R_S$, is the path obtained from $S$ by removing the $|F_S|-2$ first and the $|F_{\overline{S}}| -2$ last vertices of $S$.
\end{itemize}

Let $b$ be the number of breakable inner segments and $a$ be the number of outer segments of length $1$. Note that $0\leq b \leq k-3$. The forest $B$ has $b+1$ components.
By construction, $B$ has at most $2k+ 4b-a$ leaves, because Operation $1$ replace the leaf of each outer segment of length at least $2$ by the two leaves of a fork, and Operation 2 create four leaves when a breakable inner segment is broken by Operation  2.
Moreover $|B| \leq |A|+b$, because $|F_S|\leq |S|$ for each outer segment $S$ of length at least $2$, and $|F_S| + |F_{\overline{S}}| \leq |S| +1$ for every breakable inner segment.
Finally, observe that every component $C$ of $B$ has at least $6-a_C$ leaves, where $a_C$ is the number of outer segments of length $1$ of $A$ contained in $C$. Thus a component has at most $2k -2b$ leaves.

Let $A_i$,  $i\in [b+1]$, be the components of $B$. Each $A_i$ is a stub.
Thus, by hypothesis, each $A_i$ is $(|A_i| + f(2k-2b))$-unavoidable, and so $B$ is $(|B|+ f(2k-2b))$-unavoidable.
Let $T$ be a tournament of order $n+\max\{f(2k-2b)+b \mid 0\leq b\leq k-3\}$.
$T$ contains $B$ has a subdigraph.
We can now transform $B$ into $A$ as follows.
\begin{itemize}
\item[0.] Initialize $A^*$ to $B$.
\item[1.] For each breakable inner segment $S$, let $U$ be a set of $|R_S|-2$ vertices in $T-A^*$. Let $X_S$ be the set of points of $F_S$ and $X_{\overline{S}}$ be the set of points of $\overline{S}$. Note that the stub types are defined in such a way that $R_S$ is not of type $\pm (1,1,1)$.
Since $R_S$ has first and last block of length $1$,  by Theorem~\ref{thm:2bouts}, in $T\langle U\cup X_S \cup X_{\overline{S}}\rangle$, there is a copy $R_S^*$ of $R_S$ with origin in $X_S$ and terminus in $X_{\overline{S}}$.
Remove from $A^*$ the point of $F_S$ which is not the origin of $R^*_S$, and the point of $F_{\overline{S}}$ which is not the terminus of $R^*_S$; add the path $R^*_S$ to $A^*$.
\item[2.] For each outer segment $S$, let $U$ be a set of $|R_S|-1$ vertices in $T-A^*$. Let $X_S$ be the set of points of $F_S$. Since $R_S$ has first and last block of length $1$,  by Theorem~\ref{thm:dep2}, in $T\langle U\cup X_S \rangle$, there is a copy $R_S^*$ of $R_S$ with origin in $X_S$.
Remove from $A^*$ the point of $F_S$ which is not the origin of $R^*_S$, and add the path $R^*_S$ to $A^*$.
\end{itemize}
Step 1 of this procedure reconstructs the breakable inner segments (which were broken) and Step 2 completes the outer segments of length at least $2$.
Therefore at the end of the procedure $A^*$ is the desired tree $A$.
This completes the proof of Lemma~\ref{lem:reduc}.

\subsection{Unavoidability of stubs}

The aim of this subsection is to prove Lemma~\ref{lem:stub}.
We need some preliminary results.

\subsubsection{Toolbox}

\begin{lemma}\label{lem:lapin}
Let $k$ be a positive integer.
Let $T$ be tournament of order $m\geq 4k$ and let $(v_1, \dots , v_m)$ be a local median order of $T$.
There are at least $k$ internally disjoint directed $2$-out-paths with origin $v_1$ and terminus in $\{v_{m-4k+2}, \dots , v_m\}$.
\end{lemma}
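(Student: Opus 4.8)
The plan is to produce the $k$ paths in the form $v_1\to x\to y$, where $x$ runs over $k$ distinct out-neighbours of $v_1$, each chosen so that it has an out-neighbour $y$ in the terminal interval $W:=\{v_{m-4k+2},\dots,v_m\}$; any two such paths have distinct internal vertices, hence are internally disjoint. So it is enough to exhibit $k$ out-neighbours of $v_1$ that have an out-neighbour in $W$ --- call such a vertex \emph{good}, and call an out-neighbour of $v_1$ with \emph{no} out-neighbour in $W$ a \emph{bad} vertex. By (M2) applied to the pair $(1,m)$, $v_1$ has at least $\lceil (m-1)/2\rceil$ out-neighbours, so it suffices to bound the number of bad vertices by $\lceil (m-1)/2\rceil-k$.

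The heart of the matter is that a bad vertex $x$ is dominated by \emph{every} vertex of $W\setminus\{x\}$. Let $w_1:=v_{m-4k+2}$ be the first vertex of $W$. The restriction of $\sigma$ to $W$ is again a local median order (property (M2) survives restriction to an interval), and $|W|=4k-1\geq 3$, so each vertex of $W$ except its last element $v_m$ dominates its successor in $W$ and therefore cannot be dominated by all the other vertices of $W$; hence $v_m$ is the only element of $W$ that can be bad. In particular $w_1$ is not bad, so every bad vertex other than $v_m$ lies strictly before $W$ and is dominated by $w_1$; thus the set of bad vertices is contained in $\bigl(N^+(w_1)\cap\{v_2,\dots,v_{m-4k+1}\}\bigr)\cup\{v_m\}$. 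Being a late vertex, $w_1$ has few out-neighbours among the $m-4k+1$ vertices that precede it: by (M2) applied to the pair $(1,m-4k+2)$, at most $\lfloor (m-4k+1)/2\rfloor$ of them. Hence there are at most $\lfloor (m-4k+1)/2\rfloor+1$ bad vertices, and therefore at least $\lceil (m-1)/2\rceil-\lfloor (m-4k+1)/2\rfloor-1\geq 2k-2$ good vertices.

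For $k\geq 2$ this quantity is at least $k$, which proves the lemma. The case $k=1$ requires one extra observation: if $v_1\to v_{m-2}$ then $v_1\to v_{m-2}\to v_{m-1}$ is already a directed $2$-out-path ending in $W$, while if $v_{m-2}\to v_1$ then $v_1$ itself belongs to $N^+(w_1)$ but is not bad, so the bound on the number of bad vertices drops by one and leaves at least one good vertex.

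The step I expect to be the genuine obstacle is the bound on the number of bad vertices. The tempting approach of arguing that most out-neighbours of $v_1$ already reach $W$ breaks down when $m$ is much larger than $k$: then $v_1$ may dominate only vertices lying far to the left, none of which is forced (by (M2) alone) to have an out-neighbour in $W$. The fix is not to decide which out-neighbours are good, but to control the bad ones through the single vertex $w_1$, playing the large lower bound on $|N^+(v_1)|$ (a first-vertex out-degree) against the small upper bound on the out-degree of $w_1$ into the initial segment (a late-vertex out-degree); the gap between these two is at least $2k-1$, which is precisely where the hypothesis $m\geq 4k$ --- equivalently $|W|=4k-1$ --- enters.
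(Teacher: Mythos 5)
Your proof is correct, and it takes a genuinely different route from the paper's, which I'll sketch for comparison. The paper splits into two cases: either $v_1$ dominates at least $k$ vertices of $S=\{v_{m-4k+2},\dots,v_{m-1}\}$ (in which case the paths $(v_1,v_i,v_{i+1})$ with $v_i\in N^+(v_1)\cap S$ already work), or $v_1$ dominates at most $k-1$ vertices of $S$, hence at least $\frac{m}{2}-k$ vertices of $M=\{v_2,\dots,v_{m-4k+1}\}$; combining this with the fact (from (M2)) that $w_1=v_{m-4k+2}$ is dominated by at least $\frac{m-4k}{2}$ vertices of $M$ yields $|N^+(v_1)\cap N^-(w_1)\cap M|\ge k$, and all $k$ paths are sent to the single terminus $w_1$. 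You avoid the case split entirely by counting \emph{good} out-neighbours of $v_1$ directly: the key observation that a \emph{bad} out-neighbour (other than $v_m$) must be in $N^+(w_1)$ because it is dominated by all of $W$, so the bad set is controlled by $|N^+(w_1)\cap\{v_2,\dots,v_{m-4k+1}\}|$, which (M2) makes small. Both proofs ultimately play the high out-degree of $v_1$ against the low out-degree of $w_1$ into the initial segment, but your bad-vertex count does it in one shot, at the (small) cost of a separate check when $k=1$, where your bound only guarantees $2k-2=0$ good vertices. That extra case is handled correctly; for what it's worth, applying (M2) to the pair $(2,m-4k+2)$ instead of $(1,m-4k+2)$ sharpens the count to at least $2k-1$ good vertices and makes the $k=1$ case unnecessary.
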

\begin{proof}
Let $S=\{m-4k+2, \dots , m-1\}$, and $M=\{v_2, \dots, v_{m-4k+1}\}$.
If $v_1$ dominates at least $k$ vertices in $S$, then the paths $(v_1, v_i, v_{i+1})$ for $v_i\in N^+(v_1)\cap S$ give the result.
Therefore, we may assume that $v_1$ dominates at most $k-1$ vertices in $S$.
Now by (M2), $v_1$ dominates at least $\frac{m}{2}-1$ vertices of $\{v_2, \dots, v_{m-1}\}$, and so at least $\frac{m}{2} -k$ in $M$.
Again by (M2), $v_{m-4k+2}$ is dominated by at least $\frac{m-4k}{2}$ vertices of $\{v_2, \dots, v_{m-4k+1}\}$.
Thus $A=N^+(v_1) \cap N^-(v_{m-4k+2})\cap M$ has cardinality at least $\frac{m}{2} -k + \frac{m-4k}{2} - (m-4k)= k$.
Hence the paths $(v_1, v_i, v_{m-4k+1})$ for $v_i\in A$ give the result.
\end{proof}

In the previous lemma, the termini of the paths may be the same (namely, $v_{m-4k+2}$). However, by applying it successively for $k'$ from $1$ to $k$, we directly obtain the stronger following lemma:

\begin{lemma}\label{lem:lapin2}
Let $k$ be a positive integer.
Let $T$ be tournament of order $m\geq 4k$ and let $(v_1, \dots , v_m)$ be a local median order of $T$.
There are at least $k$ internally disjoint directed $2$-out-paths with origin $v_1$ and distinct termini in $\{v_{m-4k+2}, \dots , v_m\}$.
\end{lemma}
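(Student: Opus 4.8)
The plan is to prove Lemma~\ref{lem:lapin2} by induction on $k$, which is what ``applying Lemma~\ref{lem:lapin} successively for $k'$ from $1$ to $k$'' amounts to. For $k=1$ the statement is exactly Lemma~\ref{lem:lapin} with parameter $1$, so assume $k\ge 2$ and that the result holds for $k-1$. The only preliminary fact needed is that the restriction of a local median order to a prefix $\{v_1,\dots,v_t\}$ is again a local median order of the induced subtournament, since (M2) only concerns the relative order of pairs inside the prefix. As $m\ge 4k$ we have $m-4\ge 4(k-1)$, so I apply the induction hypothesis to $T''=T\langle\{v_1,\dots,v_{m-4}\}\rangle$ with its local median order $(v_1,\dots,v_{m-4})$, obtaining $k-1$ pairwise internally disjoint directed $2$-out-paths $P_1,\dots,P_{k-1}$ with origin $v_1$ and pairwise distinct termini in $\{v_{(m-4)-4(k-1)+2},\dots,v_{m-4}\}=\{v_{m-4k+2},\dots,v_{m-4}\}$. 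The key point of this reduction is that every vertex used by these $k-1$ paths lies in $\{v_1,\dots,v_{m-4}\}$.

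It then remains to produce one further directed $2$-out-path $P_k=(v_1,x,y)$ which is internally disjoint from $P_1,\dots,P_{k-1}$ and whose terminus $y$ lies in $\{v_{m-3},v_{m-2},v_{m-1},v_m\}$: such a $y$ is automatically distinct from the termini of $P_1,\dots,P_{k-1}$ and from every vertex they use, so $P_k$ fits as soon as its middle vertex $x$ avoids the at most $2(k-1)$ vertices already occupied. To build it I apply Lemma~\ref{lem:lapin} with parameter $1$ to $T$ itself, producing a directed $2$-out-path $(v_1,x,y)$ with $y\in\{v_{m-2},v_{m-1},v_m\}$. Inspecting the proof of Lemma~\ref{lem:lapin}: if $v_1$ dominates one of $v_{m-2},v_{m-1}$, then $x$ can be taken equal to $v_{m-2}$ or $v_{m-1}$, hence outside $\{v_1,\dots,v_{m-4}\}$, and $P_k$ is immediately disjoint from $P_1,\dots,P_{k-1}$; similarly, if $v_1$ dominates $v_{m-3}$ one uses $x=v_{m-3}$ together with a successor of $v_{m-3}$ in $\{v_{m-2},v_{m-1},v_m\}$ (there are at least two, by (M2)). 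The only remaining case is that $v_1$ dominates none of $v_{m-3},v_{m-2},v_{m-1}$, which forces all three to dominate $v_1$; one then has to exhibit an out-neighbour of $v_1$ lying in the bulk $\{v_2,\dots,v_{m-4}\}$, outside the $\le 2(k-1)$ forbidden vertices, that still reaches $\{v_{m-3},\dots,v_m\}$ in one more step, finishing with a short case analysis on the arcs among $v_{m-3},v_{m-2},v_{m-1},v_m$ (using e.g.\ $v_{m-1}\to v_m$ and that $v_{m-3}$ dominates at least two of $v_{m-2},v_{m-1},v_m$).

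The main obstacle is precisely this last case: guaranteeing a \emph{disjoint} routing of the extra path, since the budget $m\ge 4k$ is tight and Lemma~\ref{lem:lapin}, when $v_1$ fails to dominate the tail, returns a path whose middle vertex sits deep in the bulk where it may collide with $P_1,\dots,P_{k-1}$. I expect the clean way around this is to re-open the dichotomy inside the proof of Lemma~\ref{lem:lapin} rather than to use it as a black box: in the ``good'' case ($v_1$ dominates many vertices of $\{v_{m-4k+2},\dots,v_{m-1}\}$), (M2) in fact gives enough out-neighbours of $v_1$ there to choose $k$ of them with pairwise index-gap at least $2$, yielding $k$ pairwise disjoint paths $(v_1,v_{i_j},v_{i_j+1})$ with distinct termini outright; in the ``bad'' case Lemma~\ref{lem:lapin}'s argument produces at least $k$ common in-neighbours of $v_{m-4k+2}$ among $N^+(v_1)$, and one must spread their images over distinct vertices of the tail window, either by a Hall-type matching in the bipartite ``$N^+(v_1)$ versus $\{v_{m-4k+2},\dots,v_m\}$'' graph (whose deficiency is controlled by (M2) and $|N^+(v_1)|\ge\lceil(m-1)/2\rceil$), or by recursing on the shorter window $\{v_1,v_{m-4k+2},\dots,v_m\}$. (If ``internally disjoint with distinct termini'' is read in the weaker sense that only forbids a shared internal vertex, the good case is immediate and only the terminus-clustering in the bad case needs this matching step.) Once the $k$ paths are in place, one checks directly that they are internally disjoint with distinct termini inside $\{v_{m-4k+2},\dots,v_m\}$, completing the induction.
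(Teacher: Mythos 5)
Your base case, your observation that a prefix of a local median order is again a local median order, and your diagnosis of where the difficulty lies are all correct, but the proof is not complete: the ``bad'' case (the second case in the proof of Lemma~\ref{lem:lapin}, where all $k$ paths end at the same vertex $v_{m-4k+2}$) is delegated to one of two devices, and neither works. The Hall-type matching fails because for a vertex $v_j\in A=N^+(v_1)\cap N^-(v_{m-4k+2})\cap\{v_2,\dots,v_{m-4k+1}\}$ with $j$ small (say $j\le m-8k+2$), property (M2) is compatible with $v_j$ dominating no vertex of $\{v_{m-4k+3},\dots,v_m\}$ whatsoever; hence every vertex of $A$ may have $\{v_{m-4k+2}\}$ as its entire out-neighbourhood in the tail window, and Hall's condition already fails for pairs. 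Recursing on $\{v_1,v_{m-4k+2},\dots,v_m\}$ is not available either: this set is not an interval of $\sigma$, so the restricted ordering need not be a local median order, and in the bad case $v_1$ dominates few vertices of the window in any event.

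The missing idea is that in the bad case you should not try to spread all $k$ case-2 paths over distinct termini; you only need \emph{one} of them. Keep $T$ and $\sigma$ fixed and apply the induction hypothesis with parameter $k-1$ to the same tournament (rather than to the prefix $\{v_1,\dots,v_{m-4}\}$): this yields $k-1$ paths with distinct termini in $\{v_{m-4k+6},\dots,v_m\}$. Now rerun the dichotomy of Lemma~\ref{lem:lapin} with parameter $k$. In the good case you obtain $k$ paths $(v_1,v_i,v_{i+1})$ with distinct internal vertices and distinct termini outright (no index-gap condition is needed under the reading of internal disjointness used throughout the paper), and you discard the inductive paths. In the bad case, $|A|\ge k$ while the $k-1$ inductive paths occupy at most $k-1$ vertices of $\{v_2,\dots,v_{m-4k+1}\}$ (their termini lie in the window, not there), so some $x\in A$ is unused and $(v_1,x,v_{m-4k+2})$ is a $k$-th path whose terminus is automatically distinct from the $k-1$ termini already chosen. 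This is what ``applying Lemma~\ref{lem:lapin} successively for $k'$ from $1$ to $k$'' amounts to: the case-2 terminus for parameter $k'$ is the leftmost vertex $v_{m-4k'+2}$ of the $k'$-th window, which is fresh at each step. Your version, by pushing the first $k-1$ paths into the prefix, forces the new terminus to the right end $\{v_{m-3},\dots,v_m\}$ of the window, exactly where case 2 gives no control --- which is why you got stuck.
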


\begin{lemma}\label{lem:arbos}
Let $f, p ,s$ be positive integers with $s > p$. For $q\in [p]$, let $A_q$ be an out-arborescence with $n_q$ nodes and $k_q$ out-leaves. 
Let $T$ be a tournament of order $m = s + \sum_{q\in [p]}(n_q +k_q-1) + 2f - 1$,  $\sigma=(v_1, \dots ,v_m)$ be a local median order of $T$ and $F$ a set of at most $f$ vertices of $T$. If there are indices $1\leq i_1 < \cdots < i_p \leq s$ such that $v_{i_q}\notin F$  for all $q \in [p]$, then there is an embedding $\phi$ of $B=A_1+\dots+A_p$ in $T$ such that the root of $A_q$ is embedded at $v_{i_q}$ for all $q \in [p]$,  and  $\phi(a) \notin F$ for all $a \in V(B)$.
\end{lemma}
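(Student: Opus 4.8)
The plan is to proceed by induction on $p$, the number of out-arborescences, using Lemma~\ref{lem:islands} as the base of the argument and Theorem~\ref{t_arbo} (or rather its $\sigma$-forward refinement via Lemma~\ref{lem:up-embed}) to glue the pieces together while respecting the forbidden set $F$. First I would handle $p=1$: here $B=A_1$ is a single out-arborescence with root to be placed at $v_{i_1}$, and $m = s + n_1 + k_1 - 1 + 2f - 1$. The idea is to carve out a symmetric window of vertices centred near $v_{i_1}$ so that Lemma~\ref{lem:islands} applies. Concretely, one picks an interval $I$ of the local median order containing $v_{i_1}$, re-indexed so that $v_{i_1}$ sits at position $0$, of total length $4n_1 + 4|F\cap I| - 3$; since $|F| \le f$ and $m$ is large enough by the assumption $s>p$ (which gives slack), such an interval exists inside $\{v_1,\dots,v_m\}$. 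By (M1) the restriction of $\sigma$ to $I$ is a local median order, $v_{i_1}\notin F$ by hypothesis, so Lemma~\ref{lem:islands} yields a $\sigma$-$F$-nice embedding of $A_1$ with root at $v_{i_1}$ avoiding $F$, as desired.

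For the inductive step, suppose the result holds for $p-1$ arborescences. Given $A_1,\dots,A_p$ and indices $i_1<\dots<i_p\le s$, I would peel off $A_p$ first. The subtlety is that $A_p$'s root must go to $v_{i_p}$, which may lie in the "middle" of the order rather than at an endpoint, so Lemma~\ref{lem:up-embed} cannot be used directly — instead I would again use Lemma~\ref{lem:islands} to embed $A_p$ inside a symmetric window around $v_{i_p}$ of length roughly $4n_p + 4(\text{local }F)$, chosen so that this window, together with the vertices of $\phi(A_p)$, can be deleted and the remaining tournament still has the right order for the induction hypothesis applied to $A_1+\dots+A_{p-1}$. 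After deleting $\phi(A_p)$ (which has $n_p$ vertices) one is left with a tournament $T'$; its local median order $\sigma'$ (obtained by restriction, which preserves (M2)) still contains the vertices $v_{i_1},\dots,v_{i_{p-1}}$ — one must check these were not swallowed by $A_p$'s window, which is arranged by making the window around $v_{i_p}$ small enough and using that $i_1<\dots<i_p$ are spread within $\{1,\dots,s\}$ — and $|T'| = s' + \sum_{q\le p-1}(n_q+k_q-1) + 2f-1$ for the appropriate $s' > p-1$, so the induction hypothesis finishes the job.

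The main obstacle I anticipate is bookkeeping the window sizes and the parameter $s$: one must show that after reserving a symmetric block of about $4n_p$ vertices around $v_{i_p}$ (inflated by the local count of $F$), the leftover still has a valid value of "$s$" strictly exceeding $p-1$ and still contains all of $v_{i_1},\dots,v_{i_{p-1}}$ in the correct relative positions. The condition $s>p$ is precisely the slack that makes this possible — each of the $p$ roots "costs" essentially its own arborescence's worth of vertices plus one unit of headroom, and the $2f-1$ term absorbs the $F$-inflation uniformly. A cleaner alternative, which I would consider if the window argument gets unwieldy, is to embed the arborescences one at a time from left to right using a $\sigma$-forward/$\sigma$-$F$-nice version of Lemma~\ref{lem:up-embed} directly: place the root of $A_q$ at $v_{i_q}$ and then grow $A_q$ into the forward interval $\{v_{i_q+1},\dots\}$, using the $F$-aware counting exactly as in the proof of Lemma~\ref{lem:islands}, being careful that the forward intervals used by consecutive arborescences are disjoint — this is where $s>p$ again enters. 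Either way, the heart of the matter is the same $(M2)$-plus-counting argument already used twice in the paper, now iterated with disjoint address ranges.
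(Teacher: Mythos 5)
Your approach has two genuine gaps, either of which is fatal. First, the base case does not fit: Lemma~\ref{lem:islands} needs a window of $4n_1+4f-3$ vertices centred at the root, but the tournament here has only $m=s+n_1+k_1+2f-2$ vertices, and since $k_1\le n_1$ and $s$ is typically small, $m$ is roughly $2n_1+2f$, far less than the window size, so no such window exists inside $T$ (and when $i_1$ is small the window would also run off the left end). The reason the lemma is true with such a tight $m$ is precisely that each $A_q$ is an out-arborescence and is grown only \emph{forward}; Lemma~\ref{lem:islands} is designed for general trees that may need room on both sides of the root, so it wastes a factor of two and is the wrong tool here. Second, your inductive step claims that after deleting $\phi(A_p)$ the restriction of $\sigma$ to what remains ``preserves (M2)''; this is false. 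Only consecutive intervals of a local median order inherit (M2) (that is the content of (M1)); deleting scattered interior vertices can destroy the domination counts. You half-recognize the right idea in your ``cleaner alternative'' (grow each $A_q$ forward from $v_{i_q}$ with $F$-aware counting), but embedding them in \emph{disjoint} forward intervals is also not how one stays within the bound $m$.

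The paper's actual proof avoids induction and deletion entirely. It builds a single auxiliary out-arborescence $A'$ by hanging all the $A_q$'s from a common new ancestor $a$, attaching $a$ and $s-p$ dummy leaves to a new root $b$, and simultaneously prepends $s-p+2$ dummy vertices to $T$ with arcs engineered so that the extended ordering is still a local median order and $v_{p-s}$ dominates exactly $\{v_{i_q}\}\cup\{v_i: i>s\}$. Running the greedy procedure of Theorem~\ref{t_arbo} on $A'$ in the extended tournament then \emph{forces} the root of each $A_q$ to land on $v_{i_q}$, and the $2f$ surplus in $m$ is used to absorb $F$ by adding a dummy leaf and rerunning whenever the greedy embedding would land on a forbidden vertex. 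You would need to discover this reduction (or an equivalent one-sided interleaved argument) to close the gaps.
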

\begin{proof}
Let us build an arborescence $A'$ on which to apply Theorem~\ref{t_arbo}. Add to $B$ a node $a$ with an out-going arc to each of the roots of the $A_q$'s, and another node $b$ with an out-going arc to $a$ and to $s - p$ new nodes, $a_1, \dots ,a_{s-p}$. Order the sons of $b$ in the order $(a,a_1,\dots,a_{s-p})$.

Now let $T'$ be the tournament obtained from $T$ by adding a transitive tournament $S$ on $s-p+2$ vertices  $v_{p-s-1}, v_{p-s}, \dots, v_0$ in the transitive order. Let all these vertices except $v_{p-s}$ dominate all vertices of $T$, and let $v_{p-s}$ dominate $\{v_{i_q} \mid q\in [p]\}\cup \{v_i \mid s+1\leq i\leq m\}$ and be dominated by the $s-p$ other vertices. 

Note that $A'$ has $n'=s -p+2+ \sum_{q\in [p]} n_q$ nodes and $k'=s -p+\sum_{q\in [p]} k_q$ out-leaves, and that $T'$ has $m+s-p+2 = n' + k' -1+2f$ vertices.
The idea is to embed $A'$ into $T'$ using Theorem~\ref{t_arbo} with the ordering $\sigma'=(v_{p-s-1}, \dots ,v_{m})$.
We thus need to show that $\sigma'$ is a local median order.

\begin{claim}
$\sigma'$ is a local median order.
\end{claim}
\begin{subproof}
We need to prove that $\sigma'$ has property $(M2)$.
 Let $i <j$ be two integers in $\{v_{p-s-1}, \dots ,v_{m}\}$.
 
 Let us first show that $v_i$ dominates at least half of the vertices $v_{i+1}, \dots , v_j$.
 If $i>0$, it follows from the fact that $\sigma$ is a local median order.
 If $i\leq 0$ and $i\neq p-s$, then $v_i$ dominates all  the vertices $v_{i+1}, \dots , v_j$ by construction.
 If $i=p-s$, it holds because $v_{p-s}$ has $s-p$ out-neighbours in $\{v_{p-s+1},\dots, v_0\}$ and at most $s-p$ in-neighbours with positive index.
 
 Let us first show that $v_j$ is dominated by at least half of the vertices $v_{i}, \dots , v_{j-1}$.
 If $j\leq 0$, then by construction it is dominated by all vertices $v_{i}, \dots , v_{j-1}$.
Assume now that $j>0$. If $i> 0$, it  follows from the fact that $\sigma$ is a local median order.
 If $i\leq 0$, then $v_j$ is dominated by at least half the vertices of $\{v_1, \dots , v_{j-1}\}$ because $\sigma$ is a local median order, it is dominated by $v_0$, and dominates at most one vertex (namely $v_{p-s}$) with non-positive index. Therefore if dominates at least  half the vertices of $\{v_{i}, \dots , v_{j-1}\}$.
 \end{subproof}

Consequently, following the procedure of Theorem~\ref{t_arbo}, we get an embedding $\phi$ of $A'$ into $T'$. This embedding may however embed some nodes at vertices of $F$.
For each vertex $r\in F$ in order, do the following: if a node $a$ of $A'$ is embedded at $r$, let $b$ be the father of $a$ in $A'$; add a leaf in $A'$ with father $b$, and put it just before $a$ in the order of the sons of $b$; reapply Theorem~\ref{t_arbo} to obtain an embedding of the new version of $A'$.
Note that in this construction, we add at most $k$ vertices and leaves to $A'$. Thus Theorem~\ref{t_arbo} is still applicable, and in the resulting embedding, the only nodes that are embedded at vertices of $F$ are not in $A$.
With the right order on the neighbours of $b$ (namely $(a,a_1,\dots,a_{s-j})$), the algorithm of Theorem~\ref{t_arbo} maps $b$ to $v_{p-s-1}$, $a$ to $v_{p-s}$,  $a_i$ to $v_{p-s+i}$ for all $i \in [s-p]$, and the root of $A_q$ to $v_{i_q}$ for all $q \in [p]$. Hence, the embedding $\phi$ restricted to the vertices of $B$ is the desired embedding of $B$ in $T$. 
\end{proof}

\begin{obs}\label{obs:fly}
Note that the knowledge that a vertex belongs to $F$ in the previous lemma is only needed when we reach it. So the set $F$ does not need being decided at the beginning of the procedure but can be decided on the fly.
\end{obs}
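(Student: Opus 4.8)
The plan is to revisit the proof of Lemma~\ref{lem:arbos} and locate precisely where the set $F$ is consulted. I would first record that the construction of the enlarged out-arborescence $A'$, the construction of the enlarged tournament $T'$, the verification that $\sigma'$ is a local median order, and the initial embedding $\phi$ of $A'$ into $T'$ produced by the greedy procedure of Theorem~\ref{t_arbo} involve no reference to $F$ at all. Thus this whole preliminary stage can be executed with no information about $F$.

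Next I would reorganise the correction stage --- the only part of the argument that mentions $F$ --- as a single sweep of the index $i$ from $1$ to $m$: at step $i$, one asks whether $v_i \in F$, and if $v_i \in F$ and the current embedding puts a node of the current version of $A'$ on $v_i$, one inserts a new leaf just before that node in the son-order of its father and re-runs Theorem~\ref{t_arbo}. Two features make this work. First, the re-run reads only the current $A'$, the tournament $T'$ and the order $\sigma'$; it never reads $F$. Second, by the correctness of the correction stage established inside Lemma~\ref{lem:arbos} --- at the end, the only nodes sitting on vertices of $F$ are the inserted leaves --- a re-run triggered at step $i$ does not spoil a correction already carried out at an index $j<i$; equivalently, inserting a leaf and re-running only shifts images to the right, so it never moves a node of $B$ onto an already-handled vertex of $F$. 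Consequently the procedure interacts with $F$ solely through the queries ``is $v_i\in F$?'', and the query concerning $v_i$ is issued exactly when the sweep reaches $v_i$ (and in fact only if $v_i$ currently carries a node). This is precisely the assertion of the observation: $F$ may be specified piece by piece, provided that membership of $v_i$ in $F$ has been settled by the moment the sweep arrives at $v_i$.

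I do not anticipate a real obstacle. The only delicate ingredient --- that re-running Theorem~\ref{t_arbo} after inserting a leaf cannot disturb a correction performed at a smaller index, which one would prove, if pressed, by induction on the steps of the greedy procedure, checking that the re-run only displaces images rightward --- is exactly what already underlies the correctness of the correction stage in Lemma~\ref{lem:arbos}. Hence nothing new needs to be proved; the content of the observation is the bookkeeping remark that the order in which $F$ is examined is the $\sigma$-order, which is also the order of the sweep, so $F$ can be revealed on the fly.
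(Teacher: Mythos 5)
Your proposal is correct and matches the paper's (implicit) justification: the paper offers no separate proof of this observation, relying exactly on the structure of the proof of Lemma~\ref{lem:arbos}, where $F$ is consulted only during the correction sweep and each re-run of Theorem~\ref{t_arbo} is deterministic and only shifts images rightward, so earlier corrections persist. Nothing is missing.
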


\subsubsection{Proof of Lemma~\ref{lem:stub}}

Let $A$ be a stub with $n$ nodes and $k$ leaves. 

Let  $B$ be the forest obtained from $A$ by removing the arcs and the internal vertices of the maximal directed paths of length at least $3$ contained in its segments.
The components of $B$ are called the {\bf islands} of $A$. Note that each island of $A$ contains at least one branch-node of $A$. Note moreover that there are at most $k-2$ islands in $A$. Let $\widehat B$ be the digraph whose vertices are the islands of $A$, and such that there is an arc from $C$ to $C'$ in $\widehat B$ if and only if in $A$ there is a directed out-path with origin in $C$ and terminus in $C'$. For all arc $e$ of $\widehat B$, we denote that directed out-path by $P(e)$. Observe that $B$ is a forest and $\widehat B$ is a tree. 

Choose an island $C_1$ that has indegree $0$. Take $C_1$ as the root of $\widehat B$. 
There is an ordering $(C_1,\dots,C_r)$ of the islands of $A$ such that
\begin{itemize} 
\item[(i)] if $C_p \rightarrow C_q$ then $p \le q$;
\item[(ii)] for each island $C$, there exist $p_C$ and $q_C$ such that an island $C_p$ is a descendant of $C$ in $\widehat B$ if and only if it verifies $p_C\leq p\leq q_C$.
\end{itemize}

For all $p \in [r]$, let $E^-(C_p)$ be the set of the downward arcs of $\widehat B$ with head $C_p$, and let $E^+(C_p)$ be the set of the upward arcs of $\widehat B$ with tail $C_p$.
For an arc $e\in E^+(C_p)$, we let $Q(e)$ be the path obtained from $P(e)$ by removing its last two vertices.
Similarly, for an arc $e\in E^-(C_p)$, we let $Q(e)$ be the path obtained from $\overline{P}(e)$ by removing its last two vertices.

 For all $p \in [r]$, the {\bf space} of $C_p$ is 
 $$\spc(C_p) = 12|C_p|+36k - 124 + \sum_{e \in E^-(C_p) \cup E^+(C_p)} (|Q(e)|+1).$$ 
 
 By definition we have  
 $$\sum_{p \in [r]} \left( |C_p| + \sum_{e  \in E^-(C_p) \cup E^+(C_p)} |Q(e)| \right) = n .$$
  Now
$ |E^-(C_p)|+|E^+(C_p)|$ is the number of arcs between $C_p$ and its sons in  $\widehat B$, so 
$$\sum_{p \in [r]} (|E^-(C_p)|+|E^+(C_p)|) \le r-1 \leq k-3.$$
Since $A$ is a stub, all its outer-segments have length $1$ and so remain in $B$. Moreover, an inner segment of $A$ is either directed, or  has two blocks with one of length $1$, or has three blocks with the first and the last of length $1$. Therefore, at most three of its internal vertices remain in $B$.

  Thus $\sum_{p\in[r]} |C_p| =|B| \leq k + (k-2) + 3(k-3) = 5k -11$.
  Consequently, 
  \begin{eqnarray*}
  \sum_{p\in [r]} \spc(C_p)  & \leq &  n + 11(5k-11)   + (k-3) +  (k-2)(36k -124)\\
 & \leq & \vstub .
  \end{eqnarray*}
  
  \medskip

 Let $T$ be a tournament of order $m=\vstub$, and let $(v_1, \dots , v_m)$ be a local median order of $T$.
 Now for all $i =1$ to $r$, reserve the first $\spc(C_p)$ unreserved vertices of $T$ for $C_p$. Therefore the set of vertices reserved for $C_p$ is
 $$R_p= \left\{~v_i~~\mid~~\sum_{q <p}\spc(C_q)+1 \leq i \leq \sum_{q \leq p}\spc(C_q) \right \}.$$ 

 Set $\displaystyle \alpha_p= \sum_{e \in E^-(C_p)}(|Q(e)|+1) + 6|C_p|+ 10k-29$. We partition $R_p$ into three sets.
 The {\bf middle} of $C_p$ is the set  $$M_p= \{~v_i~~\mid~~\sum_{q <p}\spc(C_q) + \alpha_p +1 \leq i \leq \sum_{q <p}\spc(C_q) + \alpha_p+16k-58\} ,$$
 the {\bf left margin} of $C_p$ is the set 
 $$M^-_p= \{~v_i~~\mid~~\sum_{q <p}\spc(C_q) +1 \leq i \leq \sum_{q <p}\spc(C_q) + \alpha_p\} ,$$
and the {\bf right margin} of $C_p$ is the set 
$$M^+_p= \{~v_i~~\mid~~\sum_{q <p}\spc(C_q) + \alpha_p+16k-57 \leq i \leq \sum_{q \leq p}\spc(C_q).$$

\medskip

We are going to build an embedding $\phi$ of $A$ into $T$.  
Run a Breadth-First Search algorithm on $B$, and let $\Pi$ be the resulting ordering.
The ordering $\Pi$ corresponds to a permutation $\pi$ of $[r]$: $\Pi=(C_{\pi(1)}, C_{\pi(2)}, \dots , C_{\pi(r)})$. 
The idea is to embed the islands in increasing order according of $\pi$ so that each island is treated before its sons.
When  a island $C_p$ is considered, we embed all the vertices of $A_p=C_p \cup \bigcup_{e \in E^-(C_p) \cup E^+(C_p)}Q(e)$ in $R_p$.
In the mean time, for each $e=C_pC_q$ in $E^+(C_p)$ (resp.~$E^-(C_p)$), we embed  the path between the terminus of $Q(e)$, which is the penultimate (resp.~second) node of $P(e)$, and the terminus (resp.~origin) of $P(e)$ in $M_q$ (this vertex is the root of $C_q$) using
Lemma~\ref{lem:lapin2}. When using this lemma, the internal vertex of this path is embedded in some vertex that must be forbidden for the others. Therefore, we need to keep track of this forbidden vertices in a set $F$.

Let us define formally the root $a_p$ of $C_p$.
Pick any node $a_1$ of $C_1$ as its root. 
 For all $p \in \{2,\dots,r\}$, let $C_q$ be the father of $C_p$ in $\widehat B$. There is an arc $e$ between $C_p$ and $C_q$.
 The root $a_p$ is the end of $P(e)$ which is in $C_p$.

A vertex is {\bf free} if it is not yet the image of a node.

Let us now describe the algorithm in detail. It keeps track of a set $F$ of at most $k-3$ vertices (at most one for each arc between two islands in $\widehat B$).  
To start, we set $F = \emptyset$, and we embed $a_1$ at $v_{\alpha_1+1}$.

Then for $t=1$ to $r$ do the following:
\begin{itemize}
\item[0.] Set $p=\pi^{-1}(t)$. The root $a_p$ of $C_p$ is already embedded at some vertex $v_i$.

\item[1.] Set $I_p= \{v_j \mid i-2|C_p|-2k+5 \leq j \leq i+2|C_p|+2k -5\}$. Embed $C_p$ in $T\langle I_p \rangle$ thanks to Lemma~\ref{lem:islands}, avoiding the vertices that are in $F \cap I_p$.

\item[2.] For each $e \in E^+(C_p)$, consider $P(e) = (x_{e,1},\dots ,x_{e,\ell_e})$ from $C_p$ to one of its sons $C_q$. Note that $x_{e,1}=a_p$ is already embedded, and that $x_{e,\ell_e}=a_q$. Consider the lowest integer $j \ge i+2|C_p|+2k-4$ such that $\phi(x_{e,1}) \rightarrow v_{j}$ and $v_{j}$ is free. Embed $x_{e,2}$ at $v_{j}$.

Proceed symmetrically, for the arcs in $E^-(C_p)$.

\item[3.] Apply Lemma~\ref{lem:arbos} on the paths $^*Q(e)$ for $e \in E^+(C_p)$. As noted previously, in Lemma~\ref{lem:arbos} we only need to know the vertices of $F$ when we reach them in the construction (since we can just reapply the algorithm on a slightly different arborescence when we meet a vertex in $F$). We consider the construction of these paths in order, and when the last node of $Q(e)$ is reached, do the following: apply Lemma~\ref{lem:lapin2} to get $4k-15$ internally disjoint  directed $2$-out-paths from $v_p$ to disjoint vertices in $M_{q}$ (with $C_q$ the head of $e$ in $\widehat B$); pick one such path that does not use any vertex of $F$ (here $|F| \le k-4$), nor any of the images of the roots of the $C_{q'}$ for $q' \in [r]\setminus \{p,q\}$ which are already embedded (there are at most $k-4$ of these); put its second vertex in $F$, embed the penultimate node of $P(e)$ at its second vertex, and embed the root of $C_q$ (which is also the terminus of $P(e)$) at its terminus. 
The vertex that was added to $F$ has a larger index than the vertices we are in the application of Lemma~\ref{lem:arbos}, so we can make sure to not embed another node at it.

Do the symmetric on the paths $^*Q(e)$ for $e \in E^-(C_p)$.
\end{itemize}

\bigskip

Let us know prove that this algorithm results in an embedding of $A$ into $T$.

\medskip
Let us first prove that every vertex is mapped to a vertex and that every vertex of $A_p$ is mapped into $R_p$.

At Step 1, we only embed the nodes of $C_p$ in $I_p$, which is in an interval of $4|C_p| + 4k-9$ vertices centered at some index $i$ in the middle $M_p$.

At Step 2, we hit at most  $|E^+(C_p)|$ out-neighbours of vertices that belong to $I_p$. Let $w_h=v_{i+2|C_p|+2k-5-h}$ be a vertex of $I_p$ (hence $0 \leq h \leq  4|C_p| + 4k-10$). 
It has at most $h$ out-neigbours in $I(w_h)=\{ v_j \mid i+2|C_p|+2k-4-h \leq j \leq i+2|C_p|+2k-5\}$. 
Set 
$$J^+_p= \{v_j \mid i+2|C_p|+2k-4 \leq j \leq  i +6 |C_p|+2|E^+(C_p)| +8k-21\}.$$ 
Note that $|J^+_p|= 2|E^+(C_p)|+ 4|C_p| + 6k -16$.
 By (M2), $w_h$ has at least $\frac{1}{2} (h + |J^+_p|) = \frac{1}{2}(h + 2|E^+(C_p)|+ 4|C_p| + 6k -16)\geq h+ |E^+(C_p)|+k-3$ out-neighbours in $I(w_h) \cup J^+_p$ and so at least $|E^+(C_p)|+k-3$ out-neighbours in $J^+_p$. 
Hence the vertices of $I_p$ have enough out-neighbours in $J^+_p$ to choose the $|E^+(C_p)|$ out-neighbours among vertices that are not in $F$.

Similarly, we hit at most $|E^-(C_p)|$ vertices that belong to the set $$J^-_p = \{v_j \mid i -6 |C_p|-2|E^-(C_p)| -8k+21 \leq j \leq  i-2|C_p|-2k+4\}.$$

At Step 3, we only need to ensure that the conditions of Lemma~\ref{lem:arbos} can be verified whithin $R_p$. Hence we only need to check that in $R_p$ there are $\sum_{e \in E^+(C_p)}(|Q(e)| - 1) +2|F|$ vertices after the last vertex of $J^+_p$ and  $\sum_{e \in E^-(C_p)}(|Q(e)| - 1) +2|F|$ vertices before the first vertex of $J^-_p$.
The vertex $v_i$ is in $M_p$, so $i\geq \sum_{q <p}\spc(C_q)  + \alpha_p+1$.
Hence, there are at least $\alpha_p - (6 |C_p|+2|E^-(C_p)| +8k-21) = \sum_{e \in E^-(C_p)}(|Q(e)| - 1) +2k-8 \geq  \sum_{e \in E^-(C_p)}(|Q(e)| - 1) +2|F|$ vertices before $v_i$. (Recall that $|F|\leq k-4$ when doing this).
Furthermore, $i\leq \sum_{q <p}\spc(C_q) + \alpha_p +16k -58$. So, in $R_p$, there are at least $\spc(C_p) - (\alpha_p + 16k-58+6 |C_p|+2|E^+(C_p)| +8k-29) = \sum_{e \in E^+(C_p)}(|Q(e)| - 1) +2k -8$ vertices after $v_i$.
This is what we wanted.

\medskip

Finally, let us now show that two nodes are never mapped to a same vertex.

Observe first that at each loop, we map nodes on distinct vertices.
At Step 1, we map nodes into different vertices of $I_p$.
Then at Step 2, we embed nodes into distinct vertices of $J^+_p$ and $J^-_p$, and the three sets $I_p$, $J^+_p$ and $J^-_p$, are pairwise disjoint.
Finally, at Step 3, using Lemma~\ref{lem:arbos} and \ref{lem:lapin2}, we finish embedding the $P_e$ for $e\in E^+(C_p)$ into $M^+_p$ and embedding the $P_e$ for $e\in E^+(C_p)$ into $M^-_p$.  We take care of adding the second vertex of the $2$-paths in $F$ each time we apply  Lemma~\ref{lem:lapin2}, and that we always avoid embedding vertices in $F$. Hence two nodes embedded during a same loop are mapped to different vertices.

In addition, all vertices hit at the loop $t$ are in $R_p$ (with $p=\pi^{-1}(t)$) except the ones when applying Lemma~\ref{lem:lapin2}.
Since the $R_p$'s are pairwise disjoint, we only need to check that when applying Lemma~\ref{lem:lapin2}, we do not map a node to a vertex onto which another vertex was or will be mapped.  Property (ii) of the ordering $(C_1,\dots,C_r)$ implies that, when applying Lemma~\ref{lem:lapin2},  nodes are all mapped onto vertices in some $R_q$ such that $C_q$ is a descendant of $C_p$. So the only possible conflicts are with other vertices hit when applying this lemma. But we take care of avoiding those vertices (the second vertices of the paths of length $2$ are added to $F$ and we specifically avoid the root of $R_q$ for all $q \ne p$).

This completes the proof of Lemma~\ref{lem:stub}.

\section{Conclusion and further research}

\subsection{Towards Conjecture~\ref{conj:n+k-1} and beyond}

The bound $\frac{3}{2}n + \frac{3}{2}k  -2$ of Theorem~\ref{t_tree2} can be replaced by $n+k-1 + \min_{r\in V(A)} \min(\gamma_r^{\uparrow},\gamma_r^{\downarrow})$. However, for any {\bf antidirected tree} $A$, that is an oriented tree in which every node has either in-degree $0$ or out-degree $0$, we have $\min_{r\in V(A)} \min(\gamma_r^{\uparrow},\gamma_r^{\downarrow})= \frac{1}{2}n + \frac{1}{2}k  -1$.

Another step towards Conjecture~\ref{conj:n+k-1} would be to prove them for antidirected trees.

\medskip

When proving Theorem~\ref{thm:veryfew}, we try to keep the proof as simple as possible and made not attempt to get the smallest upper bound on $g(k)$.
For example, we can improve on the bound $\vstub$ of Lemma~\ref{lem:stub} by studying more carefully on the size of $|F|$ at each loop.
Likewise, we can slightly improve Lemma~\ref{lem:reduc}. 
Doing so, we can get somewhat better upper bound on $g(k)$ than $\val$. However, all such bounds are quadratic in $k$, i.e. $\Omega (k^2)$.
A next step towards Conjecture~\ref{conj:n+k-1} would then be to prove that $g(k)\leq o(k^2)$ (that is every oriented tree of order $n$ with $k$ leaves is $(n+ o(k^2))$-unavoidable), and ideally that $g(k) \leq \alpha \cdot k$ for some absolute constant $\alpha$.

\medskip

Conjecture~\ref{conj:n+k-1} is tight because of the out-stars and in-stars. But those trees have few nodes: just one more than leaves.
In the same way, we believe that all the trees with $n$ nodes and $k$ leaves that are not $(n+k-2)$-unavoidable have $n$ small compared to $k$.

\begin{conjecture}\label{conj:n+k-2}
For every fixed integer $k$, there is an integer $n_k$ such that
 every oriented tree of order $n\geq n_k$ with $k$ leaves is $(n+k-2)$-unavoidable.
 \end{conjecture}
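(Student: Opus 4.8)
The plan is to fix $k$ and argue by induction on $n$, using the following reduction: if $A$ has a long segment, shorten it — contracting a sub-path of flat nodes preserves both the set of leaves and the set of branch-nodes — apply the induction hypothesis to the smaller tree, and then re-grow the segment on fresh vertices of the host tournament. Two observations get the induction started. First, for $k\le 2$ the conjecture already follows from the theorem of Havet and Thomass\'e on oriented paths~\cite{HaTh00}, since $n+k-2=n$ there; so we may assume $k\ge 3$. Second, an oriented tree with $k$ leaves has at most $k-2$ branch-nodes and hence at most $2k-3$ segments, so if every segment of $A$ has length at most $L$ then $n$ is bounded by a function of $k$ and $L$. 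Thus, choosing a threshold $L=L(k)$ (large, to be specified) and $n_k$ larger than the resulting bound, we may assume that $A$ has a segment $S$ of length at least $L(k)$. Note that the extremal trees for Conjecture~\ref{conj:n+k-1}, the in- and out-stars, have $n=k+1<n_k$ and so are correctly excluded.

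For the induction step let $A$ have $n\ge n_k$ nodes, $k$ leaves, and a long segment $S$; let $T$ be a tournament on $m=n+k-2$ vertices with local median order $\sigma=(v_1,\dots,v_m)$. Let $A'$ be obtained from $A$ by contracting $S$ to a short segment $S'$ keeping the first and last few blocks (so $S'$ has the same "stump type" at each end as $S$), so that $|A'|=n-(|S|-|S'|)$ and $A'$ still has $k$ leaves. If $|S|$ is so close to $n$ that $|A'|<n_k$, then $A$ is a long oriented path with a bounded gadget attached at one branch-node and can be handled directly from~\cite{HaTh00} together with Theorems~\ref{thm:dep2} and~\ref{thm:2bouts}; so assume $|A'|\ge n_k$, whence by induction $A'$ is $(|A'|+k-2)$-unavoidable. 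The task is then to embed $A'$ suitably and to rebuild $S$ to full length using exactly the remaining $|S|-|S'|$ vertices of $T$, for a total of $n+k-2$.

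Everything now depends on the type of $S$, and on keeping the total overhead equal to $k-2$ and not more. The favourable case is when $S$ contains a long directed sub-path ending at a leaf of $A$. Then rebuilding $S$ is essentially free: a directed out-path of length $\ell$ occupies $\ell+1$ consecutive vertices of $\sigma$, since $v_i\rightarrow v_{i+1}$ by (M2). Rooting $A$ and running the greedy embedding of Theorem~\ref{t_arbo} (or, for non-arborescences, of Lemma~\ref{lem:t_tree}) so that this pendant directed path is the last part embedded and lands on a terminal interval $\{v_j,\dots,v_m\}$ saves exactly one vertex: as in the proof of Lemma~\ref{lem:up-embed} and in Observation~\ref{obs:bij}, a leaf whose image is $v_m$ has no failed vertex beyond it, so the overhead drops from $k-1$ to $k-2$. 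In the unfavourable case $A$ has long segments but no long directed one ending at a leaf; in the extreme $A$ is antidirected. Then, as in the proof of Lemma~\ref{lem:reduc}, one keeps short stump-and-fork gadgets at the ends of $S$ inside $A'$ and rebuilds the middle of $S$ on fresh vertices via Theorem~\ref{thm:dep2} (if $S$ is outer) or Theorem~\ref{thm:2bouts} (if $S$ is inner), each rebuilt segment costing $1$ or $2$ extra vertices.

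The main obstacle is precisely the tightness of the budget $k-2$ in this unfavourable case: one broken non-directed inner segment already costs $2$ vertices, exhausting the budget when $k=3$, and an antidirected tree with several long legs seems to force several such breaks. So the accounting cannot be done segment by segment; what is needed is a global "one free vertex" statement — a common strengthening of Observation~\ref{obs:bij} and of the path theorem of Havet and Thomass\'e~\cite{HaTh00} — guaranteeing that when the host tournament has only $n+k-2$ vertices, the embeddings of all the pieces, assembled along a single median order, collectively waste at most $k-2$ vertices. Proving such a statement, in particular for antidirected trees where no directed segment is available to absorb the loss, is the crux, and is why this remains a conjecture; as noted after Conjecture~\ref{conj:n+k-1}, the antidirected case is a natural first target.
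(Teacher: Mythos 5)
The statement you are trying to prove is Conjecture~\ref{conj:n+k-2}: it is stated in the paper as an open problem, with no proof offered beyond the remark that the cases $k=2$ and $k=3$ follow from \cite{HaTh00} and \cite{Ceha04}. So there is no proof in the paper to compare against, and your text is, by your own admission in its last paragraph, not a proof either: the ``global one free vertex statement'' you invoke is precisely the missing theorem, not a verification step.

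To make the gap concrete: your induction step applies the induction hypothesis to the contracted tree $A'$ as a black box, which only yields that \emph{some} copy of $A'$ sits in \emph{some} $|A'|+k-2$ vertices of $T$; it gives no control over which $|S|-|S'|$ vertices remain free, nor over how they are joined to the images of the fork points, which is what Theorems~\ref{thm:dep2} and~\ref{thm:2bouts} require. Worse, those theorems need $|R_S|+1$ and $|R_S|+2$ vertices respectively to rebuild a path on $|R_S|$ vertices, so each rebuilt outer (resp.\ non-directed inner) segment costs $1$ (resp.\ $2$) vertices on top of its own order --- this is exactly why the paper's Theorem~\ref{thm:veryfew} only reaches an overhead quadratic in $k$, and a single broken inner segment already exhausts the budget $k-2$ when $k=3$. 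The ``favourable case'' is also not established: Observation~\ref{obs:bij} gives an injection from up to $k-1$ failed vertices into the out-leaves, and the greedy procedure of Theorem~\ref{t_arbo} does not guarantee that a pendant directed path is embedded on consecutive vertices terminating at $v_m$, so the claimed saving of one vertex does not follow from anything proved in the paper. What you have written is a reasonable description of why the conjecture is plausible and where the difficulty lies (consistent with the paper's own remark that antidirected trees are the natural first target), but it does not establish the statement.
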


This conjecture holds for $k=2$ by a result of Havet and Thomass\'e~\cite{HaTh00}, and for $k=3$ as shown by Ceroi and Havet~\cite{Ceha04}.

\subsection{Generalisation to $k$-chromatic digraphs}

 A {\bf proper $k$-colouring} of a digraph is a mapping $c$ from its vertex into $\{1, \dots , k\}$ such that
$c(u)\neq c(v)$ for every arc $uv$. A digraph is {\bf $k$-colourable} if it admits a proper $k$-colouring.
The {\bf chromatic number} of a digraph $D$, denoted $\chi(D)$, is the least integer $k$ such that
$D$ is $k$-colourable. A digraph is {\it $k$-chromatic} if its chromatic number equals $k$.

The complete graph on $n$-vertices is the simplest $n$-chromatic graph, and so tournaments on $n$ vertices are the simplest $k$-chromatic digraphs.
The notion of unavoidability generalizes to the one of universality.
A digraph $F$ is {\bf $k$-universal} if it is contained in every digraph with chromatic number $k$.

Burr~\cite{Bur80} generalizes Sumner's conjecture to universality.
\begin{conjecture}[Burr~\cite{Bur80}]\label{conj:burr} 
Every every oriented tree of order $n$ is $(2n-2)$-universal.
\end{conjecture}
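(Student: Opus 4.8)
The plan is to reduce the statement to Conjecture~\ref{conj:n+k-1}. Since every oriented tree on $n$ nodes has at most $n-1$ leaves, a proof that an oriented tree of order $n$ with $k$ leaves is $(n+k-1)$-unavoidable would immediately give the bound $2n-2$ of Conjecture~\ref{conj:Sumner}, hence the tournament case of Conjecture~\ref{conj:burr}; the out-star $S^+_n$ and in-star $S^-_n$ show that this would be tight. Two families are already settled by the results above and I would dispatch them first: arborescences are $(n+k-1)$-unavoidable by Theorem~\ref{thm:arbo}, and bi-arborescences with both an in-leaf and an out-leaf are $(n+k-2)$-unavoidable by Corollary~\ref{cor:bi-arbo}. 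This leaves a general tree $A$ that is neither an arborescence nor a bi-arborescence.

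For such an $A$, I would root it at a node $r$ of in-degree $0$ --- possible by the root-choice argument in the proof of Theorem~\ref{t_tree2} --- and try to sharpen Lemma~\ref{lem:t_tree}. That lemma embeds $A$ in every tournament on $n+k-1+\gamma_r^{\downarrow}$ vertices by routing through the equivalent arborescence $A'$, and the overhead $\gamma_r^{\downarrow}$ is entirely due to the fictitious out-leaves padded into $A'$ to replace the downward forest of $A$. The goal is to remove this overhead. The right tool should be the Phase~2/Phase~3 bookkeeping from the proof of Theorem~\ref{t_leaves}: by Observation~\ref{obs:bij}, running the greedy procedure of Theorem~\ref{t_arbo} on $A'$ along a local median order $\sigma$ produces a set $F$ of failed vertices that injects into $L^+(A')$, each failed vertex sitting after the image of its out-leaf. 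I would then recycle these failed vertices together with the images of the padding nodes to embed the downward forest and the in-leaf cluster by backward extension, exactly as in Phase~3, so that no genuine padding is ever paid for. Concretely, one wants a single $\sigma$-compatible injection from $F$ together with the node set of the downward forest into $L(A)$; the vehicle should be a refinement of the interval decomposition of Claims~\ref{claim:leaves} and~\ref{claim:inclusion} in which the local deficit $|I_i\cap F|-|I_i\cap\phi(L)|$ stays nonpositive even when the interval $I_i$ must simultaneously absorb a fragment of the downward forest.

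The hard part will be precisely this tight accounting. In the proof of Theorem~\ref{t_leaves} the analogous inequality chain closes only with a strictly positive slack $\ell$, and forcing $\ell$ down to $0$ amounts to making the leaf-charging injective \emph{simultaneously} for the upward failures and for the downward demands inside every terminal and initial interval at once --- which is exactly the content of the open Conjecture~\ref{conj:n+k-1}. I do not expect the local-median-order method on its own to deliver this. A realistic fallback that already yields $2n-2$ for all sufficiently large $n$ is to follow K\"uhn, Mycroft and Osthus~\cite{KMO11a,KMO11b}: apply the directed regularity lemma to the tournament, decompose $A$ into a bounded-size skeleton plus pendant subtrees, embed the skeleton into the reduced digraph, and absorb the remainder; this, however, gives nothing for small $n$ and uses machinery outside the present paper. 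Finally, to upgrade from tournaments to arbitrary $(2n-2)$-chromatic digraphs --- the full Conjecture~\ref{conj:burr} --- one would further need to locate inside any such digraph a ``tournament-like'' dense core of order linear in $n$ on which to run the embedding above; extracting such a core with the correct constant from the bare hypothesis $\chi(D)=2n-2$ is, as far as we know, wide open, and is where a general attack would stall.
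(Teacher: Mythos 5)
The statement you set out to prove is Burr's conjecture, which the paper records as an \emph{open conjecture} and does not prove; there is no proof in the paper to compare against, and your text is, by your own admission, a research plan rather than a proof. The central gap is that you conflate unavoidability with universality. Your entire embedding strategy (Theorem~\ref{thm:arbo}, Lemma~\ref{lem:t_tree}, the interval bookkeeping of Claims~\ref{claim:leaves} and~\ref{claim:inclusion}, Observation~\ref{obs:bij}, and the K\"uhn--Mycroft--Osthus regularity argument) lives inside a tournament, and at best addresses Sumner's Conjecture~\ref{conj:Sumner}, i.e.\ the special case of Conjecture~\ref{conj:burr} in which the $(2n-2)$-chromatic digraph happens to be a tournament. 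Even a complete proof of Conjecture~\ref{conj:n+k-1} --- itself open, and the target to which the first two thirds of your proposal reduce the problem --- would therefore not touch the statement at hand. Worse, the ``fallback'' you sketch for the general case, namely locating a dense tournament-like core of order linear in $n$ inside an arbitrary $(2n-2)$-chromatic digraph, is not merely ``wide open'' but impossible: by Erd\H{o}s's girth--chromatic-number construction there are digraphs of arbitrarily large chromatic number whose underlying graph has arbitrarily large girth, hence contains no tournament even on three vertices, so no dense substructure can be extracted from the chromatic hypothesis alone. Known partial results on Burr's conjecture proceed by entirely different means (inductions on level decompositions of acyclic digraphs, final strong components, spanning out-forests), none of which appear in your plan.

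A smaller point: your claim that forcing the slack in the proof of Theorem~\ref{t_leaves} down to zero ``amounts to'' Conjecture~\ref{conj:n+k-1} is asserted rather than argued, and the quantity $\ell$ there is an offset in the median order, not a surplus in the counting. In short, your proposal is a reasonable survey of the obstacles, and you are candid that it stalls at two open problems, but no part of the statement is actually established.
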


We also conjecture that Conjecture~\ref{conj:n+k-1} extends to universality.
\begin{conjecture}\label{conj:n+k-1-univ}
 Every oriented tree of order $n$ with $k$ leaves is $(n+k-1)$-universal.
 \end{conjecture}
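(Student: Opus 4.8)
The plan is to mirror the two-step structure used for the unavoidability results — first arborescences, then general trees via the equivalent-arborescence reduction — while replacing the host tournament by a suitable dense substructure of a digraph of large chromatic number. Since a proper colouring of a digraph is just a proper colouring of its underlying graph, a digraph $D$ with $\chi(D)\geq n+k-1$ contains an $(n+k-1)$-critical subdigraph $H$, and then every vertex $v$ of $H$ satisfies $d^+_H(v)+d^-_H(v)\geq n+k-2$. An $n$-node tree has $n\leq n+k-2$ nodes, so while building a greedy embedding of such a tree into $H$, every vertex that is the image of a node still has at least $k-1$ unused neighbours in $H$; this plays the role that the $k-1$ ``spare'' vertices of a tournament on $n+k-1$ vertices play in Section~\ref{sec:arbo}. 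The idea is to take $H$, equipped with a median order, in the role of $T$ and its local median order, run the greedy procedure of Theorem~\ref{t_arbo}, and re-prove the bound of at most $k-1$ failed vertices by a global double-counting over the arcs of $H$, using $e(H)\geq\tfrac{n+k-2}{2}|V(H)|$ in place of property (M2).

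Concretely, I would first attack the arborescence case: an out-arborescence $A$ with $n$ nodes, $k$ out-leaves and root $r$ should embed into every digraph $D$ with $\chi(D)\geq n+k-1$. Having passed to the critical subdigraph $H$, one embeds $r$ at a vertex chosen to have large out-degree — a degree-sum argument gives a vertex of out-degree at least $(n+k-2)/2$, and in fact many such vertices — and then follows the greedy procedure, always taking the first unused out-neighbour for each son. The analogues of Claims~\ref{claim:leaves} and \ref{claim:inclusion} should survive if one replaces ``$v_{\ell_i}$ dominates at least half of $I_i$'' by an averaging statement over $H$; then Observation~\ref{obs:bij} carries over and, fed through the equivalent-arborescence construction of Lemma~\ref{lem:t_tree} together with the root choice in the proof of Theorem~\ref{t_tree2} (pick $r$ minimising $\min(\gamma_r^{\uparrow},\gamma_r^{\downarrow})$, then convert downward forests to upward ones), would give the general-tree case. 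A reasonable intermediate milestone is Burr's Conjecture~\ref{conj:burr}, the universality analogue of Sumner's conjecture, which Conjecture~\ref{conj:n+k-1-univ} implies: a crude degeneracy-plus-greedy argument should at least recover a linear bound $c\,(n+k)$, and improving the constant is then the game.

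The main obstacle is that property (M2) has no analogue in a digraph of large chromatic number: a vertex of $H$ may have out-degree $0$, so a branch-node of $A$ with many out-sons simply cannot be placed there, whereas in a tournament every vertex dominates half of any suffix. In the tournament arguments (M2) is precisely what guarantees that the greedy step never gets stuck and that the failed vertices inject into the leaves. Replacing it forces one to couple the choice of root, and of the tree-traversal order, to the in/out-degree sequence of $H$: high-out-degree branch-nodes of $A$ must be routed to high-out-degree vertices of $H$, and the critical-subgraph degree sum only controls these quantities on average, not vertex by vertex. Making this coupling work — even for out-arborescences — seems to require a genuinely new idea beyond local median orders, which is why Conjecture~\ref{conj:n+k-1-univ}, and indeed Conjecture~\ref{conj:burr}, remains open; I expect the tight statement to be at least as hard as Conjecture~\ref{conj:n+k-1} itself.
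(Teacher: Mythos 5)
This statement is one of the paper's concluding conjectures (Conjecture~\ref{conj:n+k-1-univ}); the paper offers no proof of it, and even its weaker relative, Burr's Conjecture~\ref{conj:burr}, is stated as open. Your proposal is a research plan rather than a proof, and you concede as much in your final paragraph: the entire argument of Section~\ref{sec:arbo} hinges on property (M2) of a (local) median order, which is a pointwise guarantee --- \emph{every} vertex $v_i$ dominates at least half of every interval $\{v_{i+1},\dots,v_j\}$ --- available only because the host is a tournament. In an $(n+k-1)$-critical digraph $H$ you only get $d^+_H(v)+d^-_H(v)\ge n+k-2$ at each vertex; a vertex may have out-degree $0$, and an averaging or double-counting substitute over $e(H)$ does not rescue the greedy step, because Claims~\ref{claim:leaves} and~\ref{claim:inclusion} and Observation~\ref{obs:bij} all need the domination inequality \emph{locally at the specific vertex} $v_{\ell_i}$ and \emph{on the specific interval} $I_i$, not on average over $H$. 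There is also no analogue of a median order for general digraphs that yields (M1) and (M2) simultaneously, so the induction on intervals collapses at the first step.

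A secondary but genuine error: your claim that ``a crude degeneracy-plus-greedy argument should at least recover a linear bound $c(n+k)$'' for oriented trees in digraphs of large chromatic number is not something you can take for granted. For \emph{undirected} trees, high chromatic number gives a subgraph of large minimum degree and greedy embedding works; for \emph{oriented} trees the orientation of the arcs at each vertex is uncontrolled, which is exactly why Burr's conjecture is hard and why no linear bound in $n$ is known for general oriented trees in $k$-chromatic digraphs. So even your ``intermediate milestone'' is an open problem, not a warm-up. In short: you have correctly diagnosed where the tournament machinery breaks, but diagnosing the obstacle is not the same as overcoming it, and no part of your plan supplies the missing replacement for (M2).
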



\begin{thebibliography}{10}

\bibitem{AlRo81}
B.~Alspach and M.~Rosenfeld.
\newblock {Realization of certain generalized paths in tournaments}.
\newblock {\em Discrete Math.}, 34:199--202, 1981.

\bibitem{BoMu08}
J.~Bondy and U.~Murty.
\newblock {\em Graph Theory}.
\newblock Springer Publishing Company, Incorporated, 1st edition, 2008.

\bibitem{Bur80}
S.~A. Burr.
\newblock Subtrees of directed graphs and hypergraphs.
\newblock In {\em Proceedings of the Eleventh Southeastern Conference on
  Combinatorics, Graph Theory and Computing (Florida Atlantic Univ., Boca
  Raton, Fla., 1980), Vol. I}.

\bibitem{Ceha04}
S.~Ceroi and F.~Havet.
\newblock Trees with three leaves are (n+1)-unavoidable.
\newblock {\em Discrete Applied Mathematics}, 141(1-3):19--39, 2004.

\bibitem{ElS04c}
A.~El~Sahili.
\newblock Trees in tournaments.
\newblock {\em J. Combin. Theory Ser. B}, 92(1):183--187, 2004.

\bibitem{For73}
R.~Forcade.
\newblock {Parity of paths and circuits in tournaments}.
\newblock {\em Discrete Math.}, 6:115--118, 1973.

\bibitem{Gru71}
B.~Gr{\"u}nbaum.
\newblock {Antidirected {H}amiltonian paths in tournaments}.
\newblock {\em J. Combin. Theory Ser. B}, 11:249--257, 1971.

\bibitem{HaTh91}
R.~H{\"a}ggkvist and A.~Thomason.
\newblock {Trees in tournaments}.
\newblock {\em Combinatorica}, 11(2):123--130, 1991.

\bibitem{Hav02}
F.~Havet.
\newblock Trees in tournaments.
\newblock {\em Discrete Mathematics}, 243(1):121 -- 134, 2002.

\bibitem{Hav03}
F.~Havet.
\newblock On unavoidability of trees with $k$ leaves.
\newblock {\em Graphs and Combinatorics}, 19(1):101--110, 2003.

\bibitem{HaTh00b}
F.~Havet and S.~Thomass{\'e}.
\newblock {Median orders of tournaments: a tool for the second neighborhood
  problem and {S}umner's conjecture}.
\newblock {\em J. Graph Theory}, 35(4):244--256, 2000.

\bibitem{HaTh00}
F.~Havet and S.~Thomass\'e.
\newblock {Oriented hamiltonian paths in tournaments: a proof of Rosenfeld's
  conjecture}.
\newblock {\em J. Combin. Theory Ser. B}, 78:243--273, 2000.

\bibitem{KMO11a}
D.~K\"uhn, R.~Mycroft, and D.~Osthus.
\newblock {A proof of Sumner's universal tournament conjecture for large
  tournaments}.
\newblock {\em Proceedings London Math. Soc.}, 102:731--766, 2011.

\bibitem{KMO11b}
D.~K\"uhn, R.~Mycroft, and D.~Osthus.
\newblock {An approximate version of Sumner's universal tournament conjecture}.
\newblock {\em J. Combin. Theory, Ser. B}, 101:415--447, 2011.

\bibitem{Rede34}
L.~R\'{e}dei.
\newblock Ein kombinatorischer {S}atz.
\newblock {\em Acta. Litt. Sci. Szeged}, 7:39--43, 1934.

\bibitem{ReWo83}
K.~Reid and N.~C. Wormald.
\newblock Embedding oriented $n$-trees in tournaments.
\newblock {\em Studia Scientiarum Mathematicarum Hungarica}, 18(2-4):377--387,
  1983.

\bibitem{Ros71}
M.~Rosenfeld.
\newblock {Antidirected Hamiltonian paths in tournaments.}
\newblock {\em J. Combin. Theory Ser. B}, 12:93--99, 1971.

\bibitem{STR80}
H.~Straight.
\newblock {The existence of certain type of semi-walks in tournaments}.
\newblock {\em Congr. Numer.}, 29:901--908, 1980.

\bibitem{Tho86}
A.~Thomason.
\newblock {Paths and cycles in tournaments}.
\newblock {\em Trans. Amer. Math. Soc.}, 296(1):167--180, 1986.

\end{thebibliography}

\end{document}